\newtheorem{Thm}{Theorem}[section]
\newtheorem{Prop}[Thm]{Proposition}
\newtheorem{Lem}[Thm]{Lemma}
\newtheorem{Coro}[Thm]{Corollary}
\newtheorem{Def}{Definition}
\newcommand{\mmod}[1]{\mathsf{#1}}   % font for modules
\newcommand{\Stan}[1]{\mmod{S}_{#1}} % standard
\newcommand{\Irre}[1]{\mmod{I}_{#1}} % irreducible
\newcommand{\Proj}[1]{\mmod{P}_{#1}} % projective
\newcommand{\bord}[3]{
\foreach \s in {1,...,#3}{
\fill[black] (#1,\s -1 + #2) circle (1/5);}}
\newcommand{\tuile}[3]{
\draw (#1,#2) -- (1 + #1,1 + #2) -- (2 + #1,#2) -- (1 + #1, -1 + #2) -- (#1, #2);
\draw (#1 + 0.2357, #2 - 0.2357) arc (-45:45:1/3);
\node (center) at (1 + #1, #2) {$\scriptstyle{#3}$};
}
\newcommand{\dtl}[1]{\mathsf{dTL}_{#1}}
\newcommand{\tl}[1]{\mathsf{TL}_{#1}}
\newcommand{\ctl}{\mathsf{\widetilde{TL}}}
\newcommand{\cdtl}{\mathsf{\widetilde{dTL}}}
\newcommand{\cat}{\mathcal{C}}
\newcommand{\modtl}{{\text{\rm Mod}_\ctl}}
\newcommand{\moddtl}{{\textrm{Mod}_\cdtl}}
\newcommand{\fun}[1]{{\mathsf F}_{#1}}
\newcommand{\funtimes}{{\xf}}
\newcommand{\nat}{{\text{\rm Nat}}}
\newcommand{\tens}[2]{#1 \otimes #2}
\newcommand{\ass}[3]{\alpha_{#1,#2,#3}}
\newcommand{\assmun}[3]{\alpha^{-1}_{#1,#2,#3}}
\newcommand{\com}[2]{\eta_{#1,#2}}
\newcommand{\comf}[2]{\delta_{#1,#2}}
\newcommand{\comNu}{{\eta}}
\newcommand{\T}[1]{t_{#1}}
\newcommand{\X}[1]{{X_{#1}}}
\newcommand{\XX}[1]{{x_{#1}}}
\newcommand{\Y}[1]{{y_{#1}}}
\newcommand{\tw}[1]{{\theta_{#1}}}
\newcommand{\z}[1]{{c_{#1}}}
\newcommand{\y}[1]{{d_{#1}}}
\newcommand{\modY}{\textup{\,mod\,}}  
\newcommand{\xf}{\times_{\hskip-1.5pt f}}
\newcommand{\mybigoplus}[2]{\underset{#1}{\overset{#2}{\bigoplus{}{\hskip-1pt}'}}}
\newcommand{\wj}{w\hskip-2pt{}j}
\newcommand{\Cap}[1]{\overline{\mathsf #1}}
\newcommand{\blank}{{-}} %J C'est la commande pour les argument vides des foncteurs utilisé dans le papier sur les indecomp.
\DeclareMathOperator{\id}{id}
\DeclareMathOperator{\im}{im}
\DeclareMathOperator{\Hom}{Hom}
\DeclareMathOperator{\Ob}{Ob}
\DeclareMathOperator{\End}{End}
\DeclareMathOperator{\Funct}{Funct}
\definecolor{rougePompier}{rgb}{0.93,0.11,0.14}
\definecolor{vertForet}{rgb}{0.04,0.55,0.07}
\newcommand{\invisible}[1]{} %%%
\title[Braiding $\ctl$]{Fusion and monodromy\\ in the Temperley-Lieb category}
\author[J Bellet\^ete]{Jonathan Bellet\^ete}
\address[Jonathan Bellet\^ete]{
Institut de Physique Th\'eorique, Universit\'e Paris Saclay, CEA, CNRS\\
91191 Gif Sur Yvette, France }
\email{\tt jonathan.belletete@ipht.fr}
\author[Y Saint-Aubin]{Yvan Saint-Aubin}
\address[Yvan Saint-Aubin]{
D\'{e}partement de math\'{e}matiques et de statistique\\
Universit\'{e} de Montr\'{e}al\\
Montr\'eal, QC, Canada, H3C 3J7.}
\email{yvan.saint-aubin@umontreal.ca}
\date{\today}
\begin{document}

\begin{abstract}
Graham and Lehrer (1998) introduced a Temperley-Lieb category $\ctl$ whose objects are the non-negative integers and the morphisms in $\Hom(n,m)$ are the link diagrams from $n$ to $m$ nodes. The Temperley-Lieb algebra $\tl n$ is identified with $\Hom(n,n)$. The category $\ctl$ is shown to be monoidal. We show that it is also a braided category by constructing explicitly a commutor. A twist is also defined on $\ctl$. We introduce a module category $\modtl$ whose objects are functors from $\ctl$ to $\mathsf{Vect}_{\mathbb C}$ and define on it a fusion bifunctor extending the one introduced by Read and Saleur (2007). We use the natural morphisms constructed for $\ctl$ to induce the structure of a ribbon category on $\modtl(\beta=-q-q^{-1})$, when $q$ is not a root of unity. We discuss how the braiding on $\ctl$ and integrability of statistical models are related. The extension of these structures to the family of dilute Temperley-Lieb algebras is also discussed.

\medskip

\noindent\textbf{Keywords}\:\: 
Temperley-Lieb algebra, 
dilute Temperley-Lieb algebra,
category of Temperley-Lieb algebra, 
monoidal category,
braided category, 
fusion ring,
fusion product,
monodromy.

\end{abstract}

\maketitle

\tableofcontents

\onehalfspacing

%%%%%%%%%%%%%%%%%%%%%%%%%%%%%%%%%%%%%%%%%%%%%%%%%%%%%%%%%%%%%%%%%%%%%%%%%%%%%%%%%%%%%%%%%%%%%%%%%%%%%%%%%%%%%%%%%%%%
\begin{section}{Introduction}\label{sec:intro}

The (original) family of the Temperley-Lieb algebras was cast into a categorical framework by Graham and Lehrer \cite{GL-Aff} in 1998. A decade later Read and Saleur \cite{ReadSaleur} introduced a product $-_1\times_f -_2$ between two modules over two (maybe distinct) Temperley-Lieb algebras. Still later this product was computed between several families of modules by Gainutdinov and Vasseur \cite{GainutdinovVasseur}, and Bellet\^ete \cite{Belletete}. Their recent results (obtained in 2012 and 2015 respectively) lead to natural questions: how can one define the module category over Graham and Lehrer's category? Does the natural braiding that exists on Graham and Lehrer's category (described for example by Turaev \cite{Turaev}) extend to this module category? And how many of the defining properties of tensor categories does the module category satisfy? The present paper answers these questions.

Statistical models in two dimensions are defined by an evolution operator or a transfer matrix acting on finite-dimensional vector spaces. The sizes of both the lattice and the vector spaces are parameters of the formulation. The limit when these parameters go to infinity is known in some cases (numerically or rigorously) to be a conformal field theory. For several XXZ and loop models \cite{PasquierSaleur, PRZ}, the Hamiltonian is first defined as an element of a Temperley-Lieb algebra $\tl n$, or one of its generalizations, and the actual linear operator is obtained as the representative of this element in some representations over the algebra. The fusion product is an algebraic construction, actually a bifunctor, that associates to two modules $\mathsf M$ and $\mathsf N$ over $\tl m$ and $\tl n$ respectively a module over $\tl{m+n}$. As said above, for the Temperley-Lieb algebra, such a fusion product $-_1\times_f -_2$ was introduced by Read and Saleur \cite{ReadSaleur} and computed in many cases by Gainutdinov and Vasseur \cite{GainutdinovVasseur} and Bellet\^ete \cite{Belletete}. It is associative and commutative, and the braiding gives the isomorphism between $\mathsf M\times_f \mathsf N$ and $\mathsf N\times_f \mathsf M$.

There are reasons to believe that algebraic information obtained from the finite algebras, either the Temperley-Lieb family, its dilute counterpart or any other one, is intimately related to analogous structures of the CFTs % that appear in the continuum limit 
and should help understand them. First there is compelling evidence that, in the limit when the size of the lattice goes to infinity, the spectrum of the Hamiltonian, properly scaled, reproduces characters of the Virasoro algebra. Second, in some representations of the $\tl{}$ family, the Hamiltonian has Jordan blocks (of size $2\times 2$) \cite{PRZ,MDSA}, indicating a possible link to logarithmic CFTs. Third, when restricted to $\tl{}$-modules that are known to give rise to the Virasoro modules appearing in minimal CFTs, the highly non-trivial fusion product defined between $\tl{}$ modules does reproduce the simple fusion rules of these minimal CFTs. Since the operator product expansion of CFTs leads to a fusion product between modules over the Virasoro algebra whose many properties are captured into a tensor category, it is natural to ask how many of these properties are shared by Read's and Saleur's fusion.

The (original) family of the Temperley-Lieb algebras was cast into a categorical framework by Graham and Lehrer \cite{GL-Aff} while they were actually studying another family, the periodic (or affine) Temperley-Lieb algebras. The construction brings together all algebras $\tl n(\beta), n\geq 0,$ in the same category $\ctl(\beta)$. Their formulation will be the starting point of \cref{sec:TL} where the requirements for a category to be monoidal and then braided will be fulfilled for the $\tl{}$ family. Even though the braiding on $\ctl(\beta)$ is already known, the twist presented in this section is new to our knowledge.

Section \ref{sec:modTL} defines a module category $\modtl$ whose objects are functors from $\ctl$ to $\mathsf{Vect}_{\mathbb C}$. The associator, commutor and twist defined on $\ctl$ are shown to induce similar natural transformations on $\modtl$. \cref{sec:integrability} then shows how the integrability of two-dimensional statistical models and the components of the commutor $\com rs$ defining the braiding are related. \cref{sec:dilute} extends the results to the family of dilute Temperley-Lieb algebras $\dtl n$. A short conclusion follows.

\end{section}
%%%%%%%%%%%%%%%%%%%%%%%%%%%%%%%%%%%%%%%%%%%%%%%%%%%%%%%%%%%%%%%%%%%%%%%%%%%%%%%%%%%%%%%%%%%%%%%%%%%%%%%%%%%%%%%%%%%%
\begin{section}{The Temperley-Lieb category}\label{sec:TL}

Graham and Lehrer \cite{GL-Aff} showed that the algebras $\tl n(\beta)$, $n\geq 0$, can be studied as a whole and given the structure of a category $\ctl$. The goal of this section is to recall the definitions of monoidal and braided categories and show that the Temperley-Lieb category $\ctl$ is braided. 
Some of the data necessary to define a braided category are known for the $\tl{}$ family (see for example \cite{Turaev}). However giving the details here fulfills several goals. It provides a pedagogical introduction to these structures with detailed proofs. It also establishes many of their properties that will play a crucial role in Section \ref{sec:modTL}.

%%%%%%%%%%%%%%%%%%%%%%%%%%%%%%%%%%%%%%%
% ctl is monoidal
%%%%%%%%%%%%%%%%%%%%%%%%%%%%%%%%%%%%%%%
\begin{subsection}{$\ctl(\beta)$ as a monoidal category}

The first step is to cast the family of algebras $\tl n, n\geq 0,$ into a category and show that the additional requirements of a monoidal category are easily fulfilled.

\smallskip

\noindent \emph{We take the convention that morphisms and functors acts on the {\em left}, so that $(FG)(X)\equiv F(G(X))$.}% We will also assume that the categories appearing in this section are \emph{small}.}

\smallskip

The \emph{Temperley-Lieb category} $\ctl$ is defined as follows. The objects of the category are the non-negative integers:
	\begin{equation*}
		\Ob \ctl = \mathbb{N}_{0} = \lbrace 0,1,2,\hdots \rbrace.
	\end{equation*} 
The sets of morphisms $\Hom(n,m)$ from $n$ to $m$ is empty if $n$ and $m$ do not have the same parity and, if they do, are defined as the sets of formal $\mathbb{C}$-linear combinations of $(m,n)$-diagrams. A $(m,n)$-diagram $\alpha\in\Hom(n,m)$ is composed of two vertical columns of $m$ nodes on the left, and $n$ nodes on the right, linked pairwise by non-intersecting strings. For instance, here are a pair of $(4,2)$-diagrams and one $(2,4)$-diagram:
\begin{equation*}
	\begin{tikzpicture}[scale = 1/3, baseline={(current bounding box.center)}]
		\draw[thick] (0,0) .. controls (1,0) and (2,1) .. (3,1);
		\draw[thick] (0,3) .. controls (1,3) and (2,2) .. (3,2);
		\draw[thick] (0,1) .. controls (1,1) and (1,2) .. (0,2);
		\bord{0}{0}{4}
		\bord{3}{1}{2}
	\end{tikzpicture}\  , \qquad
	\begin{tikzpicture}[scale = 1/3, baseline={(current bounding box.center)}]
		\draw[thick] (0,0) .. controls (1,0) and (2,1) .. (3,1);
		\draw[thick] (0,3) .. controls (1,3) and (1,2) .. (0,2);
		\draw[thick] (0,1) .. controls (1,1) and (2,2) .. (3,2);
		\bord{0}{0}{4}
		\bord{3}{1}{2}
	\end{tikzpicture} \  , \qquad
	\begin{tikzpicture}[scale = 1/3, baseline={(current bounding box.center)}]
		\draw[thick] (0,1) .. controls (1,1) and (2,0) .. (3,0);
		\draw[thick] (0,2) .. controls (1,2) and (2,3) .. (3,3);
		\draw[thick] (3,1) .. controls (2,1) and (2,2) .. (3,2);
		\bord{0}{1}{2}
		\bord{3}{0}{4}
	\end{tikzpicture}\  .
\end{equation*}
Note that the third diagram can be obtained from the first by reflection through a vertical line midway between the two columns of nodes. We will call the result of this reflection the \emph{transpose} of the diagram. The identity morphism $1_n\in\Hom(n,n)$ is the $(n,n)$-diagram where every point on the left is linked to the one at the same height on the right. (The identity morphism $1_0\in\Hom(0,0)$ exists (by definition), but it represented graphically by an empty space.) Compositions of morphisms are defined by linearly expanding the composition rules for diagrams. For an $(m,n)$-diagram $b$ and a $(k,m)$-diagram $c$, the composition $c \circ b $ is a $(k,n)$-diagram defined by first putting $c$ on the left of $b$, identifying the $m$ points on the neighboring sites, joining the strings that meets there, and then removing these $m$ nodes. If there is a string no longer attached to any points, it is removed and replaced by a factor $\beta \in \mathbb{C}$. Here is an example of the composition of $(4,2)$- and a $(2,4)$-diagrams:
\begin{equation*}
	\left(\begin{tikzpicture}[scale = 1/3, baseline={(current bounding box.center)}]
		\draw[thick] (0,0) .. controls (1,0) and (2,1) .. (3,1);
		\draw[thick] (0,3) .. controls (1,3) and (2,2) .. (3,2);
		\draw[thick] (0,1) .. controls (1,1) and (1,2) .. (0,2);
		\bord{0}{0}{4}
		\bord{3}{1}{2}
	\end{tikzpicture}\right) \circ 
	\left(\begin{tikzpicture}[scale = 1/3, baseline={(current bounding box.center)}]
		\draw[thick] (0,1) .. controls (1,1) and (2,0) .. (3,0);
		\draw[thick] (0,2) .. controls (1,2) and (2,3) .. (3,3);
		\draw[thick] (3,1) .. controls (2,1) and (2,2) .. (3,2);
		\bord{0}{1}{2}
		\bord{3}{0}{4}
	\end{tikzpicture}\right) \quad \equiv \quad 
	\begin{tikzpicture}[scale = 1/3, baseline={(current bounding box.center)}]
		\draw[thick] (0,0) .. controls (1,0) and (2,1) .. (3,1);
		\draw[thick] (0,3) .. controls (1,3) and (2,2) .. (3,2);
		\draw[thick] (0,1) .. controls (1,1) and (1,2) .. (0,2);
		\bord{0}{0}{4}
		\bord{3}{1}{2}
		\draw[thick] (3,1) .. controls (4,1) and (5,0) .. (6,0);
		\draw[thick] (3,2) .. controls (4,2) and (5,3) .. (6,3);
		\draw[thick] (6,1) .. controls (5,1) and (5,2) .. (6,2);
		\bord{3}{1}{2}
		\bord{6}{0}{4}
	\end{tikzpicture}  \quad \equiv \quad
	\begin{tikzpicture}[scale = 1/3, baseline={(current bounding box.center)}]
		\draw[thick] (0,1) .. controls (1,1) and (1,2) .. (0,2);
		\draw[thick] (3,1) .. controls (2,1) and (2,2) .. (3,2);
		\draw[thick] (0,0) -- (3,0);
		\draw[thick] (0,3) -- (3,3);
		\bord{0}{0}{4}
		\bord{3}{0}{4}
	\end{tikzpicture}\ \ 
\end{equation*}
and of the same diagrams in the other order:
\begin{equation*}
	\left(\begin{tikzpicture}[scale = 1/3, baseline={(current bounding box.center)}]
		\draw[thick] (0,1) .. controls (1,1) and (2,0) .. (3,0);
		\draw[thick] (0,2) .. controls (1,2) and (2,3) .. (3,3);
		\draw[thick] (3,1) .. controls (2,1) and (2,2) .. (3,2);
		\bord{0}{1}{2}
		\bord{3}{0}{4}
	\end{tikzpicture}\right) \circ 
	\left(\begin{tikzpicture}[scale = 1/3, baseline={(current bounding box.center)}]
		\draw[thick] (0,0) .. controls (1,0) and (2,1) .. (3,1);
		\draw[thick] (0,3) .. controls (1,3) and (2,2) .. (3,2);
		\draw[thick] (0,1) .. controls (1,1) and (1,2) .. (0,2);
		\bord{0}{0}{4}
		\bord{3}{1}{2}
	\end{tikzpicture}\right) \quad \equiv \quad 
	\begin{tikzpicture}[scale = 1/3, baseline={(current bounding box.center)}]
		\draw[thick] (0,1) .. controls (1,1) and (2,0) .. (3,0);
		\draw[thick] (0,2) .. controls (1,2) and (2,3) .. (3,3);
		\draw[thick] (3,1) .. controls (2,1) and (2,2) .. (3,2);
		\bord{0}{1}{2}
		\bord{3}{0}{4}
		\draw[thick] (3,0) .. controls (4,0) and (5,1) .. (6,1);
		\draw[thick] (3,3) .. controls (4,3) and (5,2) .. (6,2);
		\draw[thick] (3,1) .. controls (4,1) and (4,2) .. (3,2);
		\bord{3}{0}{4}
		\bord{6}{1}{2}
	\end{tikzpicture}  \quad \equiv \quad \beta\ 
	\begin{tikzpicture}[scale = 1/3, baseline={(current bounding box.center)}]
		\draw[thick] (0,1) -- (3,1);
		\draw[thick] (0,2) -- (3,2);
		\bord{0}{1}{2}
		\bord{3}{1}{2}
	\end{tikzpicture} \quad \equiv \  \beta\cdot 1_2 \ .
\end{equation*}
The associativity of the composition of diagrams is easily verified. The depiction of $1_0$ by a simple space is consistent with the depiction of the composition by concatenation of diagrams. For example the following product of the $(2,0)$-diagram $d$ and $(0,2)$-diagram $e$
\begin{equation*}
	\left(\begin{tikzpicture}[scale = 1/3, baseline={(current bounding box.center)}]
		\draw[thick] (0,1) .. controls (1,1) and (1,2) .. (0,2);
		\bord{0}{1}{2}
	\end{tikzpicture}\right) \circ 
	\left(\begin{tikzpicture}[scale = 1/3, baseline={(current bounding box.center)}]
		\draw[thick] (3,1) .. controls (2,1) and (2,2) .. (3,2);
		\bord{3}{1}{2}
	\end{tikzpicture}\right) \quad \equiv \quad 
	\begin{tikzpicture}[scale = 1/3, baseline={(current bounding box.center)}]
		\draw[thick] (0,1) .. controls (1,1) and (1,2) .. (0,2);
		\bord{0}{1}{2}
		\draw[thick] (2.5,1) .. controls (1.5,1) and (1.5,2) .. (2.5,2);
		\bord{2.5}{1}{2}
	\end{tikzpicture}   
\end{equation*}
could equally be understood as $d\circ 1_0\circ e$. Note finally that $\End (n) \equiv \tl{n}(\beta)$ is the usual Temperley-Lieb algebra $\tl n(\beta)$. The category $\ctl$ can be easily enriched to become a monoidal one.

A category $\cat$ is said to be \emph{monoidal} if it is equipped with the following structures \cite{Bakalov, Etingof, Savage}:
\begin{enumerate}
	\item A bifunctor $\tens{-}{-}:\cat \times \cat  \longrightarrow \cat $, called the \emph{tensor product};
	\item An object $\mathsf{I} \in \Ob(\cat)$ called the \emph{identity};
	\item Three natural isomorphisms\footnote{We recall that a \emph{natural isomorphism} $\mu: F \to G$ between two functors $F, G: \cat_{1} \to \cat_{2}$ associates to each $A \in \Ob(\cat_{1})$ an invertible morphism $\mu_{A}\in \Hom_{\cat_{2}}(F(A),G(A))$ such that $\mu_{B}\circ F(f) = G(f)\circ \mu_{A} $,  
	for all $f \in \Hom_{\cat_{1}}(A,B)$. The morphism $\mu_{A}$ is called the \emph{component} of $\mu$ at $A$.}:
		\begin{itemize}
			\item $\alpha: \tens{(\tens{-_1}{-_2})}{-_3}\longrightarrow \tens{-_1}{(\tens{-_2}{-_3})}$, called the \emph{associator}.
			\item $\lambda:\tens{\mathsf{I}}{-} \longrightarrow - $, the \emph{left unitor}.
			\item $\rho:\tens{-}{\mathsf{I}} \longrightarrow - $, the \emph{right unitor}.
		\end{itemize}
\end{enumerate}
Moreover these structures have to satisfy the \emph{triangle} and the \emph{pentagon axioms}. These axioms require that the diagrams in Figures \ref{fig:trianglediagram} and \ref{fig:pentagondiagram} commute for all $A,B,C,D \in \Ob \cat$. Finally, if the associator, the left and right unitors are all identity isomorphisms, the category is said to be \emph{strict}.
\begin{figure}[h!]
\begin{tikzpicture}[scale = 1/3]
	\node (AB) at (0,0) {$\tens{A}{B}$}; %[anchor = north]
	\node (apIBp) at (-6,5) {$\tens{(\tens{A}{\mathsf{I}})}{B}$}; %[anchor = east]
	\node (pAIpB)    at (6,5) {$\tens{A}{(\tens{\mathsf{I}}{B})}$};%[anchor = west]
	\draw[->] (apIBp) to node [above] {$\ass{A}{\mathsf{I}}{B}$} (pAIpB);
	\draw[->] (apIBp) to node [left] {$\tens{\rho_{A}}{1_B}$} (AB);
	\draw[->] (pAIpB) to node [right] {$\ \tens{1_A}{\lambda_{B}}$} (AB);
	%\draw[->] (-2,3) -- (2,3);
	%\draw[->] (-4,2) -- (-1,0);
	%\draw[->] (4,2) -- (1,0);
\end{tikzpicture} 
\caption{The triangle diagram}\label{fig:trianglediagram}
\end{figure}
\begin{figure}[h!]
\begin{tikzpicture}[scale = 1/3]
	\node (top)     at (0,0)   {$\tens{(\tens{(\tens{A}{B})}{C})}{D}$};
	\node (leftmid) at (-7,-4) {$\tens{(\tens{A}{(\tens{B}{C})})}{D}$};
	\node (rightmid)at (7,-4)  {$\tens{(\tens{A}{B})}{(\tens{C}{D})}$};
	\node (leftbot) at (-7,-8){$\tens{A}{(\tens{(\tens{B}{C})}{D})}$};
	\node (rightbot)at (7,-8) {$\tens{A}{(\tens{B}{(\tens{C}{D})})}$};
	\draw[->] (top)     to node [left] {$\tens{\ass{A}{B}{C}}{1_{D}}\ \ \ $} (leftmid);
	\draw[->] (leftmid) to node [left] {$\ass{A}{(\tens{B}{C})}{D}$} (leftbot);
	\draw[->] (top)     to node [right] {$\ass{\tens{A}{B}}{C}{D}$} (rightmid);
	\draw[->] (rightmid)to node [right] {$\ass{A}{B}{(\tens{C}{D})}$} (rightbot);
	\draw[->] (leftbot) to node [above] {$\tens{1_{A}}{\ass{B}{C}{D}}$} (rightbot);
	%\node[anchor = east] at (0,0) {$\tens{(\tens{(\tens{A}{B})}{C})}{D}$};
	%\node[anchor = north] at (10,-5) {$\tens{(\tens{A}{B})}{(\tens{C}{D})}$ };
	%\node[anchor = west] at (20,0) {$\tens{A}{(\tens{B}{(\tens{C}{D})})}$};
	%\node[anchor = south] at (17,5) {$\tens{A}{(\tens{(\tens{B}{C})}{D})}$};
	%\node[anchor = south] at (3,5) {$\tens{(\tens{A}{(\tens{B}{C})})}{D}$};
	%\draw (-4,3/2) -- (2,9/2);
	%\draw[->] (-4,3/2) -- (-1,3);
	%\draw (8,6) -- (12,6);
	%\draw[->] (8,6) -- (10,6);
	%\draw (19,9/2) -- (24,3/2);
	%\draw[->] (19,9/2) -- (43/2,3);
	%\draw (-4,-3/2) -- (7,-9/2);
	%\draw[->] (-4,-3/2) -- (3/2, -3);
	%\draw (13, -9/2) -- (24, -3/2);
	%\draw[->] (13,-9/2) -- (37/2,-3);
	%\node[anchor = east] at (-2,3) {$\tens{\ass{A}{B}{C}}{1_{D}}$};
	%\node[anchor = south] at (10,6) {$\ass{A}{(\tens{B}{C})}{D} $};
	%\node[anchor = west] at (45/2,3) {$\tens{1_{A}}{\ass{B}{C}{D}}$};
	%\node[anchor = east] at (1/2,-4) {$\ass{\tens{A}{B}}{C}{D}$};
	%\node[anchor = west] at (39/2,-4) {$\ass{A}{B}{(\tens{C}{D})}$};
\end{tikzpicture}
\caption{The pentagon diagram}\label{fig:pentagondiagram}
\end{figure}

\begin{Def}\label{def:ctlmono}
Let $\cat$ be the Temperley-Lieb category $\ctl$. Define the bifunctor $ \tens{-}{-}$ in the following way. For objects $n,m \in \Ob \ctl$, simply set \footnote{The category $\ctl$ is not additive. Indeed one of the requirements for additivity is the existence of direct sum objects for any finite set of objects. The direct sum of two objects $m,n\in\Ob \ctl$ would be given by an object $(m\oplus n)\in\Ob \ctl$ together with maps $m\overset{q_m}{\longrightarrow}(m\oplus n)$ and $n\overset{q_n}{\longrightarrow} (m\oplus n)$ satisfying some universal property. However if $m$ and $n$ are of different parity, one of the two sets $\Hom(m,(m\oplus n))$ and $\Hom(n,(m\oplus n))$ is empty and the basic requirements for the existence of the direct sum cannot be met.}
 $\tens{n}{m} \equiv n+ m$ where ``$+$" stands for the addition in $\mathbb{N}_{0}$ and, thus, the identity object is $\mathsf{I}=0\in \Ob \ctl$. 
For a $(k,n)$-diagram $b$ and a $(t,m)$-diagram $c$, the $(k+t,n+m)$-diagram $\tens{b}{c} $ is obtained by simply drawing $b$ on top of $c$. For example, taking $b,c$ as in the previous example gives
	\begin{equation*}
		\tens{ \left(\begin{tikzpicture}[scale = 1/3, baseline={(current bounding box.center)}]
		\draw[thick] (0,1) .. controls (1,1) and (2,0) .. (3,0);
		\draw[thick] (0,2) .. controls (1,2) and (2,3) .. (3,3);
		\draw[thick] (3,1) .. controls (2,1) and (2,2) .. (3,2);
		\bord{0}{1}{2}
		\bord{3}{0}{4}
	\end{tikzpicture}\right)}
	{\left(\begin{tikzpicture}[scale = 1/3, baseline={(current bounding box.center)}]
		\draw[thick] (0,0) .. controls (1,0) and (2,1) .. (3,1);
		\draw[thick] (0,3) .. controls (1,3) and (2,2) .. (3,2);
		\draw[thick] (0,1) .. controls (1,1) and (1,2) .. (0,2);
		\bord{0}{0}{4}
		\bord{3}{1}{2}
	\end{tikzpicture}\right)} \quad \equiv \quad
	 \begin{tikzpicture}[scale = 1/3, baseline={(current bounding box.center)}]
		\draw[thick] (0,0) .. controls (1,0) and (2,1) .. (3,1);
		\draw[thick] (0,3) .. controls (1,3) and (2,2) .. (3,2);
		\draw[thick] (0,1) .. controls (1,1) and (1,2) .. (0,2);
		\bord{0}{0}{4}
		\bord{3}{1}{2}
		\draw[thick] (0,5) .. controls (1,5) and (2,4) .. (3,4);
		\draw[thick] (0,6) .. controls (1,6) and (2,7) .. (3,7);
		\draw[thick] (3,5) .. controls (2,5) and (2,6) .. (3,6);
		\bord{0}{5}{2}
		\bord{3}{4}{4}
	\end{tikzpicture} \quad \equiv \quad 
	\begin{tikzpicture}[scale = 1/3, baseline={(current bounding box.center)}]
		\draw[thick] (0,0) .. controls (1,0) and (2,0) .. (3,0);
		\draw[thick] (0,3) .. controls (1,3) and (2,1) .. (3,1);
		\draw[thick] (0,1) .. controls (1,1) and (1,2) .. (0,2);
		\draw[thick] (0,4) .. controls (1,4) and (2,2) .. (3,2);
		\draw[thick] (0,5) .. controls (1,5) and (2,5) .. (3,5);
		\draw[thick] (3,3) .. controls (2,3) and (2,4) .. (3,4);
		\bord{0}{0}{6}
		\bord{3}{0}{6}
	\end{tikzpicture}\ \ .
	\end{equation*}
This is then expanded bilinearly to all morphisms. The associator $\ass{m}{n}{k}$ is the isomorphism $(m+n)+k\mapsto m+(n+k)$ and the unitors are $0+m\mapsto m$ and $m+0\mapsto m$ respectively.\end{Def}
\noindent Since $(\mathbb N_0,+)$ is a monoid, the axioms are trivially verified for the objects. It is easy to verify that the axioms also hold for the morphisms and, thus, $\ctl$ is a strict monoidal category.
\end{subsection}
%%%%%%%%%%%%%%%%%%%%%%%%%
%%%%%%% braiding %%%%%%%%
%%%%%%%%%%%%%%%%%%%%%%%%%
\begin{subsection}{$\ctl(\beta)$ as a braided category}\label{sub:braidingTL}

Let $\cat$ be a monoidal category and let the \emph{opposite tensor product} between two objects $A,B\in \Ob \cat$ be defined as $A \otimes^{\textrm{op}} B \equiv B \otimes A$. The category $\cat$ is {\em braided} if there is a natural isomorphism $\eta: \tens{\blank}{\blank} \to \blank \otimes^{\textrm{op}}\blank $ such that the two \emph{hexagon} diagrams in Figures \ref{fig:hexdiag1} and \ref{fig:hexdiag2} commute for all $A,B,C \in \Ob \cat$. When such a natural isomorphism exists, it is called a {\em commutor}.
If, for all $A,B \in \Ob \cat$, $\com{A}{B}\circ\com{B}{A} = 1_{\tens{B}{A}}$, the category is said to be \emph{symmetric}. 
\begin{figure}[h!]
\begin{tikzpicture}[scale = 1/3]
	\node (leftmid) at (0,0) {$\tens{(\tens{A}{B})}{C}$};
	\node (lefttop) at (4,4) {$\tens{A}{(\tens{B}{C})}$};
	\node (righttop) at (14,4) {$\tens{(\tens{B}{C})}{A}$};
	\node (rightmid) at (18,0) {$\tens{B}{(\tens{C}{A})}$};
	\node (leftbot) at (4,-4) {$\tens{(\tens{B}{A})}{C}$};
	\node (rightbot) at (14,-4) {$\tens{B}{(\tens{A}{C})}$};
	\draw[->] (leftmid) to node [left] {${\ass{A}{B}{C}}$} (lefttop);
	\draw[->] (lefttop) to node [above]{$\com{A}{\tens{B}{C}}$} (righttop);
	\draw[->] (righttop)to node [right]{$\ \ass{B}{C}{A}$} (rightmid);
	\draw[->] (leftmid) to node [left] {$\com{A}{B}\otimes 1_C$} (leftbot);
	\draw[->] (leftbot) to node [below]{$\ass{B}{A}{C}$} (rightbot);
	\draw[->] (rightbot)to node [right]{$\ 1_B\otimes \com{A}{C}$} (rightmid);
\end{tikzpicture}
\caption{The first hexagon diagram}\label{fig:hexdiag1}
\end{figure}
\begin{figure}[h!]
\begin{tikzpicture}[scale=1/3]
	\node (leftmid) at (0,0) {$\tens{A}{(\tens{B}{C})}$};
	\node (lefttop) at (4,4) {$\tens{(\tens{A}{B})}{C}$};
	\node (righttop) at (14,4) {$\tens{C}{(\tens{A}{B})}$};
	\node (rightmid) at (18,0) {$\tens{(\tens{C}{A})}{B}$};
	\node (leftbot) at (4,-4) {$\tens{A}{(\tens{C}{B})}$};
	\node (rightbot) at (14,-4) {$\tens{(\tens{A}{C})}{B}$};
	\draw[->] (leftmid) to node [left] {${\assmun{A}{B}{C}}\ \ $} (lefttop);
	\draw[->] (lefttop) to node [above]{$\com{\tens{A}{B}}{C}$} (righttop);
	\draw[->] (righttop)to node [right]{$\ \assmun{C}{A}{B}$} (rightmid);
	\draw[->] (leftmid) to node [left] {$1_A\otimes\com{B}{C}$} (leftbot);
	\draw[->] (leftbot) to node [below]{$\assmun{A}{C}{B}$} (rightbot);
	\draw[->] (rightbot)to node [right]{$\ \com{A}{C}\otimes 1_B$} (rightmid);
\end{tikzpicture}
\caption{The second hexagon diagram}\label{fig:hexdiag2}
\end{figure}

In a strict braided category, the hexagon diagrams are equivalent to the two following identities:
\begin{equation}\label{eq:hex1}
	\com{A}{\tens{B}{C}} = \left(\tens{1_{B}}{\com{A}{C}} \right)\circ \left(\tens{\com{A}{B}}{1_{C}} \right),
\end{equation}
\begin{equation}
	\com{\tens{B}{C}}{A} = \left(\tens{\com{B}{A}}{1_{C}} \right)\circ \left(\tens{1_{B}}{\com{C}{A}} \right).
\end{equation}

To endow $\ctl$ with a braiding requires more work than to define its monoidal structure. We start by outlining the strategy. Since $\ctl$ is strict, the hexagon diagrams are equivalent to 
	\begin{equation}\label{eq:ctlhexa1}
		\com{n}{m+k} = \left(\tens{1_{m}}{\com{n}{k}} \right)\circ \left(\tens{\com{n}{m}}{1_{k}} \right),
	\end{equation}
	\begin{equation}\label{eq:ctlhexa2}
		\com{u+v}{w} = \left(\tens{\com{u}{w}}{1_{v}} \right)\circ \left(\tens{1_{u}}{\com{v}{w}} \right).
	\end{equation}
It follows that, if we can find $\com{1}{1}$, the other $\com{n}{m}$, $n,m \geq 1$, will be uniquely defined by these two conditions, provided that they are consistent, that is, if  $\com{n}{m}$ satisfy the above two conditions, then so does $\com{n+1}{m}\equiv  \left(\tens{\com{n}{m}}{1_{1}} \right)\circ \left(\tens{1_{n}}{\com{1}{m}} \right)$, for instance. Proposition \ref{prop:coherenthexa} will establish this consistency. We shall then build the $\com{n}{m}$ recursively. It will then remain to prove that these $\com{m}{n}$ define natural isomorphisms. This will require several steps: Lemma \ref{lem:hexagdiag} will express the morphisms $\com{r}{s}$ in terms of $\com{1}{1}$ only and a short computation will express $\com{1}{1}$ in terms of the generators $e_i$ of Temperley-Lieb algebras. Lemmas \ref{lem:braidendtl0} to \ref{lem:braidingbubbles} show how the $\com{r}{s}$ braid with the $e_i$ and some diagrams in $\Hom(n,0)$ and $\Hom(0,n)$. Then Proposition \ref{prop:ctlnisbraided} proves that the $\com{r}{s}$ form together a commutor for the category $\ctl$.

The hexagon axioms fix the isomorphisms $\com{n}{0}$ and $\com{0}{w}$. Indeed, when all integers are set to $0$, \eqref{eq:ctlhexa1} gives $\com{0}{0}=(\tens{1_0}{\com{0}{0}})\circ(\tens{\com{0}{0}}{1_0})$ and thus $\com{0}{0}=1_0$. Similarly the same equation for $\com{n}{0+1}$ leads to $\com{n}{0}=1_n$. Hence $\com{n}{0}=\com{0}{n}=1_n$ for all $n\geq 0$.

\begin{Prop}\label{prop:coherenthexa}
	If the morphisms $\lbrace \com{i}{j} \rbrace_{0\leq i\leq r, 0\leq j\leq s}$ satisfy equations \eqref{eq:ctlhexa1} and \eqref{eq:ctlhexa2} for all $0\leq n, u+v \leq r $ and $0\leq m+k,w \leq s $, then so do $\com{r+1}{s}$ and $\com{r}{s+1}$ defined as
\begin{equation}\label{eq:etaRecursive}\com{r+1}{s}\equiv  \left(\tens{\com{r}{s}}{1_{1}} \right)\circ \left(\tens{1_{r}}{\com{1}{s}} \right)\qquad\textrm{and}\qquad
\com{r}{1+s}\equiv  \left(\tens{1_{1}}{\com{r}{s}} \right)\circ \left(\tens{\com{r}{1}}{1_{s}} \right).
\end{equation}
\end{Prop}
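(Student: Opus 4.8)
The plan is to verify the hexagon relations \eqref{eq:ctlhexa1} and \eqref{eq:ctlhexa2} directly for the two newly defined morphisms, using only that $\ctl$ is strict (so associators are identities and $1_a\otimes 1_b=1_{a+b}$) together with the bifunctoriality of $\tens{-}{-}$ in the repeatedly used forms $(f\otimes g)\circ(f'\otimes g')=(f\circ f')\otimes(g\circ g')$, $(f\circ f')\otimes 1_c=(f\otimes 1_c)\circ(f'\otimes 1_c)$, and $1_c\otimes(g\circ g')=(1_c\otimes g)\circ(1_c\otimes g')$. First I would record a bookkeeping remark that decouples the two checks. In any instance of \eqref{eq:ctlhexa1} the three commutors share their left index, and in any instance of \eqref{eq:ctlhexa2} they share their right index; using this together with the already-established values $\com{0}{n}=\com{n}{0}=1_n$, a short case analysis shows that the only instances of \eqref{eq:ctlhexa1} and \eqref{eq:ctlhexa2} not already contained in the hypothesis and all of whose terms are now defined are the instances of \eqref{eq:ctlhexa2} with $u+v=r+1$ and $w=s$ (whose outcome is $\com{r+1}{s}$) and the instances of \eqref{eq:ctlhexa1} with $n=r$ and $m+k=s+1$ (whose outcome is $\com{r}{1+s}$). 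In particular $\com{r+1}{s}$ and $\com{r}{1+s}$ never appear in the same equation, every other commutor occurring in these instances has indices in $[0,r]\times[0,s]$, and the proposition reduces to showing
\[
	\com{r+1}{s}=\left(\tens{\com{u}{s}}{1_v}\right)\circ\left(\tens{1_u}{\com{v}{s}}\right)\qquad\text{for every }u+v=r+1,
\]
and its analogue for $\com{r}{1+s}$; the cases with $u=0$ or $v=0$ are trivial, so it suffices to treat $1\le u,v\le r$.

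I would prove the displayed identity by induction on $v$ (equivalently, by downward induction on $u$). The base case $v=1$, $u=r$ is exactly the definition \eqref{eq:etaRecursive} of $\com{r+1}{s}$. For the inductive step, I would start from the identity for the pair $(u+1,v-1)$, insert into the factor $\com{u+1}{s}$ the hypothesis \eqref{eq:ctlhexa2} in the one-strand splitting $(u,1)$ (legitimate since $u+1\le r$), and regroup the resulting composite with the bifunctoriality identities above; it collapses to $\left(\tens{\com{u}{s}}{1_v}\right)\circ\big(\tens{1_u}{\big[(\tens{\com{1}{s}}{1_{v-1}})\circ(\tens{1_1}{\com{v-1}{s}})\big]}\big)$, whose bracketed inner factor equals $\com{v}{s}$ by the hypothesis \eqref{eq:ctlhexa2} applied to $\com{v}{s}$ in the splitting $(1,v-1)$ (legitimate since $v\le r$). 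This closes the induction. The identity for $\com{r}{1+s}$ then follows from \eqref{eq:ctlhexa1} by the mirror-image induction, peeling off one strand on the second index instead of the first (using the one-strand splittings $(1,k)$ and $(m-1,1)$ of \eqref{eq:ctlhexa1} in place of $(u,1)$ and $(1,v-1)$ of \eqref{eq:ctlhexa2}), so it requires no separate computation.

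I do not expect a genuine obstacle here: once the bookkeeping remark is in place, every step is a formal consequence of the inductive hypothesis and bifunctoriality, with associators suppressed because $\ctl$ is strict. The two points that do need care are (i) checking that the only new equation instances are the two one-parameter families isolated above, so that the verifications for $\com{r+1}{s}$ and $\com{r}{1+s}$ are independent of one another, and (ii) arranging the induction so that one always peels off a single strand, which is precisely what keeps every commutor invoked from the hypothesis inside the range $[0,r]\times[0,s]$; both are handled by the one-strand splittings used above together with the boundary values $\com{0}{n}=\com{n}{0}=1_n$.
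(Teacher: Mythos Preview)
Your proposal is correct and follows essentially the same approach as the paper: both arguments peel off a single strand from one of the two factors of $\com{r+1}{s}$, use bifunctoriality to regroup, and invoke the hypothesis on commutors with indices in $[0,r]\times[0,s]$. The only organizational difference is that the paper proceeds directly rather than inductively: for arbitrary $n+m=r+1$ it expands $\com{m}{s}$ via the splitting $(m-1,1)$ and then collapses $(\com{n}{s}\otimes 1_{m-1})\circ(1_n\otimes\com{m-1}{s})$ to $\com{r}{s}$ (using the hypothesis at $n+(m-1)=r$), landing on the definition of $\com{r+1}{s}$ in one pass; your induction on $v$ achieves the same end by stepping through the splittings one at a time. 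Your bookkeeping remark isolating exactly which new instances of \eqref{eq:ctlhexa1} and \eqref{eq:ctlhexa2} need checking is a nice clarification that the paper leaves implicit.
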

\begin{proof}
We only give the proof for $\com{r+1}{s}$ and equation \eqref{eq:ctlhexa2},  
as the other checks are similar. Suppose that $n + m = r + 1 $ with $0\leq n\leq r$ and $1\leq m\leq s$. The steps are the following:
		\begin{align*}
			\left(\tens{\com{n}{s}}{1_{m}} \right)\circ \left(\tens{1_{n}}{\com{m}{s}} \right) 	& \overset{1}{=} \left(\tens{\com{n}{s}}{1_{m}} \right)\circ \left(\tens{1_{n}}{\tens{\com{m-1}{s}}{1_{1}}} \right)\circ\left(\tens{1_{n}}{\tens{1_{m-1}}{\com{1}{s}}} \right)\\
																& \overset{2}{=}  \left(\tens{\com{n}{s}}{\tens{1_{m-1}}{1_{1}}} \right)\circ \left(\tens{1_{n}}{\tens{\com{m-1}{s}}{1_{1}}} \right)\circ\left(\tens{1_{n+m-1=r}}{\com{1}{s}} \right)\\
																& \overset{3}{=} \left( \tens{\left(\left(\tens{\com{n}{s}}{1_{m-1}} \right)\circ \left(\tens{1_{n}}{\com{m-1}{s}} \right)\right)  }{1_{1}}\right)\circ\left(\tens{1_{r}}{\com{1}{s}} \right)\\
																& \overset{4}{=}  \left(\tens{\com{r}{s}}{1_{1}} \right)\circ \left(\tens{1_{r}}{\com{1}{s}} \right)\\
																& \overset{5}{=} \com{r+1}{s}.
		\end{align*}
Steps $1$ and $4$ are obtained by using the fact that $\com{m}{s}$ and $\com{r}{s}$ satisfy the hexagon identity \eqref{eq:ctlhexa2}. Steps $2$ and $3$ use the property $\tens{1_{i}}{1_{j}} = 1_{i+j}$ of identity morphisms that holds for all non-negative % positive 
integers $i,j$. Finally step $5$ is the proposed definition of $\com{r+1}{s}$.
\end{proof}
%
% recursive expression solved explicitly
%
The next lemma solves the recursive expressions \eqref{eq:etaRecursive} in terms of the ``elementary component" $\com 11$ only.
\begin{Lem}\label{lem:hexagdiag}
	The morphisms $\com{r}{s}$, with $r,s\geq 1$, satisfy equations \eqref{eq:ctlhexa1} and \eqref{eq:ctlhexa2} if and only if they are given by
	\begin{align}\label{eq:commutor}
		\com{r}{s} = \prod_{i=1}^s\big(\prod_{j=r-1}^0 \T{i+j}(r+s) \big) = \prod_{i=r}^1\big(\prod_{j=0}^{s-1} \T{i+j}(r+s)  \big),
		\end{align}
where 
$\T{i}(n) \equiv \tens{1_{i-1}}{\tens{\com{1}{1}}{1_{n-i-1}}} \in \Hom(n,n)$
and the factors in a product are listed starting from the right, that is, $\prod_{i=1}^s\T{i}=\T s\T{s-1}\dots\T2\T1$ and
 $\prod_{i=s}^1\T{i}\equiv\T1\T2\dots\T{s-1}\T s$. % The converse is true, provided that $$\T{1}(3)\T{2}(3)\T{1}(3) = \T{2}(3)\T{1}(3)\T{2}(3).$$
\end{Lem}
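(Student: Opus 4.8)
The plan is to recognise that the two recursions in \eqref{eq:etaRecursive} are exactly the instances of \eqref{eq:ctlhexa2} with $(u,v,w)=(r,1,s)$ and of \eqref{eq:ctlhexa1} with $(n,m,k)=(r,1,s)$. Thus, if a family $\{\com{r}{s}\}$ satisfies \eqref{eq:ctlhexa1} and \eqref{eq:ctlhexa2} then it satisfies \eqref{eq:etaRecursive}; conversely, by \cref{prop:coherenthexa} the family obtained from $\com{1}{1}$ by iterating \eqref{eq:etaRecursive} (with the forced boundary values $\com{n}{0}=\com{0}{n}=1_n$) satisfies all of \eqref{eq:ctlhexa1} and \eqref{eq:ctlhexa2}, and \eqref{eq:etaRecursive} clearly determines it uniquely from $\com{1}{1}$. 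Hence ``$\{\com{r}{s}\}$ satisfies the hexagon identities'' is equivalent to ``$\{\com{r}{s}\}$ is the unique family generated from $\com{1}{1}$ by \eqref{eq:etaRecursive}'', and it remains only to check that this family is the one displayed in \eqref{eq:commutor}, that is, to solve the recursion \eqref{eq:etaRecursive} in closed form.

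I would do this by induction on $r+s$. The base case $r=s=1$ is trivial, \eqref{eq:commutor} reducing to $\com{1}{1}=\T{1}(2)=\com{1}{1}$. For $r+s\geq3$ at least one of $r,s$ exceeds $1$. If $s\geq2$, then $(r,s-1)$ and $(r,1)$ both lie under the inductive hypothesis, so $\com{r}{s-1}$ is the first product of \eqref{eq:commutor} evaluated at $(r,s-1)$ and $\com{r}{1}=\T{1}\T{2}\cdots\T{r}$; I substitute these into the second recursion $\com{r}{s}=(\tens{1_1}{\com{r}{s-1}})\circ(\tens{\com{r}{1}}{1_{s-1}})$. The only ingredients are $\tens{1_i}{1_j}=1_{i+j}$, the functoriality of $\otimes$ (giving $\tens{1_1}{(g\circ h)}=(\tens{1_1}{g})\circ(\tens{1_1}{h})$ and $(\tens{g}{1_1})\circ(\tens{1}{h})=\tens{g}{h}$), and the shift rule $\tens{1_a}{\T{k}(n)}=\T{k+a}(n+a)$; with them $\tens{1_1}{\com{r}{s-1}}$ is just $\com{r}{s-1}$ with every index raised by one, and appending $\tens{\com{r}{1}}{1_{s-1}}=\T{1}\T{2}\cdots\T{r}$ on the right reassembles precisely $\prod_{i=1}^s\big(\prod_{j=r-1}^0\T{i+j}(r+s)\big)$, the first expression in \eqref{eq:commutor}. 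If instead $r\geq2$, the same computation applied to the first recursion $\com{r}{s}=(\tens{\com{r-1}{s}}{1_1})\circ(\tens{1_{r-1}}{\com{1}{s}})$, using $\com{1}{s}=\T{s}\T{s-1}\cdots\T{1}$, reassembles $\prod_{i=r}^1\big(\prod_{j=0}^{s-1}\T{i+j}(r+s)\big)$, the second expression. When $r,s\geq2$ both routes are available and both compute $\com{r}{s}$, which is precisely what proves the two products in \eqref{eq:commutor} equal; when $r=1$ or $s=1$ that equality is visible by inspection.

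This gives the forward implication; for the converse, if the $\com{r}{s}$ are given by \eqref{eq:commutor} then --- the two products there being equal, as just shown, so that the formula is unambiguous --- the family coincides with the one generated by \eqref{eq:etaRecursive}, hence satisfies \eqref{eq:ctlhexa1} and \eqref{eq:ctlhexa2} by \cref{prop:coherenthexa}. I expect the only genuine effort to be the index bookkeeping in the inductive step --- keeping the product-ordering convention straight and checking that the shifted staircase blocks $\T{i}\T{i+1}\cdots\T{i+r-1}$ nest correctly under the two recursions --- which is routine rather than conceptual. In particular no braid (Yang--Baxter) relation for $\com{1}{1}$ is needed here: only the bifunctoriality of $\otimes$, together with the consistency of the recursive construction, which is \cref{prop:coherenthexa}.
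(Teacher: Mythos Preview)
Your proof is correct and close in spirit to the paper's, but you organise the logic differently and lean more heavily on \cref{prop:coherenthexa}. The paper's forward direction is the same induction you sketch (first on $r$ with $s=1$, then on $s$, to get the first product; reversed order for the second). For the converse, however, the paper does \emph{not} invoke \cref{prop:coherenthexa}: it verifies by a short direct computation that the first product formula satisfies \eqref{eq:ctlhexa1}, notes that the second satisfies \eqref{eq:ctlhexa2} similarly, and then --- to get each formula to satisfy the \emph{other} hexagon --- proves the two products equal by a separate inductive argument using the far-commutativity $\T_i\T_j=\T_j\T_i$ for $|i-j|>1$ (which, as you observe, is just bifunctoriality of $\otimes$). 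Your route gets both the converse and the equality of the two products for free from the consistency established by \cref{prop:coherenthexa}, which is cleaner and makes that proposition do real work; the only caveat is that you are invoking the \emph{iterated} form of \cref{prop:coherenthexa} (that the entire recursively built family satisfies all hexagon instances, not just the one used to extend at each step), which the paper states somewhat loosely --- but that is precisely what the proposition is there for. Neither approach needs the braid relation \eqref{eq:braid}, as you correctly point out.
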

\begin{proof}
	The proof of the first part is obtained by induction on $r$ and $s$. Taking the induction first on $r$, then on $s$ gives the first expression, while doing the inductions in the reverse order yields the second. The proof of the former is given as example. When $r=s=1$, the first expression is simply $\com 11$ and the statement is trivially true. Assume therefore that the result stands for $\com{r}{1}$. If $\com{r+1}{1}$ satisfies equation \eqref{eq:ctlhexa2}, then in particular
\begin{equation*}
	\com{r+1}{1} =  \left(\tens{\com{r}{1}}{1_{1}} \right)\circ\underbrace{ \left(\tens{1_{r}}{\com{1}{1}} \right) }_{\T{r+1}(r+2)},
\end{equation*}
which is $\com{r+1}{1}$ as given by the first expression in \eqref{eq:commutor}. Assume then that the result stands for some $r,s \geq 1$. Then \eqref{eq:ctlhexa1} gives
\begin{align*}
	\com{r}{s+1} &= \left(\tens{1_{s}}{\com{r}{1}} \right)\circ \left(\tens{\com{r}{s}}{1_{1}} \right)\\
	&= \big(\prod_{j=r-1}^0 %\T{i+s+1}
	\T{j+s+1}(r+s+1)\big)\circ\prod_{i=1}^s\big(\prod_{j=r-1}^0\T{i+j}(r+s+1) \big)
\end{align*}
which is the expression for $\com{r}{s+1}$ given in \eqref{eq:commutor}.

The converse can be obtained as follows. The first expression gives
\begin{align*}
(\tens{1_m}{\com{n}{k}})\circ(\tens{\com{n}{m}}{1_k})&=
	\big(\prod_{i=1}^k \prod_{j=n-1}^0\T{i+j+m}\big)\circ
	\big(\prod_{i=1}^m \prod_{j=n-1}^0\T{i+j}\big)\\
  &=\big(\prod_{i=m+1}^{m+k} \prod_{j=n-1}^0\T{i+j}\big)\circ
	\big(\prod_{i=1}^m \prod_{j=n-1}^0\T{i+j}\big)\\
  &=\big(\prod_{i=1}^{m+k} \prod_{j=n-1}^0\T{i+j}\big)=\com{n}{m+k}
\end{align*}
and, thus, satisfies \eqref{eq:ctlhexa1}. The second expression is shown similarly to satify \eqref{eq:ctlhexa2}. The proof that the first expression satisfies \eqref{eq:ctlhexa2} is harder and it is then easier, though tedious, to prove that the two expressions are equal. It is done using the identity $\T i\T j=\T j\T i$ for $|i-j|>1$, that follows from the definition of the $\T i(n)$. Here is an example. The two expressions for $\com{2}{3}$ are $(\T3\T4)(\T2\T3)(\T1\T2)$ and $(\T3\T2\T1)(\T4\T3\T2)$ and those for $\com{2}{2}$ are $(\T2\T3)(\T1\T2)$ and $(\T2\T1)(\T3\T2)$. Assuming that the latter are equal, the former are shown to be equal by
$$(\T3\T4)\big((\T2\T3)(\T1\T2)\big)=(\T3\T4)\big((\T2\T1)(\T3\T2)\big)=(\T3\T2\T1)(\T4\T3\T2)$$
where the two expressions for $\com{2}{2}$ gives the first equatlity while the commutativity of $\T4$ with $\T2$ and $\T1$ gives the second. The argument can be extended into a proof by induction on the sum $r+s$ of the indices of $\com{r}{s}$.
\end{proof}
The next step is to find an expression for $\com{1}{1}$. Since $\com{1}{1}:\tens{1}{1}\to\tens{1}{1}$ is an element of $\End (2) \simeq \tl{2}(\beta)$, which is two-dimensional, there exists  $\alpha, \gamma \in \mathbb{C}$ such that $\com{1}{1} = \alpha 1_{2} + \gamma e_{1}(2)$, where the notation
\begin{equation*}
	\begin{tikzpicture}[scale = 1/3, baseline={(current bounding box.south)}]
		\draw[thick] (0,0) .. controls (1,0) and (1,1) .. (0,1);
		\draw[thick] (2,0) .. controls (1,0) and (1,1) .. (2,1);
		\bord{0}{0}{2}
		\bord{2}{0}{2}
		\node[anchor = east] at (-1/2,1/2) {$ e_{i}(n) = 1_{i-1} \otimes $};
		\node[anchor = west] at (5/2,1/2) {$\otimes 1_{n-(i+1)}$};
	\end{tikzpicture}
\end{equation*}
is used. It can be checked directly from this definition that the $e_{i}$ satisfy the Temperley-Lieb defining relations:
\begin{equation}\label{eq:reltl1}
	e_{i}(n) e_{i}(n) = \beta e_{i}(n), \qquad e_{i}(n)e_{i \pm 1}(n)e_{i}(n) = e_{i}(n) , 
\end{equation}
\begin{equation}\label{eq:reltl2}
	e_{i}(n)e_{j}(n) = e_{j}(n)e_{i}(n), \qquad \text{ if } |i-j|>1 .
\end{equation}
In fact, it can be proved that the set $\lbrace e_{i}(n)\rbrace_{1 \leq i < n} $ generates $\End(n)=\tl n(\beta)$. Using these relations, it can be seen that $\com{1}{1}$ is invertible provided that $\alpha\neq 0$. Now, if the family of $\com{r}{s}$ is to define a commutor then, in particular, it must verify
\begin{equation}\label{eq:eta12braid}
	\com{1}{2}e_{2}(3) = e_{1}(3) \com{1}{2}\qquad \textrm{and}\qquad	\com{1}{2}(\tens{1_1}{z}) = (\tens{z}{1_1})\com{1}{0}
\end{equation}
where 
\begin{equation}\label{eq:definitionz}
	\begin{tikzpicture}[scale =1/3, baseline={(current bounding box.center)}]
		\draw[thick] (0,0) .. controls (1,0) and (1,1) .. (0,1);
		\bord{0}{0}{2}
		\node[anchor = east] at (-1/2,1/2) {$z =$ };
		\node[anchor = west] at (3/2,1/2) {$ \in \Hom_{\ctl} (0,2)$};
	\end{tikzpicture}
\end{equation}
and $\com{1}{2}=(\tens{1_1}{\com{1}{1}})\circ(\tens{\com{1}{1}}{1_1})=\alpha^21_3+\alpha\gamma(e_1(3)+e_2(3))+\gamma^2e_2(3)e_1(3)$.
The first equation of \eqref{eq:eta12braid} will be satisfied if and only if $\alpha^2 + \beta \alpha \gamma + \gamma^{2}  =0$, while the second will be if and only if $\alpha \gamma = 1$. Solving these equations yields
	$$\alpha = \pm q^{\pm 1/2}, \qquad \gamma = 1/\alpha ,$$
where $q\in \mathbb C^\times$ is such that $\beta = -q - q^{-1}$ and the two $\pm$ signs are independent. There are thus four solutions. Note that one of the $\pm$ is responsible for an overal sign on $\com{1}{1}$ while the remaining one mirrors the invariance of $\beta$ under $q\mapsto q^{-1}$. Without loss of generalities, we shall concentrate on the following choice: 
\begin{equation}\label{eq:theTi}
\T{i}(n) = q^{1/2} (1_{n} + q^{-1}e_{i}(n))\qquad\textrm{and}\qquad
\T{i}(n)^{-1} = q^{-1/2} (1_{n} + qe_{i}(n))
\end{equation}
and $\com{1}{1}:\tens{1}{1}\to\tens{1}{1}$ is simply $\com 11=\T 1(2)$. These building blocks $\T i(n)$ of the $\com{r}{s}$ have appeared numerous times in the literature. The identity \eqref{eq:braid} below was recognized by Chow \cite{Chow} as crucial to identify the center of braid groups. Much later Martin \cite{Martin} used the $\T i$ (up to a factor) to construct central elements of the Temperley-Lieb algebra.

It can also be useful to introduce diagrams representing $\T{i}(n)$ and $\T{i}(n)^{-1}$; we choose the following
\begin{equation}
	\T{1}(2) \equiv  \;
	\begin{tikzpicture}[scale = 1/3, baseline={(current bounding box.center)}]
	\draw[black, line width = 1pt] (0,1) .. controls (1,1) and (2,0) .. (3,0);
	\draw[white, line width = 3pt] (0,0) .. controls (1,0) and (2,1) .. (3,1);
	\draw[black, line width = 1pt] (0,0) .. controls (1,0) and (2,1) .. (3,1);
	\bord{0}{0}{2};
	\bord{3}{0}{2};
	\end{tikzpicture}\;, \qquad \;
	\T{1}(2)^{-1} \equiv \;
	\begin{tikzpicture}[scale = 1/3, baseline={(current bounding box.center)}]
	\draw[black, line width = 1pt] (0,0) .. controls (1,0) and (2,1) .. (3,1);
	\draw[white, line width = 3pt] (0,1) .. controls (1,1) and (2,0) .. (3,0);
	\draw[black, line width = 1pt] (0,1) .. controls (1,1) and (2,0) .. (3,0);
	\bord{0}{0}{2};
	\bord{3}{0}{2};
	\end{tikzpicture}\; .
\end{equation}
The other $t_{i}(n)$ can be built from these by using the tensor product of morphisms. These diagrams are concatenated using the same rules as for the other diagrams representing morphisms in the category, so diagrams with isotopic strings are equivalent. Note however that diagrams related through a Reidemeister move of type I are not necessarily equivalent; for instance,
	\begin{equation}
		\T{1}(2)e_{1} \equiv \;
		\begin{tikzpicture}[scale = 1/3, baseline={(current bounding box.center)}]
	\draw[black, line width = 1pt] (0,1) .. controls (1,1) and (2,0) .. (3,0);
	\draw[white, line width = 3pt] (0,0) .. controls (1,0) and (2,1) .. (3,1);
	\draw[black, line width = 1pt] (0,0) .. controls (1,0) and (2,1) .. (3,1);
	\draw[black, line width = 1pt] (3,0) .. controls (4,0) and (4,1) .. (3,1);
	\draw[black, line width = 1pt] (6,0) .. controls (5,0) and (5,1) .. (6,1);
	\bord{0}{0}{2};
	\bord{3}{0}{2};
	\bord{6}{0}{2};
	\end{tikzpicture}\;
	 = - (q)^{-3/2} e_{1} = \;
- (q)^{-3/2}\ \ 
	\begin{tikzpicture}[scale = 1/3, baseline={(current bounding box.center)}]
	\draw[black, line width = 1pt] (3,0) .. controls (4,0) and (4,1) .. (3,1);
	\draw[black, line width = 1pt] (6,0) .. controls (5,0) and (5,1) .. (6,1);
	\bord{6}{0}{2};
	\bord{3}{0}{2};
	\end{tikzpicture}\; .
	\end{equation}
The following lemmas will give the behaviour of these crossings under %Y the other isotopy.
the Reidermeister moves of the two other types.

It now remains to show that this choice does defines a braiding on $\ctl$, but doing so requires a few lemmas. From now on, we shall omit the arguments specifying the $\Hom$-space, unless they are needed to avoid confusion, and assume that these arguments are large enough for the expressions to make sense. For example the next lemma proves that $\T{i}\T{i+1}e_{i} = e_{i +1}e_{i}$. The statement stands for $\T{i}(n)\T{i+1}(n)e_{i}(n) = e_{i+1}(n)e_{i}(n)$ for all $i+2\leq n$ as $\T{i+1}(n)$ and $e_{i+1}(n)$ act non-trivially on nodes $i+1$ and $i+2$ of the elements of $\Hom(n,n)$. The next three lemmas prepare the proof that the $\com{r}{s}$'s are natural isomorphisms and thus define a braiding on $\ctl$. The first is obtained by direct computation.
%
% 1er petit lemme
%
\begin{Lem}\label{lem:braidendtl0}
    The morphisms $\T i$ and $e_i$ satisfy
	\begin{align}
		\T{i}\T{i+1}e_{i} & = e_{i +1}e_{i} = e_{i+1}\T{i}\T{i+1}, \\
		\T{i+1}\T{i}e_{i+1}& = e_{i}e_{i+1} = e_{i}\T{i+1}\T{i}\label{eq:braid2},\\
		\T{i}\T{i+1}\T{i} &= \T{i+1}\T{i}\T{i+1}.\label{eq:braid}
	\end{align}
\end{Lem}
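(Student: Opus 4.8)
The plan is to verify all three identities by direct computation from the explicit formulas \eqref{eq:theTi}, namely $\T i = q^{1/2}(1 + q^{-1}e_i)$, together with the Temperley-Lieb relations \eqref{eq:reltl1} and \eqref{eq:reltl2}. The only relations that will intervene are $e_ie_i = \beta e_i$, $e_ie_{i\pm1}e_i = e_i$, and $\beta = -q - q^{-1}$; all the work is a bookkeeping exercise in $q$-powers. I would treat \eqref{eq:braid} first, since it is the most symmetric and also the best known (it is the braid relation and is recorded, e.g., by Chow \cite{Chow}), then deduce the mixed identities, and finally check the ``reversed'' factorizations such as $e_{i+1}e_i = e_{i+1}\T i\T{i+1}$.

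First I would expand $\T i\T{i+1}\T i = q^{3/2}(1 + q^{-1}e_i)(1 + q^{-1}e_{i+1})(1 + q^{-1}e_i)$. Multiplying out gives $q^{3/2}$ times
\begin{equation*}
1 + q^{-1}(2e_i + e_{i+1}) + q^{-2}(e_ie_{i+1} + e_{i+1}e_i + e_i^2) + q^{-3}e_ie_{i+1}e_i.
\end{equation*}
Using $e_i^2 = \beta e_i$ and $e_ie_{i+1}e_i = e_i$, the bracket becomes $1 + q^{-1}e_{i+1} + q^{-1}(2 + q^{-2} + q^{-1}\beta)e_i + q^{-2}(e_ie_{i+1} + e_{i+1}e_i)$. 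Substituting $\beta = -q - q^{-1}$ makes the scalar multiplying $e_i$ collapse to $q^{-1}$, so the expression is symmetric in $e_i \leftrightarrow e_{i+1}$, which is exactly $\T{i+1}\T i\T{i+1}$. For the first identity, expand $\T i\T{i+1}e_i = q(1 + q^{-1}e_i)(1 + q^{-1}e_{i+1})e_i = q\big(e_i + q^{-1}e_{i+1}e_i + q^{-1}e_ie_i + q^{-2}e_ie_{i+1}e_i\big)$; again $e_i^2 = \beta e_i$ and $e_ie_{i+1}e_i = e_i$ turn this into $q\big((1 + q^{-1}\beta + q^{-2})e_i + q^{-1}e_{i+1}e_i\big) = e_{i+1}e_i$, since $1 + q^{-1}\beta + q^{-2} = 0$. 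The identity $e_{i+1}\T i\T{i+1} = e_{i+1}e_i$ is obtained the same way, multiplying on the left instead: $e_{i+1}\T i\T{i+1} = q\,e_{i+1}(1 + q^{-1}e_i)(1 + q^{-1}e_{i+1})$, and $e_{i+1}e_ie_{i+1} = e_{i+1}$, $e_{i+1}^2 = \beta e_{i+1}$ finish it. The second line \eqref{eq:braid2} is identical after exchanging the roles of $i$ and $i+1$ (equivalently, applying the transpose anti-automorphism, which fixes each $e_j$ and each $\T j$).

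The main obstacle is purely organizational: keeping the scalar coefficients straight when several TL relations are applied to the same monomial, and making sure the claim that a single identity at level $n = i+2$ implies the stated identity for all larger $n$ is justified. The latter point is already addressed in the paragraph preceding the lemma — $\T{i+1}(n)$ and $e_{i+1}(n)$ act nontrivially only on strands $i+1, i+2$, so the identities, once checked inside $\tl{i+2}(\beta) = \End(i+2)$, tensor up on the right with $1_{n-i-2}$ — so no further argument is needed there. I expect no genuine difficulty; this lemma is flagged in the text as ``obtained by direct computation,'' and the computation is short once $\beta = -q-q^{-1}$ is used to kill the quadratic factor $1 + q^{-1}\beta + q^{-2} = -q^{-2}(q^2 + q\beta + 1) = 0$.
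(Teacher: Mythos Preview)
Your proposal is correct and is exactly what the paper intends: the text states that this lemma ``is obtained by direct computation'' and gives no further details, and your expansion of the $\T i$ via \eqref{eq:theTi} together with the Temperley-Lieb relations \eqref{eq:reltl1}--\eqref{eq:reltl2} and the cancellation $1 + q^{-1}\beta + q^{-2} = 0$ is precisely that computation.
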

	In terms of diagrams, this lemma can be written
	\begin{equation}\label{eq:lemma.diag1}
		\T{1}\T{2}e_{1} \equiv
	\begin{tikzpicture}[scale = 1/3, baseline = {(current bounding box.center)}]
	\draw[black, line width = 1pt] (0,2) .. controls (1,2) and (2,1) .. (3,1);
	\draw[white, line width = 3pt] (0,1) .. controls (1,1) and (2,2) .. (3,2);
	\draw[black, line width = 1pt] (0,1) .. controls (1,1) and (2,2) .. (3,2);
	\draw[black, line width = 1pt] (0,0) -- (3,0);
	\draw[black, line width = 1pt] (3,1) .. controls (4,1) and (5,0) .. (6,0);
	\draw[white, line width = 3pt] (3,0) .. controls (4,0) and (5,1) .. (6,1);
	\draw[black, line width = 1pt] (3,0) .. controls (4,0) and (5,1) .. (6,1);
	\draw[black, line width = 1pt] (3,2) -- (6,2);
	\draw[black, line width = 1pt] (6,0) -- (9,0);
	\draw[black, line width = 1pt] (6,2) .. controls (7,2) and (7,1) .. (6,1);
	\draw[black, line width = 1pt] (9,2) .. controls (8,2) and (8,1) .. (9,1);	
	\bord{0}{0}{3};
	\bord{3}{0}{3};
	\bord{6}{0}{3};
	\bord{9}{0}{3};
	\end{tikzpicture}\; = \; 
	\underbrace{
	\begin{tikzpicture}[scale = 1/3, baseline = {(current bounding box.center)}]
	\draw[black, line width = 1pt] (0,0) .. controls (1,0) and (1,1) .. (0,1);
	\draw[black, line width = 1pt] (0,2) .. controls (1,2) and (2,0) .. (3,0);
	\draw[black, line width = 1pt] (3,1) .. controls (2,1) and (2,2) .. (3,2);
	\bord{0}{0}{3};
	\bord{3}{0}{3};
	\end{tikzpicture}}_{ = e_{2}e_{1}}
	\; = \; 
	\begin{tikzpicture}[scale = 1/3, baseline = {(current bounding box.center)}]
	\draw[black, line width = 1pt] (-3,0) .. controls (-2,0) and (-2,1) .. (-3,1);
	\draw[black, line width = 1pt] (0,0) .. controls (-1,0) and (-1,1) .. (0,1);
	\draw[black, line width = 1pt] (-3,2) -- (0,2);
	\draw[black, line width = 1pt] (0,2) .. controls (1,2) and (2,1) .. (3,1);
	\draw[white, line width = 3pt] (0,1) .. controls (1,1) and (2,2) .. (3,2);
	\draw[black, line width = 1pt] (0,1) .. controls (1,1) and (2,2) .. (3,2);
	\draw[black, line width = 1pt] (0,0) -- (3,0);
	\draw[black, line width = 1pt] (3,1) .. controls (4,1) and (5,0) .. (6,0);
	\draw[white, line width = 3pt] (3,0) .. controls (4,0) and (5,1) .. (6,1);
	\draw[black, line width = 1pt] (3,0) .. controls (4,0) and (5,1) .. (6,1);
	\draw[black, line width = 1pt] (3,2) -- (6,2);
	\bord{0}{0}{3};
	\bord{3}{0}{3};
	\bord{6}{0}{3};
	\bord{-3}{0}{3};
\end{tikzpicture}
	\; \equiv e_{2}\T{1}\T{2} ,
	\end{equation}
	\begin{equation}
	\T{2}\T{1}e_{2} \equiv
	\begin{tikzpicture}[scale = 1/3, baseline = {(current bounding box.center)}]
	\draw[black, line width = 1pt] (0,1) .. controls (1,1) and (2,0) .. (3,0);
	\draw[white, line width = 3pt] (0,0) .. controls (1,0) and (2,1) .. (3,1);
	\draw[black, line width = 1pt] (0,0) .. controls (1,0) and (2,1) .. (3,1);
	\draw[black, line width = 1pt] (0,2) -- (3,2);
	\draw[black, line width = 1pt] (3,2) .. controls (4,2) and (5,1) .. (6,1);
	\draw[white, line width = 3pt] (3,1) .. controls (4,1) and (5,2) .. (6,2);
	\draw[black, line width = 1pt] (3,1) .. controls (4,1) and (5,2) .. (6,2);
	\draw[black, line width = 1pt] (3,0) -- (6,0);
	\draw[black, line width = 1pt] (6,2) -- (9,2);
	\draw[black, line width = 1pt] (6,0) .. controls (7,0) and (7,1) .. (6,1);
	\draw[black, line width = 1pt] (9,0) .. controls (8,0) and (8,1) .. (9,1);	
	\bord{0}{0}{3};
	\bord{3}{0}{3};
	\bord{6}{0}{3};
	\bord{9}{0}{3};
	\end{tikzpicture}\; = \; \underbrace{
	\begin{tikzpicture}[scale = 1/3, baseline = {(current bounding box.center)}]
	\draw[black, line width = 1pt] (0,1) .. controls (1,1) and (1,2) .. (0,2);
	\draw[black, line width = 1pt] (3,0) .. controls (2,0) and (2,1) .. (3,1);
	\draw[black, line width = 1pt] (0,0) .. controls (1,0) and (2,2) .. (3,2);
	\bord{0}{0}{3};
	\bord{3}{0}{3};
	\end{tikzpicture}}_{ = e_{1}e_{2}}\; = \; 
	\begin{tikzpicture}[scale = 1/3, baseline = {(current bounding box.center)}]
	\draw[black, line width = 1pt] (0,1) .. controls (1,1) and (2,0) .. (3,0);
	\draw[white, line width = 3pt] (0,0) .. controls (1,0) and (2,1) .. (3,1);
	\draw[black, line width = 1pt] (0,0) .. controls (1,0) and (2,1) .. (3,1);
	\draw[black, line width = 1pt] (0,2) -- (3,2);
	\draw[black, line width = 1pt] (3,2) .. controls (4,2) and (5,1) .. (6,1);
	\draw[white, line width = 3pt] (3,1) .. controls (4,1) and (5,2) .. (6,2);
	\draw[black, line width = 1pt] (3,1) .. controls (4,1) and (5,2) .. (6,2);
	\draw[black, line width = 1pt] (3,0) -- (6,0);
	\draw[black, line width = 1pt] (-3,0) -- (0,0);
	\draw[black, line width = 1pt] (-3,2) .. controls (-2,2) and (-2,1) .. (-3,1);
	\draw[black, line width = 1pt] (0,2) .. controls (-1,2) and (-1,1) .. (0,1);	
	\bord{0}{0}{3};
	\bord{3}{0}{3};
	\bord{6}{0}{3};
	\bord{-3}{0}{3};
	\end{tikzpicture} \; = \; e_{1}\T{2}\T{1},
	\end{equation}
	\begin{equation}
		\T{1}\T{2}\T{1} = \;
	\begin{tikzpicture}[scale = 1/3, baseline = {(current bounding box.center)}]
	\draw[black, line width = 1pt] (0,2) .. controls (1,2) and (2,1) .. (3,1);
	\draw[white, line width = 3pt] (0,1) .. controls (1,1) and (2,2) .. (3,2);
	\draw[black, line width = 1pt] (0,1) .. controls (1,1) and (2,2) .. (3,2);
	\draw[black, line width = 1pt] (0,0) -- (3,0);
	\draw[black, line width = 1pt] (3,1) .. controls (4,1) and (5,0) .. (6,0);
	\draw[white, line width = 3pt] (3,0) .. controls (4,0) and (5,1) .. (6,1);
	\draw[black, line width = 1pt] (3,0) .. controls (4,0) and (5,1) .. (6,1);
	\draw[black, line width = 1pt] (3,2) -- (6,2);
	\draw[black, line width = 1pt] (6,2) .. controls (7,2) and (8,1) .. (9,1);
	\draw[white, line width = 3pt] (6,1) .. controls (7,1) and (8,2) .. (9,2);
	\draw[black, line width = 1pt] (6,1) .. controls (7,1) and (8,2) .. (9,2);
	\draw[black, line width = 1pt] (6,0) -- (9,0);
	\bord{0}{0}{3};
	\bord{3}{0}{3};
	\bord{6}{0}{3};
	\bord{9}{0}{3};
	\end{tikzpicture} \; = \;
	\begin{tikzpicture}[scale = 1/3, baseline = {(current bounding box.center)}]
	\draw[black, line width = 1pt] (0,2) .. controls (1,2) and (2,1) .. (3,1);
	\draw[white, line width = 3pt] (0,1) .. controls (1,1) and (2,2) .. (3,2);
	\draw[black, line width = 1pt] (0,1) .. controls (1,1) and (2,2) .. (3,2);
	\draw[black, line width = 1pt] (0,0) -- (3,0);
	\draw[black, line width = 1pt] (3,1) .. controls (4,1) and (5,0) .. (6,0);
	\draw[white, line width = 3pt] (3,0) .. controls (4,0) and (5,1) .. (6,1);
	\draw[black, line width = 1pt] (3,0) .. controls (4,0) and (5,1) .. (6,1);
	\draw[black, line width = 1pt] (3,2) -- (6,2);
	\draw[black, line width = 1pt] (-3,1) .. controls (-2,1) and (-1,0) .. (0,0);
	\draw[white, line width = 3pt] (-3,0) .. controls (-2,0) and (-1,1) .. (0,1);
	\draw[black, line width = 1pt] (-3,0) .. controls (-2,0) and (-1,1) .. (0,1);
	\draw[black, line width = 1pt] (-3,2) -- (0,2);
	\bord{0}{0}{3};
	\bord{3}{0}{3};
	\bord{6}{0}{3};
	\bord{-3}{0}{3};
	\end{tikzpicture}
	\; =\; \T{2}\T{1}\T{2}.
	\end{equation}

\smallskip

\noindent Combining these identities with the definition of the braiding $\com{n}{m} $ gives its diagrammatic picture, for instance
	\begin{equation*}
		\com{3}{2} \equiv \;
	\begin{tikzpicture}[scale = 1/3, baseline = {(current bounding box.center)}]
	%Lines Under
	\foreach \s in {3,4}{
	\draw[black, line width = 1pt] (0,\s) .. controls (1, \s) and (3, \s - 3) .. (4,\s - 3);
	}
	%Lines Over
	\foreach \s in {0,1,2}{
	\draw[white, line width = 3pt] (0,\s) .. controls (1, \s) and (3, \s + 2) .. (4,\s + 2);
	\draw[black, line width = 1pt] (0,\s) .. controls (1, \s) and (3, \s + 2) .. (4,\s + 2);
	}
	%Bords
	\foreach \s in {0,...,4}{
	\fill[black] (0,\s) circle (1/5);
	\fill[black] (4,\s) circle (1/5);
	}
	\end{tikzpicture} \;, \qquad \;
	\com{2}{3} \equiv \;
	\begin{tikzpicture}[scale = 1/3, baseline = {(current bounding box.center)}]
	%Lines Under
	\foreach \s in {2,3,4}{
	\draw[black, line width = 1pt] (0,\s) .. controls (1, \s) and (3, \s - 2) .. (4,\s - 2);
	}
	%Lines Over
	\foreach \s in {0,1}{
	\draw[white, line width = 3pt] (0,\s) .. controls (1, \s) and (3, \s + 3) .. (4,\s + 3);
	\draw[black, line width = 1pt] (0,\s) .. controls (1, \s) and (3, \s + 3) .. (4,\s + 3);
	}
	%Bords
	\foreach \s in {0,...,4}{
	\fill[black] (0,\s) circle (1/5);
	\fill[black] (4,\s) circle (1/5);
	}
	\end{tikzpicture}
\end{equation*}
%
% 2nd petit lemme
%
The next one is almost as easy. 
\begin{Lem}\label{lem:braidendtl}
	For all $ 1 \leq i \leq n-1$, $ 1 \leq j \leq m-1$,
		\begin{equation}\label{eq:lemme2a}
			\com{n}{m} e_{i} = e_{m+i}\com{n}{m}\qquad\textrm{and}\qquad\com{n}{m}e_{n+j} = e_{j}\com{n}{m}.
		\end{equation}
Thus, for all $f \in \End(n)$ and $g\in \End(m)$,
		\begin{equation}\label{eq:lemme2b}
			\com{n}{m} (\tens{f}{g})= (\tens{g}{f})\com{n}{m}. 
		\end{equation}
\end{Lem}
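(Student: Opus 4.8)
The plan is to deduce \eqref{eq:lemme2b} from \eqref{eq:lemme2a} by a purely formal argument, and to establish the two relations of \eqref{eq:lemme2a} by two separate inductions, carried out in \emph{opposite} directions so that the $e_i$ being tracked never has to be moved across a tensor-product boundary. The first ingredient is a pair of one-parameter base identities. Using the $s=1$ case of \cref{lem:hexagdiag}, namely $\com{n}{1}=\T1\T2\cdots\T n$, the fact that $e_i$ commutes with $\T j$ whenever $|i-j|>1$, and the relation $\T i\T{i+1}e_i=e_{i+1}\T i\T{i+1}$ of \cref{lem:braidendtl0}, one checks directly that $\com{n}{1}e_i=e_{i+1}\com{n}{1}$ for $1\le i\le n-1$: in the word $\T1\cdots\T n$ the factor $e_i$ commutes past $\T{i+2},\dots,\T n$, the adjacent pair $\T i\T{i+1}$ now standing in front of it becomes $e_{i+1}\T i\T{i+1}$, and then $e_{i+1}$ commutes past $\T{i-1},\dots,\T1$. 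The mirror computation, starting from $\com{1}{m}=\T m\T{m-1}\cdots\T1$ and using the relation $\T{i+1}\T i e_{i+1}=e_i\T{i+1}\T i$ of \cref{lem:braidendtl0}, gives $\com{1}{m}e_{1+j}=e_j\com{1}{m}$ for $1\le j\le m-1$.

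\textbf{The two inductions.} Fix $n$. The first relation of \eqref{eq:lemme2a} is proved for all $m\ge 1$ by induction on $m$, the case $m=1$ being the first base identity. For the inductive step, apply \eqref{eq:ctlhexa1} with its two free integers replaced by $1$ and $m$ to get $\com{n}{m+1}=(\tens{1_1}{\com{n}{m}})\circ(\tens{\com{n}{1}}{1_m})$. For $1\le i\le n-1$ the morphism $e_i$ acts only on nodes $i,i+1\le n$, so it lies inside the $\com{n}{1}$-block of $\tens{\com{n}{1}}{1_m}$; the base identity turns it into $e_{i+1}$, which then lies inside the $\com{n}{m}$-block of $\tens{1_1}{\com{n}{m}}$ (with all node labels shifted up by one) and, by the induction hypothesis, becomes $e_{(m+1)+i}$. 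Composing the two factors gives $\com{n}{m+1}e_i=e_{(m+1)+i}\com{n}{m+1}$; the point is that the index range $1\le i\le n-1$ does not depend on $m$, so no boundary configuration ever occurs. Symmetrically, for fixed $m$ the second relation of \eqref{eq:lemme2a} is proved for all $n\ge 1$ by induction on $n$: the base $n=1$ is the second base identity, and the step uses $\com{n+1}{m}=(\tens{\com{1}{m}}{1_n})\circ(\tens{1_1}{\com{n}{m}})$, the instance of \eqref{eq:ctlhexa2} obtained by replacing $(u,v,w)$ with $(1,n,m)$, dragging $e_{(n+1)+j}$ first through $\tens{1_1}{\com{n}{m}}$ (induction hypothesis) and then through $\tens{\com{1}{m}}{1_n}$ (base identity) so that it lands on $e_j$.

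\textbf{From \eqref{eq:lemme2a} to \eqref{eq:lemme2b}.} Since the unital algebra $\End(n)$ is generated by the $e_i(n)$, $1\le i\le n-1$, and likewise for $\End(m)$, and since $\tens{e_i(n)}{1_m}=e_i(n+m)$ while $\tens{1_n}{e_j(m)}=e_{n+j}(n+m)$, relation \eqref{eq:lemme2a} extends by multiplicativity and $\mathbb{C}$-linearity to $\com{n}{m}(\tens{f}{1_m})=(\tens{1_m}{f})\com{n}{m}$ for all $f\in\End(n)$ and to $\com{n}{m}(\tens{1_n}{g})=(\tens{g}{1_n})\com{n}{m}$ for all $g\in\End(m)$. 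Writing $\tens{f}{g}=(\tens{f}{1_m})\circ(\tens{1_n}{g})$ and using the interchange law then yields $\com{n}{m}(\tens{f}{g})=(\tens{1_m}{f})\,(\tens{g}{1_n})\,\com{n}{m}=(\tens{g}{f})\com{n}{m}$. The only real work in the whole argument is the bookkeeping of node labels under the tensor products, and choosing the two inductions in opposite directions is precisely what keeps the $e_i$ under scrutiny strictly inside one tensor factor at every stage, so that \cref{lem:braidendtl0} is needed only for the two base identities and never for a genuine ``straddling'' configuration — which is the step I expect to require the most care.
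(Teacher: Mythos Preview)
Your proof is correct. It is essentially a repackaging of the paper's own argument: the paper uses the explicit product formula $\com{n}{m}=\prod_{k=1}^{m}(\T k\T{k+1}\cdots\T{k+n-1})$ from \cref{lem:hexagdiag} and pushes $e_i$ through each of the $m$ factors in turn using \cref{lem:braidendtl0}, obtaining $e_i\mapsto e_{i+1}\mapsto\cdots\mapsto e_{i+m}$; you organise the same sequence of pushes as an induction on $m$ via the hexagon recursion $\com{n}{m+1}=(\tens{1_1}{\com{n}{m}})\circ(\tens{\com{n}{1}}{1_m})$, with the single-factor computation you call a ``base identity'' being precisely the paper's one-step push (and symmetrically for the second relation). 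The derivation of \eqref{eq:lemme2b} from \eqref{eq:lemme2a} is identical in both.
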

\begin{proof}
	If $1 \leq k \leq i$ and thus $k\leq i< k+n-1$, the preceding lemma and equation \eqref{eq:reltl2} give
	\begin{equation*}
		\T{k}\T{k+1}\hdots\T{k+n-1}e_{i}  
		= \T{k}\T{k+1}\hdots\underbrace{\T{i}\T{i+1}e_{i}}_{e_{i+1}\T{i}\T{i+1}}\T{i+2}\hdots\T{k+n-1}
		= e_{i+1} \T{k}\T{k+1}\hdots\T{k+n-1}.
	\end{equation*}
It follows that
	\begin{align*}
		\com{n}{m}e_{i}	& = \prod_{k=1}^m (\T{k}\T{k+1}\hdots\T{k+n-1})e_{i} \\
					& = \prod_{k=2}^m (\T{k}\T{k+1}\hdots\T{k+n-1})\left(e_{i+1} \prod_{k=1}^1(\T{k}\T{k+1}\hdots\T{k+n-1})\right)\\
					& = \prod_{k=3}^m(\T{k}\T{k+1}\hdots\T{k+n-1})\left(e_{i+2} \prod_{k=1}^2(\T{k}\T{k+1}\hdots\T{k+n-1})\right) = \hdots \\
					& = e_{m+i} \com{n}{m}.
	\end{align*}
The second identity in \eqref{eq:lemme2a} is proved similarly using the second expression of \eqref{eq:commutor}. Finally, \eqref{eq:lemme2b} follows from the fact that $\End(n)\simeq \tl n$ is generated by the $e_{i}$.
\end{proof}
%
% 3e petit lemme
%
\begin{Lem}\label{lem:braidingbubbles}
	For positive integers $p$ and $n$
		\begin{equation}
			\com{n}{2p}(\tens{1_{n}}{z^{\otimes p}}) = (\tens{z^{\otimes p}}{1_{n}})\com{n}{0}
			\quad\textrm{and}\quad \com{0}{n}(\tens{(z^{t})^{\otimes p}}{1_{n}}) = (\tens{1_{n}}{(z^{t})^{\otimes p}})\com{2p}{n}
		\end{equation}
	where $\com{0}{n}=\com{n}{0}=1_n$, $z$ is defined in equation \eqref{eq:definitionz}, $(z)^{t}$ is its transpose, and $z^{\otimes p} \equiv \underbrace{z \otimes z \otimes \hdots \otimes z}_{p \text{ times}}$.
\end{Lem}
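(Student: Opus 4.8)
The plan is to prove the left-hand identity first, reducing it via the hexagon relations to the case of one strand and one bubble, and then to deduce the right-hand identity by transposition. The only ingredients needed are the bifunctoriality (interchange law) $(\tens{f_1}{g_1})\circ(\tens{f_2}{g_2})=\tens{(f_1\circ f_2)}{(g_1\circ g_2)}$ of the tensor product, the relations $\tens{1_a}{1_b}=1_{a+b}$ and $\com{n}{0}=\com{0}{n}=1_n$, the second equation of \eqref{eq:eta12braid}, and the hexagon identities \eqref{eq:ctlhexa1}--\eqref{eq:ctlhexa2}, which hold for the $\com{r}{s}$ by Lemma~\ref{lem:hexagdiag}.

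First I would settle the case $p=1$, namely $\com{n}{2}(\tens{1_n}{z})=\tens{z}{1_n}$ for all $n\geq1$, by induction on $n$. The base case $n=1$ is exactly the second equation of \eqref{eq:eta12braid}, since $\com{1}{0}=1_1$. For the inductive step, \eqref{eq:ctlhexa2} (with $u=n$, $v=1$, $w=2$) gives $\com{n+1}{2}=(\tens{\com{n}{2}}{1_1})\circ(\tens{1_n}{\com{1}{2}})$. Applying this to $\tens{1_{n+1}}{z}=\tens{1_n}{(\tens{1_1}{z})}$, pushing the inner $\tens{1_1}{z}$ through $\com{1}{2}$ by the base case and then using the interchange law and the inductive hypothesis on $\com{n}{2}$, one obtains $\com{n+1}{2}(\tens{1_{n+1}}{z})=\tens{(\tens{z}{1_n})}{1_1}=\tens{z}{1_{n+1}}$, as wanted.

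Next I would prove the left-hand identity in full by induction on $p$, with $p=1$ being the previous step. Here \eqref{eq:ctlhexa1} gives $\com{n}{2p+2}=(\tens{1_{2p}}{\com{n}{2}})\circ(\tens{\com{n}{2p}}{1_2})$, and a single bubble can be peeled off the right end through $\tens{1_n}{z^{\otimes(p+1)}}=(\tens{1_{n+2p}}{z})\circ(\tens{1_n}{z^{\otimes p}})$. Substituting these, the computation reduces to three moves: (i) $(\tens{\com{n}{2p}}{1_2})\circ(\tens{1_{n+2p}}{z})=\tens{\com{n}{2p}}{z}=(\tens{1_{n+2p}}{z})\circ\com{n}{2p}$, a pure interchange-law identity expressing that a crossing slides freely past a bubble attached at the far end; (ii) the inductive hypothesis $\com{n}{2p}\circ(\tens{1_n}{z^{\otimes p}})=\tens{z^{\otimes p}}{1_n}$; and (iii) $(\tens{1_{2p}}{\com{n}{2}})\circ(\tens{z^{\otimes p}}{(\tens{1_n}{z})})=\tens{z^{\otimes p}}{(\com{n}{2}\circ(\tens{1_n}{z}))}=\tens{z^{\otimes p}}{(\tens{z}{1_n})}=\tens{z^{\otimes(p+1)}}{1_n}$, using the $p=1$ case. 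Since $\com{n}{0}=1_n$, this is the left-hand identity.

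Finally, the right-hand identity follows by transposition. Reflection through the vertical midline is a contravariant functor $(\blank)^t$ on $\ctl$ compatible with $\tens{\blank}{\blank}$ and fixing every identity morphism; since $1_2^t=1_2$ and $e_1^t=e_1$, it fixes $\com{1}{1}$ and hence each $\T{i}$, and because $(\blank)^t$ reverses products of morphisms, the first expression in \eqref{eq:commutor} for $\com{r}{s}$ transposes to the second expression for $\com{s}{r}$, so $\com{r}{s}^t=\com{s}{r}$. Transposing the left-hand identity and using $(\tens{1_n}{z^{\otimes p}})^t=\tens{1_n}{(z^t)^{\otimes p}}$, $(\tens{z^{\otimes p}}{1_n})^t=\tens{(z^t)^{\otimes p}}{1_n}$ and $\com{n}{0}^t=\com{0}{n}$ then gives exactly the right-hand identity. (One could instead prove the right-hand identity directly by the same induction as above, peeling strands off the left with \eqref{eq:ctlhexa2}.) The whole argument is routine; the one place that requires care is the bookkeeping of $\Hom$-spaces and tensor-factor positions when invoking the interchange law, most notably in move (i) — but no step is conceptually difficult.
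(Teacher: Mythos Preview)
Your proof is correct and follows essentially the same double-induction strategy as the paper: first establish the $p=1$ case by inducting on $n$ via \eqref{eq:ctlhexa2}, then induct on $p$ via \eqref{eq:ctlhexa1}. The only differences are cosmetic: you split $n+1=n+1$ and $2p+2=2p+2$ in the hexagon identities with the opposite choice from the paper (peeling off the last strand or bubble rather than the first), and you deduce the second identity by the transposition symmetry $\com{r}{s}^t=\com{s}{r}$ rather than repeating the induction --- a clean shortcut the paper merely alludes to with ``both proofs are nearly identical''.
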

\begin{proof}
	We prove the first identity only as both proofs are nearly identical. We proceed first by induction on $n$ and then on $p$. If $p=n=1$, the equation is the second of the two equations in \eqref{eq:eta12braid} that were solved to construct the $\T{i}$ and obtain \eqref{eq:theTi}. Suppose therefore that the result stands for $p=1$ and some $n \geq 1$. The hexagon identity \eqref{eq:ctlhexa2} gives
	\begin{align*}
		\com{n+1}{2}(\tens{1_{n+1}}{z}) 	&= (\tens{\com{1}{2}}{1_{n}})(\tens{1_{1}}{\com{n}{2}})(\tens{1_1}{\tens{1_{n}}{z}})\\
								&=  (\tens{\com{1}{2}}{1_{n}})(\tens{1_{1}}{(\com{n}{2}(\tens{1_{n}}{z}))})\\
								&= (\tens{\com{1}{2}}{1_{n}})(\tens{\tens{1_{1}}{z}}{1_{n}})\\
								&= \tens{(\com{1}{2}(\tens{1_{1}}{z}))}{1_{n}}\\
								&= \tens{z}{1_{n+1}}.
	\end{align*}
	Assume then that the result stands for some $p \geq 1$ and all $n\geq 1$.
The hexagon identity \eqref{eq:ctlhexa1} gives
	\begin{align*}
		\com{n}{2p +2 }(\tens{1_{n}}{z^{\otimes p + 1}}) 	&= (\tens{1_{2}}{\com{n}{2p}})(\tens{\com{n}{2}}{1_{2p}})(\tens{1_{n}}{\tens{z}{z^{\otimes p}}})\\
										&=   (\tens{1_{2}}{\com{n}{2p}})(\tens{\com{n}{2}(\tens{1_{n}}{z})}{z^{\otimes p}})\\
										&=   \tens{z}{(\com{n}{2p}(\tens{1_{n}}{z^{\otimes p}}))}\\
										&= \tens{z^{\otimes p+1}}{1_{n}}
	\end{align*}
which ends the proof.\end{proof}
	In terms of diagrams, this lemma simply states that the two points linked together on the right side of the diagrams in equation \eqref{eq:lemma.diag1} can be moved over the underlying links.
	\begin{equation*}
		\com{3}{2}(z^{t} \otimes 1_{3}) \; = 
	\begin{tikzpicture}[scale = 1/3, baseline = {(current bounding box.center)}]
					%Commutor
	%Lines Under
	\foreach \s in {2,3,4}{
	\draw[black, line width = 1pt] (0,\s) .. controls (1, \s) and (3, \s - 2) .. (4,\s - 2);
	}
	%Lines Over
	\foreach \s in {0,1}{
	\draw[white, line width = 3pt] (0,\s) .. controls (1, \s) and (3, \s + 3) .. (4,\s + 3);
	\draw[black, line width = 1pt] (0,\s) .. controls (1, \s) and (3, \s + 3) .. (4,\s + 3);
	}
	%Bords
	\foreach \s in {0,...,4}{
	\fill[black] (0,\s) circle (1/5);
	\fill[black] (4,\s) circle (1/5);
	}
					%Loops
	%Strings
	\draw[black, line width = 1pt] (4,4) .. controls (5,4) and (5,3) .. (4,3);
	%Bords and spectator strings
	\foreach \s in {1,2,3}{
	\draw[black, line width = 1pt] (4,\s - 1) .. controls (5,\s -1) and (6, \s) .. (7,\s);
	\fill[black] (7,\s) circle (1/5);
	}
	\end{tikzpicture} \; = \;
	\begin{tikzpicture}[scale = 1/3, baseline = {(current bounding box.center)}]
	%Identity
	\foreach \s in {2,3,4}{
	\draw[black, line width = 1pt] (0,\s) .. controls (1, \s) and (3, \s - 1) .. (4,\s -1);
	}
	%z
	\draw[black, line width = 1pt] (0,1) .. controls  (1,1) and (1,0) .. (0,0);
	%Bords
	\foreach \s in {0,...,4}{
	\fill[black] (0,\s) circle (1/5);
	}
	\foreach \s in {1,2,3}{
	\fill[black] (4,\s) circle (1/5);
	}
	\end{tikzpicture} \; = (1_{3} \otimes  z^{t}).
	\end{equation*}
%
% the big thm
%
With these three lemmas, we are now ready to prove the main result of this section.
\begin{Prop}\label{prop:ctlnisbraided}
	The category $\ctl$ is braided with a commutor having components
		\begin{equation}\label{eq:recursiveCom}
			\com{r}{s} = \prod_{i=1}^s \Big(\prod_{j=r-1}^0 \T{i+j}(r+s) \Big) 
		  = \prod_{i=r}^1\Big(\prod_{j=0}^{s-1} \T{i+j}(r+s)  \Big)
		\end{equation}
and $\T{i}(n) = q^{1/2}(1_{n} + q^{-1}e_{i}(n) )$.
\end{Prop}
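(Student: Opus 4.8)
The plan is to check in turn the three ingredients of a braiding on a strict monoidal category: that each $\com{r}{s}$ is an isomorphism, that the two hexagon identities hold, and that the family $\{\com{r}{s}\}$ is natural. The first two are almost immediate. Formula \eqref{eq:recursiveCom} presents $\com{r}{s}$ as a composite of the morphisms $\T{i}(r+s)$, and each $\T{i}(n)$ is invertible with $\T{i}(n)^{-1}=q^{-1/2}(1_{n}+qe_{i}(n))$ as in \eqref{eq:theTi}, so $\com{r}{s}$ is invertible. Since $\ctl$ is strict, the two hexagon diagrams are equivalent to \eqref{eq:ctlhexa1}--\eqref{eq:ctlhexa2}, and \cref{lem:hexagdiag} shows exactly that the components \eqref{eq:commutor} satisfy them, the degenerate cases (where $\com{n}{0}=\com{0}{n}=1_{n}$) having been settled before the statement. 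So the substance of the proof is naturality.

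To prove naturality I would first reduce it to finitely many checks. One needs $\com{n'}{m'}\circ(\tens{f}{g})=(\tens{g}{f})\circ\com{n}{m}$ for all $f\in\Hom(n,n')$ and $g\in\Hom(m,m')$; writing $\tens{f}{g}=(\tens{f}{1_{m'}})\circ(\tens{1_{n}}{g})$ shows it is enough to treat the two one-variable statements, so let $(\ast_{f})$ denote the assertion that $\com{n'}{m}\circ(\tens{f}{1_{m}})=(\tens{1_{m}}{f})\circ\com{n}{m}$ for every $m$ (the second-slot version being analogous). The class of $f$ satisfying $(\ast_{f})$ is a $\mathbb{C}$-linear subspace, is closed under composition (by bifunctoriality of $\tens{-}{-}$), and is closed under $\tens{-}{-}$: if $(\ast_{f_{1}})$ and $(\ast_{f_{2}})$ hold, then $(\ast_{\tens{f_{1}}{f_{2}}})$ follows from a routine diagram chase that splits $\com{n_{1}+n_{2}}{m}$ as $(\tens{\com{n_{1}}{m}}{1_{n_{2}}})\circ(\tens{1_{n_{1}}}{\com{n_{2}}{m}})$ via \eqref{eq:ctlhexa2}. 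Every $(m,n)$-diagram factors, through its number of through-lines, as a composition of diagrams of the form $\tens{1_{a}}{\tens{z}{1_{b}}}$ and $\tens{1_{a}}{\tens{z^{t}}{1_{b}}}$ (with $z$ as in \eqref{eq:definitionz}; note $e_{i}(n)=\tens{1_{i-1}}{\tens{(z\circ z^{t})}{1_{n-i-1}}}$ is such a composition). Hence all of naturality reduces to four statements: $(\ast_{f})$ and its second-slot mirror, for $f=z\colon 0\to 2$ and for $f=z^{t}\colon 2\to 0$.

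Three of the four are in hand: the $p=1$ case of the second identity of \cref{lem:braidingbubbles} is $(\ast_{z^{t}})$, and the $p=1$ case of its first identity is the second-slot mirror of $(\ast_{z})$; \cref{lem:braidendtl} (eq.~\eqref{eq:lemme2a}) gives the same two for the $e_{i}$. The new check is $(\ast_{z})$, that is, $\com{2}{m}\circ(\tens{z}{1_{m}})=\tens{1_{m}}{z}$ for all $m$. I would prove it by putting $X=\com{2}{m}\circ(\tens{z}{1_{m}})$ and $Y=\tens{1_{m}}{z}$ in $\Hom(m,m+2)$: using bifunctoriality, $z\circ z^{t}=e_{1}(2)$, and \cref{lem:braidendtl} one gets $X\circ(\tens{z^{t}}{1_{m}})=\com{2}{m}\circ e_{1}(m+2)=e_{m+1}(m+2)\circ\com{2}{m}$, while substituting $\tens{z^{t}}{1_{m}}=(\tens{1_{m}}{z^{t}})\circ\com{2}{m}$ (the second identity of \cref{lem:braidingbubbles} again) gives $Y\circ(\tens{z^{t}}{1_{m}})=(\tens{1_{m}}{(z\circ z^{t})})\circ\com{2}{m}=e_{m+1}(m+2)\circ\com{2}{m}$ as well. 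Composing both with $\tens{z}{1_{m}}$ on the right and using $z^{t}\circ z=\beta\cdot 1_{0}$ turns the resulting equality into $\beta X=\beta Y$, hence $X=Y$ since $\beta=-q-q^{-1}\neq 0$ (here $q$ not a root of unity is used). The last mirror statement follows by transposing: transposition is a contravariant involution of $\ctl$ fixing each $1_{n}$ and $e_{i}$, so $\T{i}^{t}=\T{i}$ and $\com{r}{s}^{t}=\com{s}{r}$ (reverse the word in \eqref{eq:recursiveCom}), and the transpose of $\com{2}{m}\circ(\tens{z}{1_{m}})=\tens{1_{m}}{z}$ is exactly what is wanted. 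Combining the four checks with the closure properties gives naturality, and the proposition follows.

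I expect the main obstacle to be the naturality part, and within it two things: the reduction to cups and caps, and especially the proof that $(\ast_{f})$ is stable under $\tens{-}{-}$, which is where the hexagon identity gets used; the remaining explicit computation $(\ast_{z})$ is short once one composes with $\tens{z^{t}}{1_{m}}$ so as to reuse \cref{lem:braidendtl,lem:braidingbubbles}.
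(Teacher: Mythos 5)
Your outline coincides with the paper's: invertibility from that of the $\T i$, the hexagon identities from \cref{lem:hexagdiag}, and naturality reduced — via the factorization of a diagram through its through-lines into cups $z$, caps $z^t$ and elements of the $\End$'s — to the identities of \cref{lem:braidendtl,lem:braidingbubbles} combined with \eqref{eq:ctlhexa1}--\eqref{eq:ctlhexa2}. Your repackaging of this reduction (closure of naturality under linear combinations, composition and $\tens{-}{-}$, the latter using the hexagon) is sound, and the transposition trick $\com{r}{s}^{t}=\com{s}{r}$ correctly converts the cup-in-first-slot identity into the cap-in-second-slot one.

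There is, however, a genuine gap in the one computation you add, the check $(\ast_z)$: $\com{2}{m}\circ(\tens{z}{1_m})=\tens{1_m}{z}$. Your argument only shows $\beta X=\beta Y$ and then divides by $\beta$, so it proves nothing when $\beta=0$, and your parenthetical ``$q$ not a root of unity is used'' imports a hypothesis that the proposition does not have: the braiding on $\ctl(\beta)$ is asserted for every $q\in\mathbb C^{\times}$, and the paper later relies on it precisely at roots of unity (the monodromy examples at $q=e^{2\pi i/3}$ and at $q=\sqrt{-1}$, i.e.\ $\beta=0$, and the whole root-of-unity discussion of $\modtl$). The identity in question does hold for all $\beta$, but it must be proved the way the paper proves its siblings in \cref{lem:braidingbubbles}: verify the base case $\com{2}{1}(\tens{z}{1_1})=\tens{1_1}{z}$ by direct expansion of $\com{2}{1}=q\cdot 1_3+e_1+e_2+q^{-1}e_1e_2$ — the unwanted term comes with coefficient $q+\beta+q^{-1}=0$, identically in $q$ — and then induct on $m$ using the hexagon, exactly as in that lemma. (Alternatively, your cancellation argument gives the identity for all non-root-of-unity $q$, and since both sides are Laurent polynomials in $q^{1/2}$ with diagram coefficients one could extend by polynomiality; but some such step is needed and is missing as written.) With $(\ast_z)$ established for all $\beta$, the rest of your argument goes through and is essentially the paper's proof.
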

\begin{proof}
	The category $\ctl$ will be braided if the components $\com{r}{s}$ are natural isomorphisms satisfying the hexagon axioms. Lemma \ref{lem:hexagdiag} has already showed that the proposed expressions for the components $\com{r}{s}$ satisfy the hexagon axioms. Moreover, since $\T{i}(n)$ is invertible, so are the morphisms $\com{r}{s}$. There remains only the naturality condition to prove. It states the following: For all pairs $(n,m)$ and $(r,s)$ in $\Ob \ctl\times\ctl$ and all pairs of morphisms $(c,d)\in\Hom(n,r)\times\Hom(m,s)$, the following diagram commutes
\begin{equation*}
\begin{tikzpicture}[scale = 1/3, baseline={(current bounding box.center)}]
	\node (gh) at (0,6) {$\tens{n}{m}$};
	\node (gb) at (0,0) {$\tens{m}{n}$};
	\node (dh) at (6,6) {$\tens{r}{s}$};
	\node (db) at (6,0) {$\tens{s}{r}$};
	\draw[->] (gh) to node [above] {$\tens{c}{d}$} (dh);
	\draw[->] (gb) to node [above] {$\tens{d}{c}$} (db);
	\draw[->] (gh) to node [left] {$\com{n}{m}$} (gb);
	\draw[->] (dh) to node [right] {$\com{r}{s}$} (db);
\end{tikzpicture} 
\end{equation*}
Since the $\Hom$-spaces are spanned by diagrams and that the $\com{r}{s}$ are bilinear, it is sufficient to prove that 
\begin{equation}\label{eq:naturality}(\tens{d}{c})\com{n}{m}=\com{r}{s}(\tens{c}{d})\end{equation}
for any $(r,n)$-diagram $c$ and $(s,m)$-diagram $d$.

Consider then $c \in \Hom_{\ctl}(n,r)$ a diagram having $k$ through lines, that is, precisely $k$ nodes on the left side of $c$ are connected to $k$ nodes on its right side. Any such diagram can be expressed as 
	\begin{equation}\label{eq:factorisation}
		c = a (\tens{1_{k}}{z^{\otimes \frac{r-k}{2}}})(\tens{1_{k}}{ (z^{t} )^{\otimes \frac{n-k}{2}}})b,
	\end{equation}
where $a \in \End{r}$ and $b \in \End{n}$. The hexagon identities and lemma \ref{lem:braidingbubbles} give
		\begin{align*}
			\com{r}{s}(\tens{1_{k}}{\tens{z^{\otimes \frac{r-k}{2}}}{1_{s}}}) 	&= (\tens{\com{k}{s}}{1_{r-k}})
(\tens{1_{k}}{\com{r-k}{s}})(\tens{1_{k}}{\tens{z^{\otimes \frac{r-k}{2}}}{1_{s}}}) \\
			&= (\tens{\com{k}{s}}{1_{r-k}})
(\tens{1_k}{ \com{r-k}{s}(\tens{z^{\otimes \frac{r-k}{2}}}{1_{s}})})\\
			&=  (\tens{\com{k}{s}}{1_{r-k}})(\tens{\tens{1_{k}}{1_{s}}}{z^{\otimes \frac{r-k}{2}}})\\
			&= (\tens{\tens{1_{s}}{1_{k}}}{z^{\otimes \frac{r-k}{2}}})\com{k}{s}.
		\end{align*}
The same arguments also give
		\begin{equation*}
			\com{k}{s}(\tens{\tens{1_{k}}{ (z^{t} )^{\otimes \frac{n-k}{2}}}}{1_{s}}) = (\tens{1_{s}}{\tens{1_{k}}{ (z^{t} )^{\otimes \frac{n-k}{2}}}}) \com{n}{s}.
		\end{equation*}
Using lemma \ref{lem:braidendtl}, it follows that
		\begin{align*}
			\com{r}{s}(\tens{c}{1_{s}}) & = \com{r}{s} (\tens{a}{1_{s}})(\tens{1_{k}}{\tens{z^{\otimes \frac{r-k}{2}}}{1_{s}}})(\tens{\tens{1_{k}}{ (z^{t} )^{\otimes \frac{n-k}{2}}}}{1_{s}}) (\tens{b}{1_{s}})\\
								& = (\tens{1_{s}}{a})\com{r}{s}(\tens{1_{k}}{\tens{z^{\otimes \frac{r-k}{2}}}{1_{s}}})(\tens{\tens{1_{k}}{ (z^{t} )^{\otimes \frac{n-k}{2}}}}{1_{s}}) (\tens{b}{1_{s}})\\
								& = (\tens{1_{s}}{a}) (\tens{\tens{1_{s}}{1_{k}}}{z^{\otimes \frac{r-k}{2}}})\com{k}{s}(\tens{\tens{1_{k}}{ (z^{t} )^{\otimes \frac{n-k}{2}}}}{1_{s}}) (\tens{b}{1_{s}})\\
								& = (\tens{1_{s}}{a}) (\tens{\tens{1_{s}}{1_{k}}}{z^{\otimes \frac{r-k}{2}}})(\tens{1_{s}}{\tens{1_{k}}{ (z^{t} )^{\otimes \frac{n-k}{2}}}}) \com{n}{s} (\tens{b}{1_{s}})\\
								& = (\tens{1_{s}}{c})\com{n}{s}.
		\end{align*}
The same steps are used to prove that any diagram $d\in\Hom(m,s)$ with $\ell$ through lines satisfies $\com{n}{s}(\tens{1_n}{d})=(\tens{d}{1_n})\com{n}{m}$. Then, for any $(r,n)$-diagram $c$ and $(s,m)$-diagram $d$, these identities give
\begin{align*}\com{r}{s}(\tens{c}{d})&
   =\com{r}{s}(\tens{c}{1_s})(\tens{1_n}{d})
   =(\tens{1_s}{c})\com{n}{s}(\tens{1_n}{d})\\
  &=(\tens{1_s}{c})(\tens{d}{1_n})\com{n}{m}=(\tens{d}{c})\com{n}{m}
\end{align*}
which closes the proof.
\end{proof}
Note that with this braiding, $\ctl$ is not symmetric. In general, the element $ \com{n}{m}\circ\com{m}{n} \in \End(n+m)$ is not even central. For instance, $\com{1}{2}=q\cdot 1_3+(e_1+e_2)+q^{-1}e_2e_1$ and $\com{2}{1}=q\cdot 1_3+(e_1+e_2)+q^{-1}e_1e_2$ and thus
$$ \com{2}{1}\circ\com{1}{2} e_{1} - e_{1} \com{2}{1}\circ\com{1}{2} = q^{-2}(q - q^{-1})(e_{1}e_{2}- e_{2}e_{1}) \neq 0.$$
We shall come back to the morphism $\com{r}{s}\circ\com{s}{r}$ in \cref{sec:etars}.

\end{subsection}

%%%%%%%%%%%%%%%%%%%%%%%%%
%% the twist           %%
%%%%%%%%%%%%%%%%%%%%%%%%%

\begin{subsection}{The twist $\theta$}\label{sub:theta}The previous section established that the category $\ctl$ is braided. It has even more structure: It has a twist.

A twist $\tw{ }$ on a braided category $\cat$ is a natural isomorphism of the identity functor whose components$\{\tw A\in \End (A),A\in \Ob \cat \}$ satisfy
\begin{equation}\label{eq:twistCondition}
\tw{A\otimes B}=\com BA\circ\com AB (\tw A\otimes \tw B),\qquad\textrm{for all $A$ and $B\in\Ob \cat$}.
\end{equation}
This section constructs such a natural isomorphism for the Temperley-Lieb category $\ctl$. The first step toward this goal is to solve a subset of these equations, namely those that have either $r$ or $s$ equal to $1$. The next lemma is a corollary of \cref{lem:braidendtl0}.
\begin{Lem}The morphisms $\T i$ satisfy
\begin{equation}\label{eq:tiCoro}
\T i\T{i+1}\dots \T{n-1}\T n\T{n-1}\dots\T{i+1}\T{i} =\T n\T{n-1}\dots\T{i+1}\T i\T{i+1}\dots\T{n-1}\T n.
\end{equation}
\end{Lem}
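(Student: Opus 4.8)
The statement \eqref{eq:tiCoro} is a ``long-distance'' version of the braid relation \eqref{eq:braid}, and the plan is to derive it from \eqref{eq:braid} together with the far-commutativity $\T i\T j=\T j\T i$ for $|i-j|>1$ by induction on $n-i$. The base case $n=i+1$ is exactly \eqref{eq:braid}. For the inductive step, suppose the identity holds for the word ending at level $n-1$; I want to produce the identity ending at level $n$.

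Concretely, write $W_{i,n}=\T i\T{i+1}\cdots\T{n-1}\T n\T{n-1}\cdots\T{i+1}\T i$ for the left-hand side and $W'_{i,n}=\T n\T{n-1}\cdots\T i\cdots\T{n-1}\T n$ for the right-hand side. First I would peel off the outermost letters: $W_{i,n}=\T i\cdot\bigl(\T{i+1}\cdots\T n\cdots\T{i+1}\bigr)\cdot\T i = \T i\,W_{i+1,n}\,\T i$, and by the inductive hypothesis $W_{i+1,n}=W'_{i+1,n}=\T n\T{n-1}\cdots\T{i+1}\cdots\T{n-1}\T n$. So it remains to show $\T i\,W'_{i+1,n}\,\T i = W'_{i,n}$, i.e.
\begin{equation*}
\T i\bigl(\T n\cdots\T{i+1}\cdots\T n\bigr)\T i \;=\; \T n\cdots\T{i+1}\T i\T{i+1}\cdots\T n .
\end{equation*}
On the left, the initial block $\T n\T{n-1}\cdots\T{i+2}$ consists of letters $\T j$ with $j\ge i+2$, hence commutes past the leading $\T i$; likewise the trailing block $\T{i+2}\cdots\T n$ commutes past the final $\T i$. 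This moves the two stray $\T i$'s inward, leaving $\T n\cdots\T{i+2}\,(\T i\T{i+1}\T i)\,\T{i+2}\cdots\T n$. Now apply the braid relation \eqref{eq:braid} to the central triple: $\T i\T{i+1}\T i=\T{i+1}\T i\T{i+1}$. Substituting gives $\T n\cdots\T{i+2}\,\T{i+1}\T i\T{i+1}\,\T{i+2}\cdots\T n = \T n\cdots\T{i+2}\T{i+1}\,\T i\,\T{i+1}\T{i+2}\cdots\T n = W'_{i,n}$, as desired. (The same argument, performed symmetrically, also re-derives it starting from $W'_{i,n}$.)

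The only real bookkeeping obstacle is making the index ranges in the two commutation steps unambiguous — one must check that every letter in the prefix $\T n\cdots\T{i+2}$ and the suffix $\T{i+2}\cdots\T n$ really has index at least $i+2$, so that the hypothesis $|j-i|>1$ of far-commutativity applies to each of them, and that the induction is genuinely on $n-i$ (not on $n$ or $i$ alone). Since \cref{lem:braidendtl0} supplies \eqref{eq:braid} and the relation $\T i\T j=\T j\T i$ for $|i-j|>1$ follows immediately from the definition $\T i(n)=\tens{1_{i-1}}{\tens{\com11}{1_{n-i-1}}}$ (the two factors act on disjoint nodes when $|i-j|>1$), all ingredients are in place and the argument is a short induction; there is no serious difficulty beyond the indexing care just noted.
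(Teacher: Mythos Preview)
Your proof is correct and is essentially the mirror image of the paper's own argument: both are short inductions using only the braid relation \eqref{eq:braid} and far-commutativity, but the paper peels off the \emph{top} index (writing $W_{i,n}=\T n\,W_{i,n-1}\,\T n$ after one braid move and commuting the $\T n$'s outward) whereas you peel off the \emph{bottom} index (writing $W_{i,n}=\T i\,W_{i+1,n}\,\T i$ and commuting the $\T i$'s inward). One small wording slip: your sentence ``suppose the identity holds for the word ending at level $n-1$'' does not match what you actually use, since $W_{i+1,n}$ still reaches level $n$; the correct phrasing is that the inductive hypothesis is on the smaller value of $n-i$.
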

\begin{proof}\cref{lem:braidendtl0} provides the cases $\T i\T{i+1}\T i=\T{i+1}\T i\T{i+1}$ for all $i\geq 1$. Then, for a fixed $i$, induction on $n$ gives
\begin{align*}
\T i\T{i+1}\dots \T{n-1}\T n\T{n-1}\dots\T{i+1}\T i
&=\T i\T{i+1}\dots [\T{n-1}\T n\T{n-1}]\dots \T{i+1}\T i\\
&=\T i\T{i+1}\dots [\T{n}\T{n-1}\T{n}]\dots \T{i+1}\T i\\
&=\T n[\T i\T{i+1}\dots \T{n-2}\T{n-1}\T{n-2}\dots\T{i+1}\T i]\T n\\
&=\T n\T{n-1}\dots\T{i+1}\T i\T{i+1}\dots\T{n-1}\T n.
\end{align*}
\end{proof}
The solution of \eqref{eq:twistCondition} when either $r$ or $s$ is $1$ is given by a family of central elements $c_n$ whose main properties are proved in \cref{app:cn}. (To our knowledge, as an element of $\tl n$, the element $\z n$ appeared first in Martin's book (see section 6.1 of \cite{Martin}), but was actually defined much earlier by Chow \cite{Chow} to study braid groups.)
\begin{Lem}\label{lem:firstTheta}The central elements $\z n=q^{3n/2}(\T{n-1}\dots \T2\T1)^n=q^{3n/2}(\T1\T2\dots\T{n-1})^n$, $n\geq 2$, together with $\z 0=1_0$ and $\z 1=q^{3/2}1_1$ satisfy
\begin{equation}\label{eq:firstTheta}
\z{n+1}=\com 1n\circ\com n1(\z n\otimes \z 1)\qquad \textrm{and}\qquad
\z{n+1}=\com n1\circ\com 1n(\z 1\otimes \z n), \qquad n\geq 0.
\end{equation}
\end{Lem}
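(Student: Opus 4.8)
The goal is to verify the two recursions in \eqref{eq:firstTheta}. By symmetry (the second follows from the first by applying the transpose anti-automorphism, which exchanges $\com 1n$ with $\com n1$ and fixes each $\z n$), it suffices to establish $\z{n+1}=\com 1n\circ\com n1(\z n\otimes \z 1)$. The plan is to unfold everything in terms of the generators $\T i$. Using \cref{lem:hexagdiag} (i.e. \eqref{eq:commutor}), one has $\com n1(=\com n1(n+1))=\T1\T2\cdots\T n$ and $\com 1n(=\com 1n(n+1))=\T n\T{n-1}\cdots\T1$ as elements of $\End(n+1)$. On the other hand $\z n\otimes\z 1=q^{3/2}(\z n\otimes 1_1)$, and since $\z n$ acts on the first $n$ nodes, $\z n\otimes 1_1=q^{3n/2}(\T{n-1}\cdots\T1)^n$ viewed inside $\End(n+1)$ (all the $\T i$ here carry indices $i\le n-1$). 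So the right-hand side of the desired identity is
\begin{equation*}
q^{3/2}q^{3n/2}\,(\T n\T{n-1}\cdots\T1)(\T1\T2\cdots\T n)(\T{n-1}\cdots\T2\T1)^n,
\end{equation*}
and I must show this equals $\z{n+1}=q^{3(n+1)/2}(\T n\T{n-1}\cdots\T2\T1)^{n+1}$. After cancelling the prefactor $q^{3(n+1)/2}$, the claim becomes the purely braid-group identity
\begin{equation*}
(\T n\T{n-1}\cdots\T1)(\T1\T2\cdots\T n)(\T{n-1}\cdots\T1)^n=(\T n\T{n-1}\cdots\T1)^{n+1}.
\end{equation*}

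\textbf{Key steps.} First I would record the two alternate forms of $\z m$, namely $(\T{m-1}\cdots\T1)^m=(\T1\cdots\T{m-1})^m$, which is exactly the content quoted from \cref{app:cn}; this flexibility is what makes the manipulation go through. The crucial lemma to deploy is the one proved just above the statement, equation \eqref{eq:tiCoro}:
\begin{equation*}
\T i\T{i+1}\cdots\T{n-1}\T n\T{n-1}\cdots\T{i+1}\T i=\T n\T{n-1}\cdots\T{i+1}\T i\T{i+1}\cdots\T{n-1}\T n .
\end{equation*}
The strategy is to massage $(\T n\cdots\T1)(\T1\cdots\T n)$ into a form that can be absorbed into a power of $\T n\cdots\T1$. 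Concretely, I expect to argue by induction on $n$: writing $W_n=\T n\T{n-1}\cdots\T1$, one wants $W_n(\T1\T2\cdots\T n)(\T{n-1}\cdots\T1)^n=W_n^{n+1}$, i.e. $(\T1\cdots\T n)(\T{n-1}\cdots\T1)^n=W_n^n$. I would prove the cleaner statement $(\T1\T2\cdots\T n)(\T{n-1}\T{n-2}\cdots\T1)^n=(\T n\T{n-1}\cdots\T1)^n$ by induction, peeling off the leftmost block: after using \eqref{eq:tiCoro} (or repeated applications of the braid relation \eqref{eq:braid} and the far-commutativity $\T i\T j=\T j\T i$ for $|i-j|>1$) to slide $\T1\cdots\T n$ past one copy of $\T{n-1}\cdots\T1$ and turn it into $\T n$ times a shorter word of the same shape, one reduces the exponent and the index simultaneously. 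This is precisely the standard manipulation showing that the full twist $(\sigma_1\cdots\sigma_{n})^{n+1}$ equals $(\sigma_{n}\cdots\sigma_1)(\sigma_1\cdots\sigma_n)(\sigma_{n-1}\cdots\sigma_1)^n$ in the braid group, so all identities needed are already available from \cref{lem:braidendtl0} and \eqref{eq:tiCoro}.

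\textbf{Main obstacle.} The only real work is the bookkeeping in the braid-word induction: keeping track of which index ranges each block of $\T i$'s covers as one slides words past each other, and making sure each application of \eqref{eq:tiCoro} has the correct index $i$. There is no conceptual difficulty — it is the classical computation identifying $\z n$ with a power of the ``Coxeter-type'' element $\T1\cdots\T{n-1}$ — but it is easy to make an off-by-one slip, so I would set up the notation $W_n=\T n\cdots\T1$ and $U_n=\T1\cdots\T n$ carefully at the outset, prove the auxiliary identity $U_n W_{n-1}^{\,n}=W_n^{\,n}$ (where $W_{n-1}=\T{n-1}\cdots\T1$) by induction on $n$ using \eqref{eq:tiCoro}, and then read off \eqref{eq:firstTheta} by multiplying on the left by $W_n$ and restoring the powers of $q$. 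The boundary cases $n=0$ ($\z 1=q^{3/2}1_1=\com 10\circ\com 01(\z 0\otimes\z 1)$, trivially since $\com 10=\com 01=1_1$) and $n=1$ ($\z 2=q^3\T1^2=\com 11\circ\com 11(\z 1\otimes\z 1)=\T1^2(q^{3/2}1\otimes q^{3/2}1)$) are checked directly and anchor the induction.
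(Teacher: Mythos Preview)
Your proposal is correct and follows essentially the same route as the paper: check $n=0,1$ by hand, unfold $\com{1}{n}\com{n}{1}(\z n\otimes\z 1)$ as $q^{3(n+1)/2}(\T n\cdots\T1)(\T1\cdots\T n)(\T{n-1}\cdots\T1)^n$, and then use \eqref{eq:tiCoro} repeatedly to telescope this into $(\T n\cdots\T1)^{n+1}$. The paper carries out exactly this telescoping (not an induction on $n$, but the step-by-step rewriting $(\T_k\cdots\T_n)(\T_{n-1}\cdots\T_1)\to(\T_n\cdots\T_1)(\T_{k+1}\cdots\T_n)$ within the fixed word), which is what your ``sliding'' description amounts to.

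One small correction: the symmetry you invoke for the second identity is not the transpose anti-automorphism (left--right reflection). That one does exchange $\com{1}{n}\leftrightarrow\com{n}{1}$ and fix $\z n$, but being an \emph{anti}-automorphism it sends the first identity to $\z{n+1}=(\z n\otimes\z 1)\com{1}{n}\com{n}{1}$, not to the second. What you want is the top--bottom reflection $e_i\mapsto e_{n+1-i}$, which is a genuine automorphism, also swaps $\com{1}{n}\leftrightarrow\com{n}{1}$, sends $\z n\otimes\z 1$ to $\z 1\otimes\z n$, and fixes $\z{n+1}$; applying it to the first identity yields the second directly. The paper sidesteps this by simply redoing the computation with the alternate form $\z n=q^{3n/2}(\T_1\cdots\T_{n-1})^n$.
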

\begin{proof} Note first that the two equations are trivial for $n=0$ and, for $n=1$, they both give
$$\com 11\circ\com 11(\z 1\otimes \z 1)=q^3(\com 11)^2
\begin{tikzpicture}[scale = 1/4, baseline={(current bounding box.center)}]
		\draw[thick] (0,0) -- (2,0);
		\draw[thick] (0,1) -- (2,1);
		\bord{0}{0}{2}
		\bord{2}{0}{2}
		\node (fake) at (0,-0.5) {$\ $}; 
\end{tikzpicture}=q^3(\com 11)^2=q^3(\T 1)^2=\z 2$$
as claimed. Suppose now that the $\z k$, $k\leq n$, all satisfy both equations. Then 
\begin{align*}
q^{-3(n+1)/2}\com 1n\circ\com n1&(\z n\otimes \z 1)
=(\T n\dots \T2\T1)(\T1\T2\dots \T n)(\T{n-1}\dots \T2\T1)^n\\
&=(\T n\dots \T2\T1)[\T1\T2\dots \T{n-1}\T n\T{n-1}\dots \T2\T1](\T{n-1}\dots \T2\T1)^{n-1}\\
&=(\T n\dots \T2\T1)[(\T n\dots\T2\T1)(\T2\T3\dots\T{n-1}\T n)](\T{n-1}\dots \T2\T1)^{n-1}\\
&=(\T n\dots \T2\T1)^2(\T2\T3\dots\T{n-1}\T n)(\T{n-1}\dots \T2\T1)^{n-1}\\
&= \dots =(\T n\dots \T2\T1)^{n-1}(\T{n-1}\T n)(\T{n-1}\dots \T2\T1)^2\\
&=(\T n\dots \T2\T1)^{n-1}[\T{n-1}\T n\T{n-1}](\T{n-2}\dots \T2\T1)(\T{n-1}\dots \T2\T1)^1\\
&=(\T n\dots \T2\T1)^{n-1}[\T{n}\T{n-1}\T{n}](\T{n-2}\dots \T2\T1)(\T{n-1}\dots \T2\T1)^1\\
&=(\T n\dots \T2\T1)^{n+1}=q^{-3(n+1)/2}\z{n+1}
\end{align*}
where the identity \eqref{eq:tiCoro} was used repeatedly to transform the sequences of generators between square brackets. Since, by \cref{thm:cn} (d), the central element $\z n$ can be written both as
$q^{3n/2}(\T{n-1}\dots \T2\T1)^n$ and $q^{3n/2}(\T1\T2\dots\T{n-1})^n$, a similar argument using the latter form may be used to show that the family $\{\z n\}$ also solves the second identity in \eqref{eq:firstTheta}.
\end{proof}
	The diagram representing $\z{n}$ is quite convoluted, but we nevertheless illustrate the first identity in \eqref{eq:firstTheta} for $n=3$ and with the symbol $\sim$ meaning ``equal up to a power of $q$":
	\begin{equation}
		\z{4} \sim \;
	\begin{tikzpicture}[scale = 1/3, baseline = {(current bounding box.center)}]
	%Repeater
	\foreach \r in {0,3,6,9}{
	%Twisted block
		%Lines under
		\foreach \s in {1,2,3}{
		\draw[black, line width = 1pt] (\r,\s) .. controls (\r + 1, \s) and (\r +2, \s - 1) .. (\r + 3, \s -1);
		}
		%Line over
		\draw[white, line width = 3pt] (\r,0) .. controls (\r+1,0) and (\r + 2, 3) .. (\r+3,3);
		\draw[black, line width = 1pt] (\r,0) .. controls (\r+1,0) and (\r + 2, 3) .. (\r+3,3);
		%Bords
		\foreach \s in {0,...,3}{
		\fill[black] (\r,\s) circle (1/5);
		\fill[black] (\r+3,\s) circle (1/5);
		}
	};
	\end{tikzpicture} \; \sim \;
	\begin{tikzpicture}[scale = 1/3, baseline = {(current bounding box.center)}]
	%Repeater
	\foreach \r in {0,3,6}{
	%Twisted block
		%Lines under
		\foreach \s in {2,3}{
		\draw[black, line width = 1pt] (\r,\s) .. controls (\r + 1, \s) and (\r +2, \s - 1) .. (\r + 3, \s -1);
		}
		%Line over
		\draw[white, line width = 3pt] (\r,1) .. controls (\r+1,1) and (\r + 2, 3) .. (\r+3,3);
		\draw[black, line width = 1pt] (\r,1) .. controls (\r+1,1) and (\r + 2, 3) .. (\r+3,3);
		%Bords
		\foreach \s in {0,...,3}{
		\fill[black] (\r,\s) circle (1/5);
		\fill[black] (\r+3,\s) circle (1/5);
		}
	};
	\draw[black, line width = 1pt] (0,0) -- (9,0);
	%Commutors
	%Lines under
	\foreach \r in {1,2,3}{
		\draw[black, line width = 1pt] (-6,\r) .. controls (-5,\r) and (-4,\r -1) .. (-3,\r-1);
	};
	%Line over
		\draw[white, line width = 3pt] (-6,0) .. controls (-5,0) and (-4,3) .. (-3,3);
		\draw[black, line width = 1pt] (-6,0) .. controls (-5,0) and (-4,3) .. (-3,3);
	%Line Under
		\draw[black, line width = 1pt] (-3,3) .. controls (-2,3) and (-1,0) .. (0,0);
\foreach \r in {0,1,2}{
		\draw[white, line width = 3pt] (-3,\r) .. controls (-2,\r) and (-1,\r +1) .. (0,\r+1);
		\draw[black, line width = 1pt] (-3,\r) .. controls (-2,\r) and (-1,\r +1) .. (0,\r+1);
	};
	%Bords
		\foreach \s in {0,...,3}{
		\fill[black] (-3,\s) circle (1/5);
		\fill[black] (0,\s) circle (1/5);
		\fill[black] (-6,\s) circle (1/5);
		}
	\end{tikzpicture} \; \sim \;
	(\com{1}{3}\com{3}{1})(\z{3} \otimes \z{1}).
	\end{equation}

\smallskip

The final identification of the twist $\{\tw n,n\in\mathbb N_0\}$ requires yet another technical lemma.
\begin{Lem}\label{lem:comSR}The commutor $\{\com rs\}$ satisfies
\begin{align}
\com{s+1}{r-1}(\com{s}1\otimes 1_{r-1})&=\com sr(1_s\otimes\com 1{r-1}),\label{eq:comSR1}\\
\com{r-1}{s+1}(1_{r-1}\otimes \com 1s) &=\com rs(\com{r-1}1\otimes 1_s).\label{eq:comSR2}
\end{align}
for all $r$ and $s\in\mathbb N_0$ such that all indices in these identities are non-negative.
\end{Lem}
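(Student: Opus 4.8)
The plan is to obtain both identities directly from the two hexagon relations \eqref{eq:ctlhexa1}--\eqref{eq:ctlhexa2} together with \cref{lem:braidendtl}, without ever unfolding $\com rs$ into a word in the $\T i(n)$. The observation that makes this work is that, since $\otimes$ acts on objects of $\ctl$ by addition, the source $s\otimes 1$ and target $1\otimes s$ of $\com s1$ are \emph{the same} object $s+1$; hence $\com s1\in\End(s+1)$, and likewise $\com{r-1}1\in\End(r)$ and $\com 1s\in\End(s+1)$. These ``braidings viewed as endomorphisms'' are therefore legitimate inputs for \cref{lem:braidendtl}, which lets $\com{n}{m}$ be slid past a tensor product of endomorphisms.

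First I would prove \eqref{eq:comSR1}. Applying \cref{lem:braidendtl} with $n=s+1$, $m=r-1$, $f=\com s1\in\End(s+1)$ and $g=1_{r-1}$ gives
\[
\com{s+1}{r-1}\,(\tens{\com s1}{1_{r-1}})=(\tens{1_{r-1}}{\com s1})\,\com{s+1}{r-1}.
\]
Then I expand $\com{s+1}{r-1}$ by \eqref{eq:ctlhexa2} with the splitting $s+1=s+1$ (that is, $u=s$, $v=1$, $w=r-1$), and separately expand $\com sr$ by \eqref{eq:ctlhexa1} with $r=(r-1)+1$ (that is, $n=s$, $m=r-1$, $k=1$):
\[
\com{s+1}{r-1}=(\tens{\com s{r-1}}{1_1})(\tens{1_s}{\com 1{r-1}}),
\qquad
\com sr=(\tens{1_{r-1}}{\com s1})(\tens{\com s{r-1}}{1_1}).
\]
Substituting the first of these into the previous display and recognising $\com sr$ as the product of the first two factors yields $\com{s+1}{r-1}(\tens{\com s1}{1_{r-1}})=\com sr(\tens{1_s}{\com 1{r-1}})$, which is \eqref{eq:comSR1}.

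The identity \eqref{eq:comSR2} follows by the mirror argument, run from its right-hand side: \cref{lem:braidendtl} with $n=r$, $m=s$, $f=\com{r-1}1\in\End(r)$ and $g=1_s$ gives $\com rs(\tens{\com{r-1}1}{1_s})=(\tens{1_s}{\com{r-1}1})\,\com rs$; next \eqref{eq:ctlhexa2} with the splitting $r=(r-1)+1$ expands $\com rs=(\tens{\com{r-1}s}{1_1})(\tens{1_{r-1}}{\com 1s})$, and \eqref{eq:ctlhexa1} with $s+1=s+1$ reassembles the first two resulting factors into $\com{r-1}{s+1}=(\tens{1_s}{\com{r-1}1})(\tens{\com{r-1}s}{1_1})$, producing $\com{r-1}{s+1}(\tens{1_{r-1}}{\com 1s})$.

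I do not anticipate a genuine obstacle here: once one notices that $\com s1$, $\com{r-1}1$ and $\com 1s$ are admissible endomorphisms for \cref{lem:braidendtl}, each identity is just three rewrites. (A brute-force route also exists --- unfold every term into a product of $\T i(n)$ via \eqref{eq:recursiveCom} and push the words around with the braid relation \eqref{eq:braid} and far commutativity from \cref{lem:braidendtl0} --- but it is considerably messier.) The only points needing care are choosing the correct decompositions of the indices when invoking the hexagon relations, and the degenerate boundary case $r=1$, in which both \eqref{eq:comSR1} and \eqref{eq:comSR2} collapse to tautologies since $\com{n}{0}=1_n$.
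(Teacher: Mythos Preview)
Your proof is correct and is in fact cleaner than the paper's. Both arguments share the same opening move: the paper invokes the naturality \eqref{eq:naturality} of $\eta$ to pass $\com s1$ through $\com{s+1}{r-1}$, while you invoke \cref{lem:braidendtl}, which is the special case of naturality for endomorphisms and is all that is needed here since $\com s1\in\End(s+1)$. After that the approaches diverge. The paper unfolds $(1_{r-1}\otimes\com s1)\com{s+1}{r-1}$ into an explicit word in the $\T i$ via \eqref{eq:recursiveCom} and then pushes the leftmost block of $\T i$'s rightward using far commutativity, reassembling the result as $\com sr(1_s\otimes\com 1{r-1})$ by recognising the product pattern. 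You instead expand $\com{s+1}{r-1}$ by \eqref{eq:ctlhexa2} and then absorb the first two factors back into $\com sr$ by \eqref{eq:ctlhexa1}, never touching a single $\T i$. Your route stays entirely at the level of the hexagon axioms and therefore would work verbatim in any strict braided category, whereas the paper's computation exploits the specific $\T i$ presentation available in $\ctl$; on the other hand the paper's explicit manipulation makes the diagrammatic picture (which it draws immediately after the proof) transparent.
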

\begin{proof}The proof of the first identity proceeds as follows and uses the explicit form \eqref{eq:recursiveCom} of the commutor:
\begin{align*}
\com{s+1}{r-1}&(\com s1\otimes 1_{r-1})
=(1_{r-1}\otimes \com s1)\com{s+1}{r-1}\qquad \textrm{by the naturality \eqref{eq:naturality} of $\comNu$}\\
&=(1_{r-1}\otimes (\T1\T2\dots \T s))\cdot \prod_{i=s+1}^1\prod_{j=0}^{r-2}\T{i+j}\\
&=(\T r\T{r+1}\dots\T{r+s-1})[(\T{r-1}\dots\T2\T1)(\T r\dots \T3\T2)\dots (\T{r+s-2}\dots\T{s+1}\T s)]
(\T{r+s-1}\dots\T{s+2}\T{s+1})\\
\intertext{and then moving the leftmost $\T r,\T{r+1}, \dots$, and $\T{r+s-1}$ to their rightmost possible positions within the square brackets}
&=[(\T r\T{r-1}\dots\T2\T1)(\T{r+1}\T r\dots \T3\T2)\dots (\T{r+s-1}\T{r+s-2}\dots\T{s+1}\T s)]
(\T{r+s-1}\dots\T{s+2}\T{s+1})\\
&=\prod_{i=s}^1\prod_{j=0}^{r-1}\T{i+j}\cdot (1_s\otimes \com1{r-1})=\com sr(1_s\otimes \com1{r-1}).
\end{align*}
The second identity is obtained from the first by the substitutions $r-1\rightarrow s, s+1\rightarrow r$.
\end{proof}
In terms of diagrams, the identity \eqref{eq:comSR1} (with $s = r= 3$) is
	\begin{equation}
		\com{4}{2}( \com{3}{1} \otimes 1_{2}) = \;
		\begin{tikzpicture}[scale = 1/3, baseline = {(current bounding box.center)}]
\foreach \r in {0}{
	\foreach \s in {4,5}{
	\draw[black, line width = 1pt] (\r,\s) .. controls (\r + 1, \s) and (\r +2, \s - 4) .. (\r + 3, \s -4);
	}
	%Line over
	\foreach \s in {0,1,2,3}{
	\draw[white, line width = 3pt] (\r,\s) .. controls (\r + 1, \s) and (\r +2, \s +2 ) .. (\r + 3, \s +2);
	\draw[black, line width = 1pt] (\r,\s) .. controls (\r + 1, \s) and (\r +2, \s +2) .. (\r + 3, \s +2);
	}
	%Bords
	\foreach \s in {0,...,5}{
	\fill[black] (\r,\s) circle (1/5);
	\fill[black] (\r+3,\s) circle (1/5);
	};
};
\foreach \r in {3}{
	\foreach \s in {5}{
	\draw[black, line width = 1pt] (\r,\s) .. controls (\r + 1, \s) and (\r +2, \s - 3) .. (\r + 3, \s -3);
	}
	%Line over
	\foreach \s in {2,3,4}{
	\draw[white, line width = 3pt] (\r,\s) .. controls (\r + 1, \s) and (\r +2, \s +1 ) .. (\r + 3, \s +1);
	\draw[black, line width = 1pt] (\r,\s) .. controls (\r + 1, \s) and (\r +2, \s +1) .. (\r + 3, \s +1);
	}
	%Bords
	\foreach \s in {0,...,5}{
	\fill[black] (\r,\s) circle (1/5);
	\fill[black] (\r+3,\s) circle (1/5);
	};
	\draw[black, line width = 1pt] (3,0) -- (6,0);
	\draw[black, line width = 1pt] (3,1) -- (6,1);
};
\end{tikzpicture} \; = \;
	\begin{tikzpicture}[scale = 1/3, baseline = {(current bounding box.center)}]
\foreach \r in {0}{
	\foreach \s in {3,4,5}{
	\draw[black, line width = 1pt] (\r,\s) .. controls (\r + 1, \s) and (\r +2, \s - 3) .. (\r + 3, \s -3);
	}
	%Line over
	\foreach \s in {0,1,2}{
	\draw[white, line width = 3pt] (\r,\s) .. controls (\r + 1, \s) and (\r +2, \s +3 ) .. (\r + 3, \s +3);
	\draw[black, line width = 1pt] (\r,\s) .. controls (\r + 1, \s) and (\r +2, \s +3) .. (\r + 3, \s +3);
	}
	%Bords
	\foreach \s in {0,...,5}{
	\fill[black] (\r,\s) circle (1/5);
	\fill[black] (\r+3,\s) circle (1/5);
	};
};
\foreach \r in {3}{
	\foreach \s in {1,2}{
	\draw[black, line width = 1pt] (\r,\s) .. controls (\r + 1, \s) and (\r +2, \s - 1) .. (\r + 3, \s -1);
	}
	%Line over
	\foreach \s in {0}{
	\draw[white, line width = 3pt] (\r,\s) .. controls (\r + 1, \s) and (\r +2, \s +2 ) .. (\r + 3, \s +2);
	\draw[black, line width = 1pt] (\r,\s) .. controls (\r + 1, \s) and (\r +2, \s +2) .. (\r + 3, \s +2);
	}
	%Bords
	\foreach \s in {0,...,5}{
	\fill[black] (\r,\s) circle (1/5);
	\fill[black] (\r+3,\s) circle (1/5);
	};
	\draw[black, line width = 1pt] (3,3) -- (6,3);
	\draw[black, line width = 1pt] (3,4) -- (6,4);
	\draw[black, line width = 1pt] (3,5) -- (6,5);
};
\end{tikzpicture}
	\; = \;
	\com{3}{3}(1_{3} \otimes \com{1}{2}).
	\end{equation}	

This lemma simplifies considerably the proof of the existence of the twist.
\begin{Prop}
The morphisms in $\End(n)$ given by the multiplication by $\tw n= \z n,n\in\mathbb N_0$, define a natural isomorphism $\tw{}$ between the identity functor and itself, and is a twist for the commutor $\com{}{}$.
\end{Prop}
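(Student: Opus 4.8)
The plan is to verify the three requirements in sequence: that $\tw n$ is invertible, that $\tw{}$ is natural (it commutes with every morphism in $\End(n)$, hence with every morphism by bilinearity), and that the twist condition \eqref{eq:twistCondition} holds for all pairs of objects. Invertibility is immediate: by the explicit formula $\z n=q^{3n/2}(\T{n-1}\dots\T1)^n$ and the fact that each $\T i$ is invertible (see \eqref{eq:theTi}), $\z n$ is a product of invertible elements of $\End(n)$, so $\tw n$ is an isomorphism. Naturality of $\tw{}$ as a natural transformation of the identity functor requires that $\tw m\circ f=f\circ\tw n$ for all $f\in\Hom(n,m)$; since the $\Hom$-spaces are parity-graded, $n$ and $m$ have the same parity, and since $\z n$ is \emph{central} in $\End(n)=\tl n$ (this is part of \cref{thm:cn}, proved in \cref{app:cn}), we get $\tw n\circ g=g\circ\tw n$ for all $g\in\End(n)$. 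To pass from endomorphisms to arbitrary morphisms, I would use the factorisation \eqref{eq:factorisation}: any $(m,n)$-diagram $c$ with $k$ through-lines writes as $c=a(1_k\otimes z^{\otimes(m-k)/2})(1_k\otimes (z^t)^{\otimes(n-k)/2})b$ with $a\in\End m$, $b\in\End n$, so it suffices to check that $\z{m}$ slides past the bubble-creation and bubble-annihilation pieces; this is a short computation in $\tl m$ and $\tl n$ using the centrality of $\z{}$ and standard identities $e_i\z{}=\z{}e_i$, and it essentially says $\z{}$ acting on a strand-pair that is about to be capped is the same (up to the known power of $q$) as the cap itself — i.e. the diagrammatic fact that the "ribbon twist" can be pulled through a cup or cap.

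The heart of the proof is the twist condition \eqref{eq:twistCondition}, namely $\tw{r+s}=\com sr\circ\com rs\,(\tw r\otimes\tw s)$ for all $r,s\geq 0$. The strategy is induction on $s$ (with an inner induction or symmetric argument on $r$), with \cref{lem:firstTheta} supplying the base cases $s=1$ and $s=0$ (and, by the symmetric identity there, $r=1$ and $r=0$). For the inductive step, write $\tw s\otimes\tw{r}$ and use $s=(s-1)+1$ together with the hexagon identities \eqref{eq:ctlhexa1}–\eqref{eq:ctlhexa2} to decompose $\com{r}{s}$ and $\com{s}{r}$ into pieces involving $\com{r}{s-1}$, $\com{r}{1}$, $\com{s-1}{r}$, $\com{1}{r}$; then feed in the induction hypothesis $\tw{r+s-1}=\com{s-1}{r}\circ\com{r}{s-1}(\tw r\otimes\tw{s-1})$ and the base case $\tw{r+1}=\com1r\circ\com r1(\tw r\otimes\tw1)$ applied at the right place, and reassemble. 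The technical device that makes the reassembly work is exactly \cref{lem:comSR}: the identities \eqref{eq:comSR1}–\eqref{eq:comSR2} are what allow the "outer" commutor factors coming from $\com{r}{s}$ and $\com{s}{r}$ to be merged with the commutors produced by the two base/inductive applications of the twist relation. One also needs the centrality of the $\z n$'s so that the factors $\tw r\otimes\tw{s-1}\otimes\tw1$ (and their rearrangements) can be commuted freely past the various $\com{}{}$'s using naturality \eqref{eq:naturality}.

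Concretely, I would argue roughly as follows. Starting from $\com sr\circ\com rs(\tw r\otimes\tw s)$, insert $\tw s=\com1{s-1}\circ\com{s-1}1(\tw{s-1}\otimes\tw1)$ from \cref{lem:firstTheta} (in the form $\tw{r}\otimes\tw{s}= \tw r\otimes(\com1{s-1}\circ\com{s-1}1(\tw{s-1}\otimes\tw1))$), expand $\com rs$ and $\com sr$ via the hexagons so that a copy of $\com{r}{s-1}$ and $\com{s-1}{r}$ appears, use \cref{lem:comSR} to move the $\com{r}{1}$, $\com{1}{r}$ factors into position, and recognise the sub-expression $\com{s-1}r\circ\com r{s-1}(\tw r\otimes\tw{s-1})=\tw{r+s-1}$ by the induction hypothesis; what remains is $\com1{r+s-1}\circ\com{r+s-1}1(\tw{r+s-1}\otimes\tw1)$, which is $\tw{r+s}$ by \cref{lem:firstTheta} again. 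The main obstacle is bookkeeping: keeping track of which tensor slots the various identity morphisms and twist components occupy while the hexagons and \cref{lem:comSR} are applied, and making sure every rearrangement of the $\tw{}$'s is justified by centrality and naturality rather than assumed. Once the slot-tracking is set up carefully (it is cleanest in the strict setting, where $\otimes$ on objects is just addition), each individual manipulation is one of: a hexagon identity, \cref{lem:comSR}, naturality \eqref{eq:naturality}, centrality of $\z{}$, or the base case \cref{lem:firstTheta} — no genuinely new computation is needed.
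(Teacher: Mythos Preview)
Your strategy for the twist condition \eqref{eq:twistCondition} is essentially the paper's: both arguments use \cref{lem:firstTheta} for the base step and \cref{lem:comSR} together with naturality of $\comNu$ for the reduction. You organise it as an induction on $s$, whereas the paper proves the shift identity $\com{s+1}{r-1}\com{r-1}{s+1}(\tw{r-1}\otimes\tw{s+1})=\com sr\com rs(\tw r\otimes\tw s)$ directly at fixed $r+s$; these are equivalent rearrangements of the same computation.

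The genuine gap is in your naturality argument. Centrality of $\z n$ only gives $\tw n\circ g=g\circ\tw n$ for $g\in\End(n)$; it does \emph{not} by itself let you slide $\z m$ through $1_k\otimes z^{\otimes(m-k)/2}\in\Hom(k,m)$. Your claim that this is ``a short computation using centrality and standard identities $e_i\z{}=\z{}e_i$'' is not justified, and the phrase ``up to the known power of $q$'' is worrying: for naturality you need the identity to hold on the nose, not up to a scalar. The paper handles this by reversing your order: it proves \eqref{eq:twistCondition} first, then uses it to write $\tw n=\com{n-k}{k}\com{k}{n-k}(\tw k\otimes\tw{n-k})$, after which naturality of $\comNu$ reduces the question to showing $\tw{n-k}\,z^{\otimes(n-k)/2}=z^{\otimes(n-k)/2}$. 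This last equality holds because $\Hom(0,n-k)\simeq\Stan{n-k,0}$ and, crucially, $\gamma_{n-k,0}=q^{0}=1$ by \cref{thm:cn}(b) --- which is exactly where the normalisation $q^{3n/2}$ in the definition of $\z n$ is essential. Without first establishing \eqref{eq:twistCondition} (or without an independent direct computation you have not supplied), your naturality step does not go through.
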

\begin{proof}\footnote{The very last line of this proof rests on a basic property of standard modules $\Stan{n,k}$ over $\tl n$. The reader not familiar with these might want to postpone the reading of the proof after the introduction of these modules in the next section and the computation of $\gamma_{n,k}$ in part (b) of proposition \ref{thm:cn}.}
The first step is to prove that the family $\{\tw n= \z n,\allowbreak n\in\mathbb N_0\}$ satisfy \eqref{eq:twistCondition}, which amounts to establish
\begin{equation}\label{eq:twistRecursive}
\com{s+1}{r-1}\circ\com{r-1}{s+1}(\tw{r-1}\otimes \tw{s+1})=\com sr\com rs(\tw{r}\otimes \tw s).
\end{equation}
Indeed \cref{lem:firstTheta} has proved $\com 1n\com n1(\tw n\otimes \tw 1)=\tw{n+1}$ and, starting from the latter, the above identity proves recursively the identities \eqref{eq:twistCondition} for all $r$ and $s$ such that $n+1=r+s$. Note that equation \eqref{eq:firstTheta} holds also for the components $\tw n = \z n$.

We first rewrite $(\tw{r-1}\otimes \tw{s+1})$ in terms of $(\tw{r}\otimes \tw s)$:
\begin{align*}
(\tw{r-1}\otimes \tw{s+1})&=(\tw{r-1}\otimes 1_{s+1})(1_{r-1}\otimes \tw{s+1})\\
&=(\tw{r-1}\otimes 1_{s+1})(1_{r-1}\otimes(\com s1\com1s(\tw 1\otimes \tw s))),\qquad \textrm{by \eqref{eq:firstTheta},}\\
&=(\tw{r-1}\otimes 1_{s+1})(1_{r-1}\otimes \com s1\com1s)(1_{r-1}\otimes\tw 1\otimes \tw s)\\
&=(1_{r-1}\otimes \com s1\com1s)(\tw{r-1}\otimes \tw 1\otimes 1_{s})(1_{r}\otimes \tw s),\qquad\textrm{because $\tw 1=1_1$,}\\
&=(1_{r-1}\otimes \com s1\com1s)(\com{r-1}{1}^{-1}\com{1}{r-1}^{-1}\tw r\otimes 1_s)(1_{r}\otimes \tw s), \qquad \textrm{again by \eqref{eq:firstTheta},}\\
&=(1_{r-1}\otimes \com s1\com1s)(\com{r-1}{1}^{-1}\com{1}{r-1}^{-1}\otimes 1_s)(\tw r\otimes \tw s).
\end{align*}
It is then sufficient to prove
$$\com{s+1}{r-1}\com{r-1}{s+1}(1_{r-1}\otimes \com s1\com 1s)(\com{r-1}{1}^{-1}\com{1}{r-1}^{-1}\otimes 1_s)=\com sr\com rs.$$
With the technical \cref{lem:comSR}, this is now straightforward:
\begin{align*}
\com{s+1}{r-1}&\com{r-1}{s+1}(1_{r-1}\otimes \com s1\com 1s)(\com{r-1}{1}^{-1}\com{1}{r-1}^{-1}\otimes 1_s)\\
&=\com{s+1}{r-1}[\com{r-1}{s+1}(1_{r-1}\otimes \com s1)](1_{r-1}\otimes \com 1s)(\com{r-1}{1}^{-1}\com{1}{r-1}^{-1}\otimes 1_s)\\
&=[\com{s+1}{r-1}(\com s1\otimes 1_{r-1})][\com{r-1}{s+1}(1_{r-1}\otimes\com 1s)(\com{r-1}{1}^{-1}\otimes 1_s)](\com1{r-1}^{-1}\otimes 1_s),\\
\intertext{by the naturality \eqref{eq:naturality} of $\comNu$,}
&=\com sr(1_s\otimes \com 1{r-1})\com rs(\com1{r-1}^{-1}\otimes 1_s),\qquad \textrm{by \eqref{eq:comSR1} and \eqref{eq:comSR2},}\\
&=\com sr(1_s\otimes \com 1{r-1})(1_s\otimes \com{1}{r-1}^{-1})\com rs,\qquad\textrm{again by naturality,}\\
&=\com sr\com rs,
\end{align*}
which ends the proof of \eqref{eq:twistCondition}. 

It remains to prove that $\tw{}$ defines a natural isomorphism of the identity functor, i.e. that for all $f\in \Hom{(m,n)}$
\begin{equation}\label{eq:thetaIsNatural}
	\tw{n} \circ f = f\circ \tw{m}.
\end{equation}
Let $f\in \Hom{(m,n)}$ be a diagram with $k$ through lines. As before (see equation \eqref{eq:factorisation} of the proof of proposition \ref{prop:ctlnisbraided}), such a diagram can be written as 
\begin{equation}
	a \circ (1_{k} \otimes z^{\otimes (n-k)/2} )\circ (1_{k} \otimes \bar{z}^{\otimes (m-k)/2} ) \circ b,
\end{equation}
for some $a\in \End{(n)}$, $b \in \End{(m)}$. Since $\tw{n}$ is central in $\End{(n)}$ by proposition \ref{thm:cn}, 
\begin{align*}
	\tw{n} \circ f 	& \overset{1}{=} a \circ \tw{n} \circ (1_{k} \otimes z^{\otimes (n-k)/2} )\circ (1_{k} \otimes \bar{z}^{\otimes (m-k)/2} ) \circ b \\
					& \overset{2}{=} a \circ \com{n-k}{k}\com{k}{n-k} (\tw{k} \otimes \tw{n-k}z^{\otimes (n-k)/2} )\circ (1_{k} \otimes \bar{z}^{\otimes (m-k)/2} ) \circ b \\
					& \overset{3}{=} a \circ (1_{k} \otimes z^{\otimes (n-k)/2} ) \circ \tw{k} \circ (1_{k} \otimes \bar{z}^{\otimes (m-k)/2} ) \circ b \\
					& \overset{4}{=} a \circ (1_{k} \otimes z^{\otimes (n-k)/2} ) \circ (1_{k} \otimes \bar{z}^{\otimes (m-k)/2} )\com{m-k}{k}\com{k}{m-k} (\tw{k} \otimes \tw{m-k}) \circ b \\
					& \overset{5}{=} f \circ \tw{m}.
\end{align*}
Steps $2$ and $5$ were obtained by using equation \eqref{eq:twistCondition}.
Steps $3$ and $4$ rest on two observations. (It is here that the factor $q^{3n/2}$ plays an essential role!) First $\eta$ is a natural transformation and thus $\com{n-k}{k}\com{k}{n-k} (1_{k} \otimes \tw{n-k}z^{\otimes (n-k)/2} )=(1_{k} \otimes \tw{n-k}z^{\otimes (n-k)/2} )\com 0k\com k0=(1_{k} \otimes \tw{n-k}z^{\otimes (n-k)/2} )$. Second proposition \ref{thm:cn} gives $ \tw{n-k}z^{\otimes (n-k)/2}\allowbreak = z^{\otimes (n-k)/2}$, since $\Hom{(n,0)} \simeq \Stan{n,0}$ as left $\End{(n)}$-modules.   
\end{proof}
\end{subsection}

\end{section}

%%%%%%%%%%%%%%%%%%%%%%%%%
%%%%%%%%%%%%%%%%%%%%%%%%%
%%       mod tl        %%
%%%%%%%%%%%%%%%%%%%%%%%%%
%%%%%%%%%%%%%%%%%%%%%%%%%
\begin{section}{The category of modules over Temperley-Lieb algebras}\label{sec:modTL}

%%%%%%%%%%%%%%%%%%%%%%%%%
%% braiding of modules %%
%%%%%%%%%%%%%%%%%%%%%%%%%
\begin{subsection}{Braiding modules}\label{sub:braidingModules} 

The representation theory of the family of $\tl n, n\in\mathbb N_0$, was cast into a categorical framework by Graham and Lehrer \cite{GL-Aff} as follows. Let $\fun{}\in \Funct(\ctl, \mathsf{Vect}_{\mathbb{C}})$ be a functor from the category $\ctl$ to that of finite-dimensional vector spaces over $\mathbb C$. Then each $\fun{}(n), n\in\mathbb N_0,$ is a vector space and each $\fun{}(\alpha),$ for $\alpha\in\Hom(n,n)\simeq \tl n$ is a linear map $\fun{}(n)\to \fun{}(n)$. Since $\fun{}$ preserves composition, $\fun{}(n)$ is naturally a $\tl n$-module, but the functor $\fun{}$ is somewhat richer than a choice of a $\tl n$-module for each $n\geq 0$. Indeed the functor $\fun{}$ also gives linear maps $\fun{}(\gamma):\fun{}(n)\to \fun{}(m)$ for all $\gamma\in\Hom(n,m)$ between modules over distinct algebras of the Temperley-Lieb family. These linear maps must also preserve the composition of diagrams. We now give examples of such functors taken from \cite{GL-Aff}. 

Let $k\in\mathbb N_0$. A functor $\mathsf S_k\in \Funct(\ctl, \mathsf{Vect}_{\mathbb{C}})$ is defined as follows. If the parities of $n$ and $k$ are distinct, then the vector space $\mathsf S_k(n)$ is set to $0$. If their parities coincide, then $\mathsf S_k(n)$ is the formal span of $(n,k)$-diagrams with exactly $k$ through lines. If $\alpha\in\Hom(n,m)$ with $m,n$ and $k$ having the same parity, then $\mathsf S_k(\alpha):\mathsf S_k(n)\to \mathsf S_k(m)$ is the linear map defined by its action on $(n,k)$-diagrams with $k$ through lines. If $\gamma\in\mathsf S_k(n)$ is such diagram, then $\mathsf S_k(\alpha)\gamma\in\Hom(k,m)$ is $\alpha\circ\gamma$ if $\alpha\circ\gamma$ has $k$ through lines and $0$ otherwise. For all other $\alpha\in\Hom(m',n')$, that is with $m'$ or $n'$ not sharing the parity of $k$, the linear map $\mathsf S_k(\alpha)$ is zero. It is straightforward to check that $\mathsf S_k$ is a functor and that $\mathsf S_k(n)$ is the usual {\em standard} or {\em cellular} $\tl n$-modules $\Stan{n,k}$. The functor $\mathsf S_k$ just described is simply $\mathsf S_k(-)=\Hom(k,-)\otimes_{\tl k}\Stan{k,k}$ where $\Stan{k,k}$ is the one-dimensional standard $\tl k$-module.

Any module $\mathsf M$ over $\tl m$ is the evaluation of a certain functor $\fun{}$ at $m$, for example the functor $\Hom(m,-)\otimes_{\tl m}\mathsf M$, that will be denoted by either $\fun{m,\mathsf M}$ or simply $\fun{\mathsf M}$. (Recall that $\Hom(m,m)=\tl m$ and $\fun{\mathsf M}(m)=\Hom(m,m)\otimes_{\tl m}\mathsf M\simeq \mathsf M$.) We shall use the letter $\mathsf I\in \Funct(\ctl, \mathsf{Vect}_{\mathbb{C}})$ for the functor $\mathsf I(-)=\Hom_{\ctl}(0,-)\otimes_{\tl 0}\tl 0\simeq \Hom_{\ctl}(0,-)$. Recall that $\tl 0\simeq \mathbb C$ and $\Hom(0,0)=\mathbb C$. The functor $\mathsf I$ is thus $\fun{0,\tl 0}$.

Our first step is to define a category of modules associated to $\ctl$ compatible with Graham and Lehrer's framework. The examples given above, $\fun{m,\mathsf M}$ and $\fun{k,\Stan{k,k}}$, should be objects of this category, but slightly more general functors will be useful. Let $m$ be a positive integer and $\overline m=\{m_1,m_2,\dots,m_a \}$, where the $m_i\geq1,1\leq i\leq a$, be a partition of $m=\sum_i m_i$. For each $i$, let $\mathsf M_i$ be a $\tl{m_i}$-module. Clearly $\overline{\mathsf M}=\mathsf M_1\otimes_{\mathbb C}\mathsf M_2\otimes_{\mathbb C}\dots\otimes_{\mathbb C}\mathsf M_a$ is a module over the product $\tl{\overline m}\equiv %\tl{m_1}\times\tl{m_2}\times\dots\times\tl{m_a}=
\tl{m_1}\otimes_\ctl\tl{m_2}\otimes_\ctl\dots\otimes_\ctl\tl{m_a}$ of the algebras $\tl{m_i}$. The data $(\overline m,\overline{\mathsf M})$ define naturally a functor $\fun{\overline m,\overline{\mathsf M}}\in\Funct(\ctl, \mathsf{Vect}_{\mathbb{C}})$ (or simply $\fun{\overline{\mathsf M}}$) by
\begin{align}\label{eq:defFunctor}
\fun{\overline m,\overline{\mathsf M}}(-)&=\Hom_\ctl(m,-)\otimes_{\tl{\overline m}}\overline{\mathsf M}\\
&=\Hom_\ctl(m,-)\otimes_{\tl{m_1}\otimes_\ctl\tl{m_2}\otimes_\ctl\dots\otimes_\ctl\tl{m_a}}(\mathsf M_1\otimes_{\mathbb C}\mathsf M_2\otimes_{\mathbb C}\dots\otimes_{\mathbb C}\mathsf M_a),\notag
\end{align}
	with the action on morphisms $f: n \to k $ given by
	\begin{equation}
		\fun{\overline m, \overline{\mathsf{M}}}(f) ( \alpha \otimes_{\tl{\overline m}} x) \equiv  (f \circ \alpha) \otimes_{\tl{\overline m}} x,
	\end{equation}
	for $\alpha\in \Hom(m,n) $ and $x \in \overline{\mathsf{M}}$.

Note that, given a pair $(\overline m,\overline{\mathsf M})$, the functor $\fun{\overline m,\overline{\mathsf M}}$ is isomorphic to a functor $\fun{m,\mathsf M'}$ for a certain $\tl m$-module $\mathsf M'$. Indeed
$$\fun{\overline m,\overline{\mathsf M}}(-)=\Hom_\ctl(m,-)\otimes_{\tl{\overline m}}\overline{\mathsf M}\simeq \Hom_\ctl(m,-)\otimes_{\tl m}(\tl m\otimes_{\tl{\overline m}}\overline{\mathsf M})=\fun{m,\overline{\mathsf M}\uparrow}(-)$$
where $\overline{\mathsf M}\uparrow\,\equiv \mathsf M'$ is the induced module from $\tl{m_1}\otimes_\ctl\tl{m_2}\otimes_\ctl\dots\otimes_\ctl\tl{m_a}$ to $\tl m$. Despite this observation, the definition \eqref{eq:defFunctor} will show its usefulness below.
\begin{Def}The category $\modtl$ has as objects the functors $\fun{\overline m,\overline{\mathsf M}}$ for all partitions $\overline m$ and choices of modules $\overline{\mathsf M}$ (together with the functor $\mathsf I(-)=\Hom(0,-)\otimes_{\tl 0}\tl 0$) with their direct sums (as functors), and as morphisms $\Hom_\modtl(\fun{\overline m,\overline{\mathsf M}},\fun{\overline n,\overline{\mathsf N}})$ the natural transformations $\nat(\fun{\overline m,\overline{\mathsf M}},\fun{\overline n,\overline{\mathsf N}})$ between these functors.
\end{Def}
Note that, since clearly $\modtl \subset \Funct(\ctl, \mathsf{Vect}_{\mathbb{C}})$, the direct sums are defined in the same way on both. In particular, $\fun{n, {\mathsf N}_1 \oplus {\mathsf N}_{2}} \simeq \fun{n,{\mathsf N}_1} \oplus \fun{n,{\mathsf N}_{2}}$.

Here is a simple example of a natural transformation between $\fun{\mathsf M}$ and $\fun{\mathsf N}$ where both $\mathsf M$ and $\mathsf N$ are $\tl m$-modules with $m=n$. In this case both partitions $\overline m$ and $\overline n$ are simply $\{m\}$. Let $f:\mathsf M\to\mathsf N$ be a morphism. Define $\phi_f:\fun{\mathsf M}\to \fun{\mathsf N}$ by the linear transformations $\phi_f(k):\fun{\mathsf M}(k)\to \fun{\mathsf N}(k)$
$$a\otimes_{\tl m}x\longmapsto a\otimes_{\tl m}fx$$
if $a\in\Hom_\ctl(m,k)$ and $x\in\mathsf M$. Clearly this is well-defined: $f(a\otimes_{\tl m}cx)=a\otimes_{\tl m}f(cx)= ac\otimes_{\tl m}f(x)=f(ac\otimes_{\tl m}x)$ for any $c\in\tl m$. The naturality of $\phi_{f}$ is easily checked. For $b\in\Hom(k,l)$
\begin{align*}
\fun{\mathsf N}(b)\circ\phi_f(k)(a\otimes_{\tl m}x)
&= \fun{\mathsf N}(b)(a\otimes_{\tl m}fx)\\
&= (ba)\otimes_{\tl m}fx\\
&= \phi_f(l)\circ \fun{\mathsf M}(b)(a\otimes_{\tl m}x),
\end{align*}
that is $\fun{\mathsf N}(b)\circ\phi_f(k)=\phi_f(l)\circ \fun{\mathsf M}(b)$ if $b\in\Hom(k,l)$. Other examples are given below.

Let $\mathsf M$ and $\mathsf N$ be a $\tl m$- and a $\tl n$-module, respectively. A fusion product $\mathsf M\xf \mathsf N$ was first defined by Read and Saleur \cite{ReadSaleur} and later on computed systematically by Gainutdinov and Vasseur \cite{GainutdinovVasseur} and Bellet\^ete \cite{Belletete}. To endow $\modtl$ with a braided structure,\footnote{The concept of {\em fusion category} exists in the literature (see, for example \cite{Etingof}). Even though it describes categories equipped with a bifunctor $-_1\otimes -_2$ (among other structures), the categories of modules under study here are not fusion categories, as the latter contain only semisimple modules.}
 we extend their definition to pairs $(\overline m,\overline{\mathsf M})$ and $(\overline n,\overline{\mathsf N})$ of partitions and choices of modules:
\begin{equation}\label{eq:defOfFusion}
\overline{\mathsf M}\xf \overline{\mathsf N}\equiv \tl{m+n}\otimes_{\tl{\overline m}\otimes_{\ctl}\tl{\overline n}} (\overline{\mathsf M}\otimes_{\mathbb C} \overline{\mathsf N}).
\end{equation}
The fusion $\overline{\mathsf M}\xf \overline{\mathsf N}$ is thus a left $\tl{m+n}$-module. This (slightly more general) definition makes the introduction of a bifunctor $-_1\funtimes -_2$ on $\modtl$ straightforward:
\begin{align}
\fun{\overline m,\overline{\mathsf M}}\funtimes \fun{\overline n,\overline{\mathsf N}}(-)
&= \Hom(m+n,-)\otimes_{\tl{\overline m}\otimes_\ctl\tl{\overline n}}(\overline{\mathsf M}\otimes_{\mathbb C}\overline{\mathsf N})\\
\intertext{which can be rewritten as}
&\simeq \Hom(m+n,-)\otimes_{\tl{m+n}}(\tl{m+n}\otimes_{\tl{\overline m}\otimes_\ctl\tl{\overline n}}(\overline{\mathsf M}\otimes_{\mathbb C}\overline{\mathsf N}))\notag\\
& \simeq \Hom(m+n,-)\otimes_{\tl{m+n}}(\overline{\mathsf M}\xf \overline{\mathsf N})\notag\\
& \simeq \fun{m+n,\overline{\mathsf M}\xf \overline{\mathsf N}}(-).\label{eq:fusionMN}
\end{align}
	The bifunctor's action on morphism is defined in the obvious way, that is, for $g : \fun{m,\mathsf{M}} \to \fun{u,\mathsf{U}}$ a morphism, then the morphism $g\xf \fun{n,\mathsf{N}} $ is defined through its components (removing the overhead bars for simplicity)
	\begin{align}
	(g \xf \fun{n,\mathsf{N}})_{k} : \Hom(m+n,k)\otimes_{\tl{m}\otimes \tl{n}} (\mathsf{M} \otimes_{\mathbb{C}} \mathsf{N}) & \to  \Hom(u+n,k)\otimes_{\tl{u}\otimes \tl{n}} (\mathsf{U} \otimes_{\mathbb{C}} \mathsf{N}) \notag\\
			\alpha \otimes_{\tl{m}\otimes \tl{n}}(x \otimes_{\mathbb{C}} y) & \to \sum_{z \in \mathsf{U} } \alpha \circ (\gamma_{z,x} \otimes_{\ctl} 1_{n} )\otimes_{\tl{m}\otimes \tl{n}}(z \otimes_{\mathbb{C}} y),
	\end{align}
for all $\alpha \in \Hom(m+n,k)$, $x \in \mathsf{M}$, $y \in \mathsf{N} $, and $\lbrace \gamma_{z,x} \rbrace_{z \in \mathsf{U}} \subset \Hom(u,m)$ is such that  $g_{m}(1_{m} \otimes_{\tl{m}} x ) = \sum_{z \in \mathsf{U}} 
    \gamma_{z,x}\otimes_{\tl{u}} z $. 
    The morphism $\fun{m,\mathsf{M}} \xf h $ for $h: \fun{n,\mathsf{N}} \to \fun{u,\mathsf{U}}$ is then defined in an analogous manner.

The functor $\mathsf I(-)=\Hom(0,-)$ acts as the identity for this product:
\begin{align*}
\mathsf I\funtimes \fun{\overline m,\overline{\mathsf M}}(-)
& = \Hom(0+m,-)\otimes_{\tl0\otimes_\ctl \tl{\overline m}}(\tl 0\otimes_{\mathbb C}\overline{\mathsf M})\\
&\simeq \Hom(m,-)\otimes_{\tl{\overline m}}\overline{\mathsf M}, \qquad \textrm{since $\tl 0\simeq \mathbb C$}\\
&=\fun{\overline m,\overline{\mathsf M}}(-)
\end{align*}
and this isomorphism $\mathsf I\funtimes \fun{\overline m,\overline{\mathsf M}}\mapsto \fun{\overline m,\overline{\mathsf M}}$ defines a left unitor $\lambda$ on $\modtl$. A right one $\rho$ is defined similarly.

With the definition of the bifunctor $\xf$, more examples can be given of functorial morphisms between objects of $\modtl$. The following ones turn out to be natural isomorphisms.
\begin{Prop}If $\beta\neq0$, then $\fun{2,\Stan{2,0}}\simeq\fun{0,\tl 0}$.
\end{Prop}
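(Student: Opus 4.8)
The plan is to write down an explicit natural isomorphism $\Phi\colon\fun{2,\Stan{2,0}}\to\fun{0,\tl0}$ together with its inverse, and to check the two compositions by a handful of elementary diagram identities. First I would unpack the two functors. The standard module $\Stan{2,0}$ is one-dimensional, spanned by the cup diagram $z\in\Hom_\ctl(0,2)$, and the $\tl2$-action on it is post-composition; in particular $e_1(2)\circ z=\beta z$ and $1_2\circ z=z$ (two one-line diagram computations). On the other side, $\fun{0,\tl0}(-)=\Hom_\ctl(0,-)\otimes_{\tl0}\tl0\simeq\Hom_\ctl(0,-)=\mathsf I(-)$. So the statement amounts to an isomorphism $\Hom_\ctl(2,n)\otimes_{\tl2}\Stan{2,0}\simeq\Hom_\ctl(0,n)$, natural in $n$.

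Next I would define $\Phi$ by the components $\Phi_n(\alpha\otimes_{\tl2}z)=\alpha\circ z$ for $\alpha\in\Hom_\ctl(2,n)$, extended linearly. This is well defined because composition of diagrams is $\tl2$-balanced, $(\alpha\circ c)\circ z=\alpha\circ(c\circ z)$ for $c\in\tl2$, and it is natural because for $f\in\Hom(n,k)$ both $\fun{0,\tl0}(f)\circ\Phi_n$ and $\Phi_k\circ\fun{2,\Stan{2,0}}(f)$ send $\alpha\otimes z$ to $f\circ\alpha\circ z$. For the inverse, this is the point where $\beta\neq0$ enters: I would set $\Psi_n(\delta)=\tfrac1\beta\,(\delta\otimes z^t)\otimes_{\tl2}z$ for $\delta\in\Hom_\ctl(0,n)$, where $z^t\in\Hom_\ctl(2,0)$ is the transpose of $z$ and $\delta\otimes z^t\in\Hom_\ctl(2,n)$ is the tensor (juxtaposition) of diagrams.

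The verification then rests on three elementary identities, all immediate from the composition and tensor rules for link diagrams: $z^t\circ z=\beta\,1_0$, the relation $e_1(2)=z\circ z^t$, and the interchange law, which gives $(\delta\otimes z^t)\circ z=\delta\otimes(z^t\circ z)$ and $(\alpha\circ z)\otimes z^t=(\alpha\circ z)\circ z^t$. From the first, $\Phi_n(\Psi_n(\delta))=\tfrac1\beta\,(\delta\otimes z^t)\circ z=\tfrac1\beta\,\delta\otimes(z^t\circ z)=\delta$. For the other composition, $(\alpha\circ z)\otimes z^t=(\alpha\circ z)\circ z^t=\alpha\circ(z\circ z^t)=\alpha\circ e_1(2)$ in $\Hom_\ctl(2,n)$, and then the $\tl2$-balancing together with $e_1(2)\circ z=\beta z$ yields $\Psi_n(\Phi_n(\alpha\otimes z))=\tfrac1\beta\,(\alpha\circ e_1(2))\otimes_{\tl2}z=\tfrac1\beta\,\alpha\otimes_{\tl2}(e_1(2)\circ z)=\alpha\otimes_{\tl2}z$. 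Hence every $\Phi_n$ is bijective, so $\Phi$ is a natural isomorphism and $\fun{2,\Stan{2,0}}\simeq\fun{0,\tl0}$.

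I do not expect a genuine obstacle here: once the two maps are guessed, everything is short. The only slightly subtle point — and the only use of $\beta\neq0$ — is the construction of $\Psi$: an $(n,0)$-diagram has no preferred cup to "open up" into a pair of free right nodes, but capping off two fresh nodes with $z^t$ and rescaling by $\beta^{-1}$ does this canonically. I would just be careful to check the four little diagram identities directly from the definitions of composition and tensor product, and to confirm that $\Phi$ respects the relations defining the tensor product over $\tl2$. (As a backup, surjectivity of $\Phi_n$ together with the dimension count $\dim\Hom_\ctl(0,2m)=\dim\bigl(\Hom_\ctl(2,2m)\circ e_1(2)\bigr)$ would also close the argument.)
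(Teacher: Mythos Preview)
Your argument is correct and is essentially the paper's own proof: both construct the map $\alpha\otimes_{\tl2}z\mapsto \alpha\circ z$ and its inverse $\delta\mapsto (\delta\circ z^t)\otimes_{\tl2}z$, checking the compositions via $z^t\circ z=\beta\,1_0$ and $z\circ z^t=e_1(2)$. The only cosmetic difference is where the factor $1/\beta$ is placed (you put it in $\Psi$, the paper puts it in $\Phi$); your remark that $\delta\otimes z^t=\delta\circ z^t$ via interchange is exactly what makes the two formulations match.
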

\begin{proof}Let $\varphi:\fun{2,\Stan{2,0}}\to\fun{0,\tl 0}$ be defined by
\begin{align*}
\varphi(k):&\fun{2,\Stan{2,0}}(k)\simeq\fun{0,\tl 0}(k)\\
&f\otimes_{\tl 2}\begin{tikzpicture}[scale =1/5, baseline={(current bounding box.center)}]
		\draw[thick] (0,0) .. controls (1,0) and (1,1) .. (0,1);
		\draw[thick] (0,-0.4) -- (0,1.4);
		\node[anchor = south] at (0.5,-1.2) {$\phantom{.}$};
		\bord{0}{0}{2}
	\end{tikzpicture}  \mapsto f\circ 
	\begin{tikzpicture}[scale =1/5, baseline={(current bounding box.center)}]
		\draw[thick] (0,0) .. controls (1,0) and (1,1) .. (0,1);
		\draw[thick] (0,-0.4) -- (0,1.4);
		\node[anchor = south] at (0.5,-1.2) {$\phantom{.}$};
		\bord{0}{0}{2}
	\end{tikzpicture}  \otimes_{\mathbb C}(1/\beta),
\end{align*}
where $f\in\Hom_{\ctl}(2,k)$. Then, if $a\in\Hom(k,l)$:
$$\varphi(l)(a\circ f\otimes_{\tl 2} 
\begin{tikzpicture}[scale =1/5, baseline={(current bounding box.center)}]
		\draw[thick] (0,0) .. controls (1,0) and (1,1) .. (0,1);
		\draw[thick] (0,-0.4) -- (0,1.4);
		\node[anchor = south] at (0.5,-1.2) {$\phantom{.}$};
		\bord{0}{0}{2}
	\end{tikzpicture} 
 )=a\circ f \circ  
\begin{tikzpicture}[scale =1/5, baseline={(current bounding box.center)}]
		\draw[thick] (0,0) .. controls (1,0) and (1,1) .. (0,1);
		\draw[thick] (0,-0.4) -- (0,1.4);
		\node[anchor = south] at (0.5,-1.2) {$\phantom{.}$};
		\bord{0}{0}{2}
	\end{tikzpicture}  
\otimes_{\mathbb C}(1/\beta)=
a\circ(\varphi(k)(f\otimes_{\tl 2} 
\begin{tikzpicture}[scale =1/5, baseline={(current bounding box.center)}]
		\draw[thick] (0,0) .. controls (1,0) and (1,1) .. (0,1);
		\draw[thick] (0,-0.4) -- (0,1.4);
		\node[anchor = south] at (0.5,-1.2) {$\phantom{.}$};
		\bord{0}{0}{2}
	\end{tikzpicture}  
)),$$
that is, $\varphi$ is a morphism or, in other words $\varphi\in\nat(\fun{2,\Stan{2,0}},\fun{0,\tl 0})$ Clearly $\varphi$ has an inverse defined by $\varphi^{-1}(k)(f\otimes_{\mathbb C}c)=c(f\circ 
\begin{tikzpicture}[scale =1/5, baseline={(current bounding box.center)}]
		\draw[thick] (0,0) .. controls (-1,0) and (-1,1) .. (0,1);
		\draw[thick] (0,-0.4) -- (0,1.4);
		\node[anchor = south] at (-0.5,-1.2) {$\phantom{.}$};
		\bord{0}{0}{2}
	\end{tikzpicture}  
)\otimes_{\tl 2}
\begin{tikzpicture}[scale =1/5, baseline={(current bounding box.center)}]
		\draw[thick] (0,0) .. controls (1,0) and (1,1) .. (0,1);
		\draw[thick] (0,-0.4) -- (0,1.4);
		\node[anchor = south] at (0.5,-1.2) {$\phantom{.}$};
		\bord{0}{0}{2}
	\end{tikzpicture}  
$ for $c\in\mathbb C=\tl 0$ and $f\in\Hom(0,k)$, and $\varphi^{-1}$ is also a natural transformation. Thus $\varphi$ is a natural isomorphism. 
\end{proof}
\begin{Coro}\label{coro:funStandard}Let $\beta\neq0$ and $\fun{\overline m,\overline{\mathsf M}}\in \Ob(\modtl)$. Then $\fun{\overline m,\overline{\mathsf M}}\simeq \fun{m+2,\overline{\mathsf M}\xf \Stan{2,0}}$. In particular
\begin{equation}\label{eq:funStandard}\fun{k,\Stan{k,k}}\simeq \fun{k+2n,\Stan{k+2n,k}}, \qquad \textrm{for all }k,n\geq 0.
\end{equation}
\end{Coro}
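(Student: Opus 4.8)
\emph{Proof strategy.} The plan is to read off both assertions from the preceding proposition, $\fun{2,\Stan{2,0}}\simeq\fun{0,\tl 0}=\mathsf I$, together with the fact that $\mathsf I$ is a unit for $\funtimes$ and with the identification \eqref{eq:fusionMN}. For the first assertion, note that $\mathsf I$ is a right unit for $\funtimes$ (this is exactly the right unitor $\rho$ constructed just above), and that applying the bifunctor $\funtimes$ to the natural isomorphism $\mathsf I\simeq\fun{2,\Stan{2,0}}$ of the preceding proposition gives $\fun{\overline m,\overline{\mathsf M}}\funtimes\mathsf I\simeq\fun{\overline m,\overline{\mathsf M}}\funtimes\fun{2,\Stan{2,0}}$; so, using \eqref{eq:fusionMN} for the last step,
\[
\fun{\overline m,\overline{\mathsf M}}\;\simeq\;\fun{\overline m,\overline{\mathsf M}}\funtimes\mathsf I\;\simeq\;\fun{\overline m,\overline{\mathsf M}}\funtimes\fun{2,\Stan{2,0}}\;\simeq\;\fun{m+2,\overline{\mathsf M}\xf\Stan{2,0}}.
\]

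For \eqref{eq:funStandard} I would take $\overline m=\{k\}$ and $\overline{\mathsf M}=\Stan{k,k}$ and apply the first assertion $n$ times, using at the $j$-th step the isomorphism of $\tl{k+2j+2}$-modules $\Stan{k+2j,k}\xf\Stan{2,0}\simeq\Stan{k+2j+2,k}$. This is precisely what is needed, since for a fixed $p$ and $\tl p$-modules $\mathsf N,\mathsf N'$ one has $\fun{p,\mathsf N}\simeq\fun{p,\mathsf N'}$ if and only if $\mathsf N\simeq\mathsf N'$ as $\tl p$-modules (evaluate a natural isomorphism at level $p$, where $\fun{p,\mathsf N}(p)\simeq\mathsf N$). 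Thus everything reduces to proving $\Stan{p,k}\xf\Stan{2,0}\simeq\Stan{p+2,k}$ for all $p\ge k$ of the parity of $k$. (Equivalently, since $\fun{k,\Stan{k,k}}$ is literally the functor $\mathsf S_k(-)=\Hom(k,-)\otimes_{\tl k}\Stan{k,k}$ and $\Stan{k+2n,k}=\mathsf S_k(k+2n)$, one may build directly a natural transformation $\fun{k+2n,\Stan{k+2n,k}}\to\mathsf S_k$ from the composition maps $\Hom(k+2n,-)\otimes_{\tl{k+2n}}\Stan{k+2n,k}\to\Hom(k,-)$, $\alpha\otimes\delta\mapsto\alpha\circ\delta$; this reduces to the same computation with $p+2$ replaced by an arbitrary level.)

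To identify $\Stan{p,k}\xf\Stan{2,0}$ I would use $\beta\ne0$ to write $\Stan{2,0}\simeq\tl 2\,E$ with $E=\tfrac1\beta e_1(2)$ an idempotent, so that
\[
\mathsf M\xf\Stan{2,0}\;=\;\tl{p+2}\otimes_{\tl p\otimes_\ctl\tl 2}(\mathsf M\otimes_{\mathbb C}\tl 2\,E)\;\simeq\;\tl{p+2}\,(1_p\otimes E)\otimes_{\tl p}\mathsf M\;\simeq\;\Hom_\ctl(p,p+2)\otimes_{\tl p}\mathsf M ,
\]
the last step being the $(\tl{p+2},\tl p)$-bimodule isomorphism $\Hom_\ctl(p,p+2)\xrightarrow{\ \sim\ }\tl{p+2}(1_p\otimes E)$, $\alpha\mapsto\alpha\circ(1_p\otimes z^{t})$, both sides having a basis indexed by $(p+2,p)$-diagrams and both module structures matching. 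Taking $\mathsf M=\Stan{p,k}=\mathsf S_k(p)$ and using $\mathsf S_k(-)=\Hom(k,-)\otimes_{\tl k}\Stan{k,k}$, one gets a $\tl{p+2}$-linear map $\Hom_\ctl(p,p+2)\otimes_{\tl p}\Stan{p,k}\to\mathsf S_k(p+2)=\Stan{p+2,k}$, $\alpha\otimes v\mapsto\alpha\circ v$, which is surjective: if $v_k\in\Hom(k,p)$ denotes the half-diagram with $k$ straight through lines and nested arcs and $\delta'=\delta\otimes(z^{t})^{\otimes(p-k)/2}\in\Hom(p,p+2)$, then any $(p+2,k)$-diagram $\delta$ with $k$ through lines equals $\beta^{-(p-k)/2}\,\delta'\circ v_k$.

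The main obstacle is the injectivity of this last map, equivalently the equality $\dim\bigl(\Hom_\ctl(p,p+2)\otimes_{\tl p}\Stan{p,k}\bigr)=\dim\Stan{p+2,k}$. I would deduce it from the cellular structure of $\ctl$: as a right $\tl p$-module, $\Hom_\ctl(p,p+2)$ carries the cell filtration by number of through lines, with subquotients $\bigoplus_j(\dim\Stan{p+2,j})\cdot\Stan{p,j}^{t}$, so by right exactness of $-\otimes_{\tl p}\Stan{p,k}$ it suffices to check that $\Stan{p,j}^{t}\otimes_{\tl p}\Stan{p,k}$ is one-dimensional for $j=k$ and vanishes for $j\ne k$. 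Writing $\Stan{p,k}\simeq\tl p/\operatorname{Ann}v_k$, the case $j=k$ is the non-vanishing of the cellular form on $\Stan{p,k}$, which holds because $\langle v_k,v_k\rangle=\beta^{(p-k)/2}\ne0$; the vanishing for $j\ne k$ comes from a short case analysis: for $j>k$, every link diagram with at most $k$ through lines kills $\Stan{p,j}^{t}$ on the right while such diagrams generate $\Stan{p,k}$ over $\tl p$; for $j<k$, the element $v_j\circ v_j^{t}\in\tl p$ has $j$ through lines, hence $v_j\circ v_j^{t}\cdot\tl p\subseteq\operatorname{Ann}v_k$, and $v_j^{t}\circ(v_j\circ v_j^{t})=\beta^{(p-j)/2}v_j^{t}\ne0$, so $\Stan{p,j}^{t}\cdot\operatorname{Ann}v_k=\Stan{p,j}^{t}$. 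This yields $\dim\bigl(\Hom_\ctl(p,p+2)\otimes_{\tl p}\Stan{p,k}\bigr)\le\dim\Stan{p+2,k}$, which with surjectivity closes the argument; iterating then gives \eqref{eq:funStandard}. Alternatively, the isomorphism $\Stan{p,k}\xf\Stan{2,0}\simeq\Stan{p+2,k}$ may simply be quoted from the computations of $\xf$ on standard modules in Read--Saleur, Gainutdinov--Vasseur and Belletête cited in the introduction.
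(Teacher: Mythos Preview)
Your argument for the first assertion is exactly the paper's: the same chain
\[
\fun{\overline m,\overline{\mathsf M}}\simeq\fun{\overline m,\overline{\mathsf M}}\funtimes\mathsf I\simeq\fun{\overline m,\overline{\mathsf M}}\funtimes\fun{2,\Stan{2,0}}\simeq\fun{m+2,\overline{\mathsf M}\xf\Stan{2,0}},
\]
using the previous proposition, the right unitor, and \eqref{eq:fusionMN}. For \eqref{eq:funStandard}, the paper likewise iterates the first part and reduces to $\Stan{k+2i,k}\xf\Stan{2,0}\simeq\Stan{k+2(i+1),k}$, which it simply quotes (Prop.~A.1 of \cite{Belletete}) --- precisely the ``alternatively'' you mention at the end.

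Your self-contained derivation of $\Stan{p,k}\xf\Stan{2,0}\simeq\Stan{p+2,k}$ via the idempotent $E=\tfrac1\beta e_1(2)$, the bimodule identification $\tl{p+2}(1_p\otimes E)\simeq\Hom_\ctl(p,p+2)$, and the cellular filtration of $\Hom_\ctl(p,p+2)$ is a genuine addition over the paper's bare citation. The strategy is sound and is essentially how that proposition is proved in \cite{Belletete}; a couple of your diagrammatic bookkeeping details (e.g.\ the precise preimage $\delta'$ in the surjectivity step) would need tightening, but the dimension bound from the filtration together with surjectivity already suffices, so the argument goes through.
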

\begin{proof}By equation \eqref{eq:fusionMN},
$$\fun{m+2,\overline{\mathsf M}\xf\Stan{2,0}}\simeq
\fun{\overline m,\overline{\mathsf M}}\xf\fun{2,\Stan{2,0}}\simeq
\fun{\overline m,\overline{\mathsf M}}\xf\fun{0,\tl 0}\simeq
\fun{\overline m,\overline{\mathsf M}}\xf\mathsf I\simeq
\fun{\overline m,\overline{\mathsf M}}.$$
The isomorphism \eqref{eq:funStandard} follows from the identity $\Stan{k+2i,k}\xf\Stan{2,0}\simeq\Stan{k+2(i+1),k}$ that holds when $\beta\neq 0$ (see Prop.~A.1 of \cite{Belletete}).
\end{proof}
The set of natural transformations from the functor $\fun{n,\tl n}$, where $\tl n$ is seen here as the left regular module, to the functor $\fun{\overline m,\overline{\mathsf M}}$ can be made explicit. Indeed, by definition of $\fun{n,\tl n}(-)=\Hom(n,-)\otimes_{\tl n}\tl n\simeq \Hom(n,-)$. Thus
$$\nat(\fun{n,\tl n},\fun{\overline m,\overline{\mathsf M}})\simeq
\nat(\Hom(n,-),\fun{\overline m,\overline{\mathsf M}})\simeq 
\fun{\overline m,\overline{\mathsf M}}(n), \qquad \textrm{as vector spaces,}$$
where the last isomorphism follows from Yoneda's lemma. In particular 
$$\Hom_\modtl(\fun{0,\mathbb C},\fun{\overline m,\overline{\mathsf M}})=
\nat(\fun{0,\mathbb C},\fun{\overline m,\overline{\mathsf M}})\simeq
\fun{\overline m,\overline{\mathsf M}}(0)\simeq 
\Stan{m,0}\otimes_{\tl{\overline m}}\overline{\mathsf M},$$
again as vector spaces.

The definition \eqref{eq:defFunctor} of objects in $\modtl$ allows for an easy definition of an associator, also noted $\alpha$, on $\modtl$. Let $\fun{\overline l,\overline{\mathsf L}}$, $\fun{\overline m,\overline{\mathsf M}}$ and $\fun{\overline n,\overline{\mathsf N}}$ be three objects in $\modtl$. Then
\begin{align*}
\fun{\overline l,\overline{\mathsf L}}&\funtimes(\fun{\overline m,\overline{\mathsf M}}\funtimes \fun{\overline n,\overline{\mathsf N}})
= \Hom(l+m+n,-)\otimes_{\tl{\overline l}\otimes_\ctl\tl{m+n}}(\overline{\mathsf L}\otimes_{\mathbb C}(\tl{m+n}\otimes_{\tl{\overline m}\otimes_\ctl\tl{\overline n}}(\overline{\mathsf M}\otimes_{\mathbb C}\overline{\mathsf N})))\\
& \simeq \Hom(l+m+n,-)\otimes_{\tl{\overline l}\otimes_\ctl\tl{m+n}}[(\tl{\overline l}\otimes_\ctl \tl{m+n})\otimes_{\tl{\overline l}\otimes_\ctl\tl{\overline m}\otimes_\ctl\tl{\overline n}}(\overline{\mathsf L}\otimes_{\mathbb C}(\overline{\mathsf M}\otimes_{\mathbb C}\overline{\mathsf N}))]\\
& \simeq \Hom(l+m+n,-)\otimes_{\tl{\overline l}\otimes_\ctl\tl{\overline m}\otimes_\ctl\tl{\overline n}}(\overline{\mathsf L}\otimes_{\mathbb C}(\overline{\mathsf M}\otimes_{\mathbb C}\overline{\mathsf N})).
\end{align*}
Similarly
$$(\fun{\overline l,\overline{\mathsf L}}\funtimes \fun{\overline m,\overline{\mathsf M}})\funtimes \fun{\overline n,\overline{\mathsf N}}\simeq
\Hom(l+m+n,-)\otimes_{\tl{\overline l}\otimes_\ctl\tl{\overline m}\otimes_\ctl\tl{\overline n}}((\overline{\mathsf L}\otimes_{\mathbb C}\overline{\mathsf M})\otimes_{\mathbb C}\overline{\mathsf N}).$$
Let $k\in\mathbb N_0$, $f\in\Hom(l+m+n,k)$ and $x\in\overline{\mathsf L}, y\in\overline{\mathsf M}, z\in\overline{\mathsf N}$. Then the associator $\alpha_{(\overline l,\overline{\mathsf L}),(\overline m,\overline{\mathsf M}),(\overline n,\overline{\mathsf N})}$ must act as
$$f\otimes_{\tl{\overline l}\otimes_\ctl\tl{\overline m}\otimes_\ctl\tl{\overline n}}((x\otimes_{\mathbb C}y)\otimes_{\mathbb C}z)\longmapsto
f\otimes_{\tl{\overline l}\otimes_\ctl\tl{\overline m}\otimes_\ctl\tl{\overline n}}(x\otimes_{\mathbb C}(y\otimes_{\mathbb C}z)).$$ 
The verification of the triangle and pentagon axioms for these unitors $\lambda,\rho$ and associator $\alpha$ then mimics that for the usual tensor product of vector spaces.

A lighter notation will be used from now on. The tensor signs $\otimes_{\mathbb C}$ and $\otimes_{\tl{\overline m}\otimes_\ctl\tl{\overline n}}$ will be replaced by $\otimes$ and $\otimes_{\overline m,\overline n}$ respectively and the functor $\fun{\overline m,\overline{\mathsf M}}$ by $\fun{\overline{\mathsf M}}$. Furthermore, when appearing in indices, we shall write only $\overline{\mathsf M}$ instead of $\fun{\overline{\mathsf M}}$.
The braiding $\eta$ of $\ctl(\beta)$ induces one on $\modtl$ as follows:
\begin{equation}\label{eq:defBraiding}
\com{\Cap{M}}{\Cap{N}}: \fun{\Cap{M}} \funtimes \fun{\Cap{N}}  \to \fun{\Cap{N}}  \funtimes \fun{\Cap{M}} 
\end{equation}
\begin{equation}
(\com{\Cap{M}}{\Cap{N}})_{k}(a \otimes_{\overline m,\overline n} (x \otimes y)) = a \circ \com{n}{m} \otimes_{\overline n,\overline m} (y \otimes x),
\end{equation}
for all $x \in \overline{\mathsf M}$, $ y \in \overline{\mathsf N}$, $a \in \Hom(n+m,k)$, $k \in \ctl$, and extended linearly in the natural way. The components of $\eta$ are well-defined natural morphisms in $\modtl$. Suppose indeed that $c\in\tl{\overline m}, d\in\tl{\overline n}$, $b \in \Hom(k,s) $. Then
\begin{align*}
(\com{\Cap{M}}{\Cap{N}})_{s}(b a\otimes_{\overline m,\overline n}(cx\otimes dy))
&=b a\com nm\otimes_{\overline n,\overline m}(dy\otimes cx)\\
&=b a\com nm(d\otimes_\ctl c)\otimes_{\overline n,\overline m}(y\otimes x)\\
&=b a(c\otimes_\ctl d)\com nm\otimes_{\overline n,\overline m}(y\otimes x)\\
&=b (\com{\Cap{M}}{\Cap{N}})_{k}(a(c\otimes_\ctl d)\otimes_{\overline m,\overline n}(x\otimes y)).
\end{align*}
It is straightforward (but tedious) to show that $\com{}{} $ is natural in both entries. (An example of such verification is done below for the twist $\tw{}$.)
The check of the hexagonal axiom is the last step. Again any element of $((\fun{\overline{\mathsf L}}\funtimes \fun{\overline{\mathsf M}})\funtimes \fun{\overline{\mathsf N}})(k)$ is a linear combination of terms of the form $b\otimes_{\overline l,\overline m,\overline n}((x\otimes y)\otimes z)$ where $b\in\Hom(l+m+n,k)$ for some $k$, and $x\in\overline{\mathsf L}, y\in\overline{\mathsf M}, z\in\overline{\mathsf N}$. Then the upper part of the hexagon gives (we dropped the $k$ index to lighten the notation)
\begin{align*}
\alpha_{\overline{\mathsf M},\overline{\mathsf N},\overline{\mathsf L}}\circ
&\com{\overline{\mathsf L}}{\overline{\mathsf M}\xf\overline{\mathsf N}}\circ
\alpha_{\overline{\mathsf L},\overline{\mathsf M},\overline{\mathsf N}}(b\otimes_{\overline l,\overline m,\overline n}((x\otimes y)\otimes z))\\
&= \alpha_{\overline{\mathsf M},\overline{\mathsf N},\overline{\mathsf L}}\circ
\com{\overline{\mathsf L}}{\overline{\mathsf M}\xf\overline{\mathsf N}}(b\otimes_{\overline l,\overline m,\overline n}(x\otimes (y\otimes z)))\\
&=\alpha_{\overline{\mathsf M},\overline{\mathsf N},\overline{\mathsf L}}(b\com{m+n}l \otimes_{\overline m,\overline n,\overline l}((y\otimes z)\otimes x))\\
&=b\com{m+n}l \otimes_{\overline m,\overline n,\overline l}(y\otimes (z\otimes x))
\end{align*}
and the lower one
\begin{align*}
(1_{\overline{\mathsf M}}\otimes\com{\overline{\mathsf L}}{\overline{\mathsf N}})\circ&
\alpha_{\overline{\mathsf M},\overline{\mathsf L},\overline{\mathsf N}}\circ
(\com{\overline{\mathsf L}}{\overline{\mathsf M}}\otimes 1_{\overline{\mathsf N}})(b\otimes_{\overline l,\overline m,\overline n}((x\otimes y)\otimes z))\\
&=(1_{\overline{\mathsf M}}\otimes\com{\overline{\mathsf L}}{\overline{\mathsf N}})\circ
\alpha_{\overline{\mathsf M},\overline{\mathsf L},\overline{\mathsf N}}(b(\com ml\otimes 1_n)\otimes_{\overline m,\overline l,\overline n}((y\otimes x)\otimes z)\\
&=(1_{\overline{\mathsf M}}\otimes\com{\overline{\mathsf L}}{\overline{\mathsf N}})(b(\com ml\otimes 1_n)\otimes_{\overline m,\overline l,\overline n}(y\otimes (x\otimes z)))\\
&=b(\com ml\otimes 1_n)(1_m\otimes \com nl)\otimes_{\overline m,\overline n,\overline l}(y\otimes (z\otimes x))\\
&=b\com{m+n}l\otimes_{\overline m,\overline n,\overline l}(y\otimes (z\otimes x)), \quad\textrm{by \eqref{eq:hex1}}.
\end{align*}
Since the two expressions coincide, the family of commutors $\com{\overline{\mathsf M}}{\overline{\mathsf N}}$, for $(\overline m,\overline{\mathsf M}), (\overline n,\overline{\mathsf N})\in\Ob \modtl,$ satisfies the first hexagon axiom. The proof of the second is similar and the above discussion establishes the braided structure of $\modtl$.
\begin{Prop}The category $\modtl$ together with the bifunctor $\xf$, the identity $\mathsf I$, the associator $\alpha$, the unitors $\lambda, \rho$ and the braiding $\eta$ is a braided category.
\end{Prop}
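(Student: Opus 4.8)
The plan is to assemble, rather than reprove from scratch, the ingredients collected in this subsection and to record the handful of verifications that were only sketched. The statement splits into two halves: that $(\modtl,\funtimes,\mathsf I,\alpha,\lambda,\rho)$ is a monoidal category, and that $\eta$ is a braiding on it.

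For the monoidal half, the first point to pin down is that $\modtl$ is closed under $\funtimes$ and that $\funtimes$ is a bifunctor. Closure is immediate from \eqref{eq:fusionMN}: the product $\fun{\overline m,\overline{\mathsf M}}\funtimes\fun{\overline n,\overline{\mathsf N}}$ is isomorphic to $\fun{m+n,\overline{\mathsf M}\xf\overline{\mathsf N}}$, hence again of the prescribed form, and since $\xf$ distributes over $\oplus$ at the level of modules this persists for direct sums. Functoriality in each slot follows from the component-wise formulas given above for $g\funtimes\fun{n,\mathsf N}$ and $\fun{m,\mathsf M}\funtimes h$: after passing to the balanced-tensor description $\Hom(m+n,-)\otimes_{\tl{\overline m}\otimes_\ctl\tl{\overline n}}(\overline{\mathsf M}\otimes\overline{\mathsf N})$, compatibility with composition and the interchange law reduce to bilinearity and associativity of $\otimes_{\mathbb C}$ and of $\otimes_{\tl{\overline m}}$. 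The identity object $\mathsf I$, the unitors $\lambda,\rho$ and the associator $\alpha$ have been written down; each is a natural isomorphism of functors $\ctl\to\mathsf{Vect}_{\mathbb C}$ because it acts as the identity on the $\Hom(-,k)$ factor and as a standard vector-space isomorphism on the $\overline{\mathsf M}$ factors, so it commutes with every functor map $\fun{}(f)$ by inspection. For the same reason the triangle and pentagon diagrams reduce to the corresponding diagrams for the ordinary tensor product of $\mathbb C$-vector spaces -- all arrows are identities on the $\Hom$ factor -- and hence commute.

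For the braided half, the commutor $\com{\Cap M}{\Cap N}$ was defined in \eqref{eq:defBraiding} and was already checked there to be a well-defined natural transformation, the key inputs being \cref{lem:braidendtl} (to slide $\com nm$ past $c\otimes_\ctl d$) and \cref{prop:ctlnisbraided} (invertibility, inherited from that of the $\com nm$). It remains to verify (i) naturality of $\com{}{}$ in each of its two arguments and (ii) the two hexagon axioms. For (i), given $g:\fun{\overline m,\overline{\mathsf M}}\to\fun{\overline u,\overline{\mathsf U}}$, one evaluates both legs of the naturality square of $\com{}{}$ in its first argument on a generator $a\otimes_{\overline m,\overline n}(x\otimes y)$; expanding $g$ through the diagrams $\gamma_{z,x}\in\Hom(u,m)$ and invoking the naturality \eqref{eq:naturality} of $\comNu$ on $\ctl$ to move $\com nm$ past $\gamma_{z,x}\otimes_\ctl 1_n$ makes the two legs agree, in a computation that repeats verbatim the well-definedness check already displayed (likewise for the second entry). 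For (ii), the first hexagon has been verified above: on a generic generator $b\otimes_{\overline l,\overline m,\overline n}((x\otimes y)\otimes z)$ both paths were shown to equal $b\,\com{m+n}l\otimes_{\overline m,\overline n,\overline l}(y\otimes(z\otimes x))$, the lower path using \eqref{eq:hex1}. The second hexagon follows from the identical calculation with \eqref{eq:hex1} replaced by the companion identity \eqref{eq:ctlhexa2} for $\comNu$ on $\ctl$ (equivalently, by applying the first hexagon to the inverse commutor). With (i) and (ii) in hand, the data $(\funtimes,\mathsf I,\alpha,\lambda,\rho,\eta)$ satisfy every axiom of a braided category.

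The hard part is not a single deep step but keeping the bookkeeping consistent: one must fix once and for all the identification $\fun{\overline m,\overline{\mathsf M}}\simeq\fun{m,\overline{\mathsf M}\uparrow}$ and the balanced-tensor conventions $\tl{m+n}\otimes_{\tl{\overline m}\otimes_\ctl\tl{\overline n}}(-)$, so that the component-wise formulas for $\alpha$, $\lambda$, $\rho$ and $\com{}{}$ genuinely glue into morphisms in $\Funct(\ctl,\mathsf{Vect}_{\mathbb C})$, and so that the associator appearing in the hexagon computation is exactly the $\alpha$ defined above and not some other reassociation. Once the convention used in the displayed hexagon computation is adopted -- carrying the data as $b\otimes_{\overline l,\overline m,\overline n}(\cdots)$ with $b$ a single morphism of $\ctl$ -- every verification becomes mechanical, the only non-formal inputs being the naturality of $\comNu$ on $\ctl$ and the hexagon identities \eqref{eq:hex1} and \eqref{eq:ctlhexa2}.
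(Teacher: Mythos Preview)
Your proposal is correct and follows essentially the same approach as the paper: both assemble the monoidal and braided structure from the constructions already given in the subsection, reducing the associator and unitors to those of $\mathsf{Vect}_{\mathbb C}$ and the hexagon axioms to the hexagon identities \eqref{eq:hex1} and \eqref{eq:ctlhexa2} on $\ctl$. You are somewhat more explicit than the paper about the bifunctoriality of $\funtimes$ and the naturality of $\eta$ in each argument (the paper simply declares the latter ``straightforward (but tedious)'' and defers the sample computation to the twist), but the logical structure and the key inputs---\cref{lem:braidendtl} for well-definedness and \cref{prop:ctlnisbraided} for the hexagons---are identical.
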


\end{subsection}

%%%%%%%%%%%%%%%%%%%%%%%%%%
%% rigidity             %%
%%%%%%%%%%%%%%%%%%%%%%%%%%
\begin{subsection}{Rigidity and $\modtl$}\label{sub:rigidity}
The representation theory of rational conformal field theories offers examples of modular categories. These categories satisfy even more structural constraints than the braided tensor ones. This section identifies conditions under which the braided tensor categories $\modtl$ can be endowed with some of these additional structures.

A braided category is rigid if, for every object $\fun{}$ in the category, there correpond two others, its right and left duals $\fun{}^*$ and $^*\fun{}$, and morphisms 
\begin{align}\label{eq:lesEetIota}
e_{\fun{}}:&\ \fun{}^*\otimes \fun{}\longrightarrow {\mathsf I}\qquad  
\iota_{\fun{}}:{\mathsf I}\longrightarrow\fun{}\otimes \fun{}^*,\\
\label{eq:lesEetIota2}
e'_{\fun{}}:&\ \fun{}\otimes ^*\fun{}\longrightarrow {\mathsf I}\qquad  
\iota'_{\fun{}}:{\mathsf I}\longrightarrow^*\fun{}\otimes \fun{},
\end{align}
such that the compositions
\begin{align}\label{eq:rigid1}
\fun{}&\xrightarrow{\iota_{\fun{}}\otimes\id_{\fun{}}} \fun{}\otimes \fun{}^*\otimes \fun{}\xrightarrow{\id_{\fun{}}\otimes e_{\fun{}}}\fun{}\\
\label{eq:rigid2}!
\fun{}^*&\xrightarrow{\id_{\fun{}^*}\otimes \iota_{\fun{}}}\fun{}^*\otimes \fun{}\otimes \fun{}^* \xrightarrow{e_{\fun{}}\otimes\id_{\fun{}^*}}\fun{}^*
\end{align}
are
 the identity morphism, on $\fun{}$ and $\fun{}^{*}$, respectively. If such morphisms exists, they are unique up to isomorphism.  Similar axioms hold for $e'_{\fun{}}$ and $\iota'_{\fun{}}$. The rigidity axiom insures, in CFT, that any primary field $\phi$ has a right partner $\phi^*$ and a left one $^*\phi$ such that the correlation functions $\langle (\phi^*)\phi\rangle$ and $\langle\phi(^*\phi)\rangle$ are non-zero. Often the left and right partners  coincide. 
    Actually in all rigid braided category, the two duals are always isomorphic, since once can show that $(\iota_{\fun{}})' \equiv \com{\fun{}}{\fun{}^{*}}^{-1}\circ \iota_{\fun{}} : 1 \to \fun{}^{*} \otimes \fun{}$, and $(e_{\fun{}})' \equiv e_{\fun{}} \circ \com{\fun{}}{\fun{}^{*}} : 1 \to \fun{} \otimes \fun{}^{*}$, also satisfy the axioms for the right dual.

Note that the category $\ctl{}$ is rigid, with $n^{*} \equiv\, ^{*}n \equiv n $ for all objects $n \in \ctl{}$, and duality morphisms
\begin{equation}\label{eq:ctl.rigid}
	\bar\iota_{m} = 
	\begin{tikzpicture}[scale = 1/3,baseline={(current bounding box.center)}]
		\draw (0,0) .. controls (2,0) and (2,5) .. (0,5);
		\draw (0,2) .. controls (1,2) and (1,3) .. (0,3);
		\draw[thick] (0,-.5) -- (0,5.5);
		\fill[black] (0,0) circle (1/8);
		\fill[black] (0,5) circle (1/8);
		\fill[black] (0,2) circle (1/8);
		\fill[black] (0,3) circle (1/8);
		\node[anchor = north] at (0.25,5.5) {$\small{\vdots}$ };
		\node[anchor = north] at (0.25,2.5) {$\small{\vdots}$ };
		\draw[decorate, decoration = {brace, mirror, amplitude = 2 pt}, yshift = -3pt] (-.5,5.5) -- (-.5,2.5) node [midway,xshift = -7pt] {\footnotesize{$m$}};
		\draw[decorate, decoration = {brace, mirror, amplitude = 2 pt}, yshift = -3pt] (-.5,2.5) -- (-.5,-.5) node [midway,xshift = -7pt] {\footnotesize{$m$}};
	\end{tikzpicture} \; \in \Hom_{\ctl}(0,2m), \qquad
	\bar e_{m} = 
	\begin{tikzpicture}[scale = 1/3,baseline={(current bounding box.center)},xscale = -1]
		\draw (0,0) .. controls (2,0) and (2,5) .. (0,5);
		\draw (0,2) .. controls (1,2) and (1,3) .. (0,3);
		\draw[thick] (0,-.5) -- (0,5.5);
		\fill[black] (0,0) circle (1/8);
		\fill[black] (0,5) circle (1/8);
		\fill[black] (0,2) circle (1/8);
		\fill[black] (0,3) circle (1/8);
		\node[anchor = north] at (0.25,5.5) {$\small{\vdots}$ };
		\node[anchor = north] at (0.25,2.5) {$\small{\vdots}$ };
		\draw[decorate, decoration = {brace,  amplitude = 2 pt}, yshift = -3pt] (-.5,5.5) -- (-.5,2.5) node [midway,xshift = 7pt] {\footnotesize{$m$}};
		\draw[decorate, decoration = {brace,  amplitude = 2 pt}, yshift = -3pt] (-.5,2.5) -- (-.5,-.5) node [midway,xshift = 7pt] {\footnotesize{$m$}};
	\end{tikzpicture} \; \in \Hom_{\ctl}(2m,0) \; .
\end{equation}
It is then straightforward to verify that $(1_{m} \otimes_{\ctl} \bar e_{m})\circ (\bar \iota_{m}\otimes_{\ctl} 1_{m} ) = (\bar e_{m} \otimes_{\ctl} 1_{m} )\circ (1_{m} \otimes_{\ctl} \bar \iota_{m}) = 1_{m}$.

The rigidity axiom is fulfilled for $\modtl$ when the algebras $\tl n(\beta)$ are semisimple for all $n$, that is when $q$ is not a root of unity (recall that $\beta=-q-q^{-1}$). For these values of $\beta$, the cellular modules $\Stan{n,k}$, $0\leq k\leq n$ with $k=n\textrm{\,mod\,}2$, form a complete set of non-isomorphic irreducible modules of $\tl n$. All other (finite) modules of $\tl n$ are direct sums of these. Their fusion product is known \cite{GainutdinovVasseur,Belletete}.
\begin{Prop}\label{prop:fusionFunStan}If $q$ is not a root of unity, then
$$\fun{m}\xf\fun n\simeq \mybigoplus{j=|m-n|}{m+n}\fun j,\qquad \textrm{with }\fun m\equiv \fun{m,\Stan{m,m}}$$
and where $\oplus'$ stands for a direct sum with a step equal to $2$.
\end{Prop}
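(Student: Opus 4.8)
The plan is to transport the known decomposition of the fusion product $\Stan{m,m}\xf\Stan{n,n}$ of standard $\tl{m+n}$-modules through the functorial machinery of \cref{sub:braidingModules}, using that $\fun m=\fun{m,\Stan{m,m}}$. First I would invoke the isomorphism \eqref{eq:fusionMN}, which here reads
$$\fun m\xf\fun n=\fun{m,\Stan{m,m}}\xf\fun{n,\Stan{n,n}}\;\simeq\;\fun{m+n,\;\Stan{m,m}\xf\Stan{n,n}},$$
so that the entire question reduces to identifying the left $\tl{m+n}$-module $\Stan{m,m}\xf\Stan{n,n}$ (the fusion product \eqref{eq:defOfFusion} of the two one-dimensional standard modules) and then recognizing the resulting functor.

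Next I would use the hypothesis on $q$: since $q$ is not a root of unity, $\tl{m+n}(\beta)$ is semisimple, the modules $\Stan{m+n,j}$ with $|m-n|\le j\le m+n$ and $j\equiv m+n\pmod 2$ being among its pairwise non-isomorphic irreducibles, and the fusion product of standard modules is a genuine direct sum of standards, as computed in \cite{ReadSaleur,GainutdinovVasseur,Belletete}. For the two extremal standard modules it is the $\mathfrak{sl}_2$-type Clebsch--Gordan rule
$$\Stan{m,m}\xf\Stan{n,n}\;\simeq\;\mybigoplus{j=|m-n|}{m+n}\Stan{m+n,j},$$
which one may alternatively deduce by induction on $\min(m,n)$ from the elementary rule $\Stan{k,k}\xf\Stan{1,1}\simeq\Stan{k+1,k+1}\oplus\Stan{k+1,k-1}$ (for $k\ge 1$) together with the associativity of $\xf$. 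Since $\fun{m+n,-}$ carries finite direct sums of $\tl{m+n}$-modules to the corresponding direct sums of functors, this yields
$$\fun{m+n,\;\Stan{m,m}\xf\Stan{n,n}}\;\simeq\;\mybigoplus{j=|m-n|}{m+n}\fun{m+n,\Stan{m+n,j}}.$$
Finally, for each such $j$ one has $m+n=j+2\cdot\tfrac{m+n-j}{2}$ with $\tfrac{m+n-j}{2}\in\mathbb N_0$, so \eqref{eq:funStandard} of \cref{coro:funStandard} gives $\fun{m+n,\Stan{m+n,j}}\simeq\fun{j,\Stan{j,j}}=\fun j$; combining the three displays closes the proof.

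The only genuinely substantial input is the fusion computation $\Stan{m,m}\xf\Stan{n,n}$ over the semisimple algebra $\tl{m+n}$, which I take from the literature; everything else is bookkeeping with \eqref{eq:fusionMN}, the additivity of $\fun{n,-}$, and \cref{coro:funStandard}. The points to watch are that the hypothesis $q\notin\{\text{roots of unity}\}$ is used exactly once --- to make $\tl{m+n}$ semisimple, so that the fusion product is an honest direct sum of standards rather than merely admitting a filtration by them --- and that the index $j$ in the final answer indeed ranges over $|m-n|,|m-n|+2,\dots,m+n$, which matches the step-$2$ direct sum $\oplus'$ in the statement.
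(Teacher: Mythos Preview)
Your proof is correct and is precisely the argument the paper has in mind: the proposition is stated with a bare citation to \cite{GainutdinovVasseur,Belletete} for the module-level fusion $\Stan{m,m}\xf\Stan{n,n}$, and the passage to functors is left implicit, being exactly the combination of \eqref{eq:fusionMN}, additivity of $\fun{m+n,-}$, and \cref{coro:funStandard} that you spell out. One small remark: \cref{coro:funStandard} is stated under the hypothesis $\beta\neq 0$ rather than ``$q$ not a root of unity'', but since $\beta=0$ forces $q^2=-1$ this is automatic here, so there is no gap.
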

This simple fusion rule leads to the following result.
\begin{Coro}\label{cor:fusionGeneric}If $q$ is not a root of unity, then
$$\fun m\xf\fun m\simeq \mybigoplus{j=0}{2m}\fun j\qquad \textrm{and}\qquad
\nat(\fun0,\fun m\xf\fun n)\simeq\nat(\fun m\xf\fun n,\fun0)\simeq \delta_{m,n}\mathbb C$$
as vector spaces.
\end{Coro}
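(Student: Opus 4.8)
The plan is to reduce \cref{cor:fusionGeneric} to \cref{prop:fusionFunStan} together with two elementary computations of spaces of natural transformations involving the functors $\fun j=\fun{j,\Stan{j,j}}$. The first isomorphism is immediate: setting $n=m$ in \cref{prop:fusionFunStan} gives $\fun m\xf\fun m\simeq\mybigoplus{j=|m-m|}{m+m}\fun j=\mybigoplus{j=0}{2m}\fun j$.

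For the remaining isomorphisms, I would first invoke \cref{prop:fusionFunStan} and the additivity of $\nat(-,-)$ in each variable over finite direct sums to obtain
\begin{equation*}
\nat(\fun 0,\fun m\xf\fun n)\simeq\mybigoplus{j=|m-n|}{m+n}\nat(\fun 0,\fun j),\qquad
\nat(\fun m\xf\fun n,\fun 0)\simeq\mybigoplus{j=|m-n|}{m+n}\nat(\fun j,\fun 0).
\end{equation*}
Since $j=0$ lies in the range $|m-n|\le j\le m+n$ (with step $2$) exactly when $m=n$, the statement follows once one proves $\nat(\fun 0,\fun j)\simeq\nat(\fun j,\fun 0)\simeq\delta_{j,0}\,\mathbb C$, which I now indicate how to do.

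For $\nat(\fun 0,\fun j)$, the Yoneda isomorphism recorded above (for $\fun 0=\fun{0,\mathbb C}$) gives $\nat(\fun 0,\fun j)\simeq\fun j(0)=\Hom_\ctl(j,0)\otimes_{\tl j}\Stan{j,j}$. This is $0$ for odd $j$ and equals $\mathbb C$ for $j=0$; for even $j\ge 2$ and $\beta\neq 0$ it vanishes, because any $(0,j)$-diagram $\gamma$ has an innermost arc joining two adjacent nodes $i,i+1$ of its right column, so $\gamma\circ e_i=\beta\gamma$ and hence $\gamma\otimes_{\tl j}v=\tfrac1\beta(\gamma\circ e_i)\otimes_{\tl j}v=\tfrac1\beta\gamma\otimes_{\tl j}(e_i v)=0$, since $e_i$ acts as $0$ on $\Stan{j,j}$. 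For $\nat(\fun j,\fun 0)$, I would use that $\tl j$ is semisimple when $q$ is not a root of unity, so $\Stan{j,j}$ is a direct summand of the free module $\tl j$; therefore $\fun j=\fun{j,\Stan{j,j}}$ is a direct summand of a finite sum of copies of $\fun{j,\tl j}\simeq\Hom_\ctl(j,-)$, and Yoneda's lemma upgrades to $\nat(\fun{j,\Stan{j,j}},G)\simeq\Hom_{\tl j}(\Stan{j,j},G(j))$ for every functor $G$. Applying this with $G=\fun 0$ and using $\fun 0(j)=\Hom_\ctl(0,j)\simeq\Stan{j,0}$ as left $\tl j$-modules yields $\nat(\fun j,\fun 0)\simeq\Hom_{\tl j}(\Stan{j,j},\Stan{j,0})$, which is $\delta_{j,0}\,\mathbb C$ by Schur's lemma, the modules $\Stan{j,k}$ being a complete set of pairwise non-isomorphic simple $\tl j$-modules for generic $q$. (Alternatively, one may use the self-duality $\fun j^*\simeq\fun j$ and the rigidity of $\modtl$ to identify $\nat(\fun j,\fun 0)$ with $\nat(\fun 0,\fun j)$.) Assembling these computations gives both isomorphisms $\simeq\delta_{m,n}\,\mathbb C$.

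The main obstacle is the module-theoretic content of the third paragraph: identifying $\fun 0(j)$ and $\fun j(0)$ with (tensor products of) cell modules, controlling $\Hom_\ctl(j,0)\otimes_{\tl j}\Stan{j,j}$, and the mild strengthening of Yoneda's lemma to non-free coefficient modules via semisimplicity. The rest is formal bookkeeping with \cref{prop:fusionFunStan} and additivity of $\nat$.
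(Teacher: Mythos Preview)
Your argument is correct and is essentially the natural way to fill in what the paper leaves implicit (the paper states this as a corollary of \cref{prop:fusionFunStan} without further proof, merely remarking afterwards that ``$\fun m\xf\fun n$ has a direct summand isomorphic to $\mathsf I=\fun0$ if and only if $m=n$'').

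A few remarks. Your ``upgraded Yoneda'' $\nat(\fun{j,M},G)\simeq\Hom_{\tl j}(M,G(j))$ actually holds for \emph{any} $\tl j$-module $M$, not just projective ones: given $f\in\Hom_{\tl j}(M,G(j))$, the assignment $\alpha\otimes_{\tl j}m\mapsto G(\alpha)(f(m))$ is well defined and natural, and conversely any natural transformation is determined by its component at $j$. So the detour through semisimplicity is unnecessary at that step (though you still need semisimplicity for Schur's lemma to kill $\Hom_{\tl j}(\Stan{j,j},\Stan{j,0})$ when $j\ge2$). Finally, the parenthetical alternative via rigidity should be dropped: in the paper's logical order, \cref{cor:fusionGeneric} is precisely what is used to identify $\fun m^*\simeq\fun m$ and to build the duality morphisms $e_{\fun{}},\iota_{\fun{}}$, so invoking rigidity here would be circular.
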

In other words $\fun m\xf\fun n$ has a direct summand isomorphic to $\mathsf I=\fun0$ if and only if $m=n$. The left and right duals of $\fun m$ can thus be chosen to coincide with $\fun m$. With this observation, the construction of the four functorial morphisms $e_{\fun{}}, \iota_{\fun{}}, e'_{\fun{}}$ and $\iota'_{\fun{}}$ is straightforward. We detail that of $e_{\fun{}}$ and $\iota_{\fun{}}$. To define $\iota_m$, note first that 
$$
\bar e_m=\begin{tikzpicture}[baseline={(current bounding box.center)},scale=1/8]
	\draw[thick] (0,0) -- (0,9);
	\fill[black] (0,1) circle (1/5);
	\fill[black] (0,3) circle (1/5);
	\fill[black] (0,4) circle (1/5);
	\fill[black] (0,5) circle (1/5);
	\fill[black] (0,6) circle (1/5);
	\fill[black] (0,8) circle (1/5);
	\draw (0,1) .. controls (-4,1) and (-4,8) .. (0,8);
	\draw (0,3) .. controls (-1.5,3) and (-1.5,6) .. (0,6);
	\draw (0,4) .. controls (-1,4) and (-1,5) .. (0,5);
	\node[anchor = south] at (-1.65,2.5) {$\vdots$};
	\node[anchor = south] at (-0.5,-1.2) {$\phantom{.}$};
\end{tikzpicture}$$
is the only $(0,2m)$-diagram whose tensor product with 
$$\begin{tikzpicture}[baseline={(current bounding box.center)},scale=1/8]
	\draw[thick] (0,0) -- (0,5);
	\fill[black] (0,1) circle (1/5);
	\fill[black] (0,3) circle (1/5);
	\fill[black] (0,4) circle (1/5);
	\draw (0,1) -- (3.5,1);
	\draw (0,3) -- (3.5,3);
	\draw (0,4) -- (3.5,4);
	\node[anchor = south] at (2,1) {$\dots$};
	\node[anchor = south] at (0.5,-1.2) {$\phantom{.}$};
\end{tikzpicture}\otimes
\begin{tikzpicture}[baseline={(current bounding box.center)},scale=1/8]
	\draw[thick] (0,0) -- (0,5);
	\fill[black] (0,1) circle (1/5);
	\fill[black] (0,3) circle (1/5);
	\fill[black] (0,4) circle (1/5);
	\draw (0,1) -- (3.5,1);
	\draw (0,3) -- (3.5,3);
	\draw (0,4) -- (3.5,4);
	\node[anchor = south] at (2,1) {$\dots$};
	\node[anchor = south] at (0.5,-1.2) {$\phantom{.}$};
\end{tikzpicture}
$$
is non-zero. Thus define the components $\iota_m(k):\mathsf I(k)\longrightarrow (\fun m\xf\fun m^*)(k)$ of the morphism $\iota_{\fun m}\equiv\iota_m$ to be
$$f\otimes_{\mathbb C}1\longmapsto \alpha f
\begin{tikzpicture}[baseline={(current bounding box.center)},scale=1/8]
	\draw[thick] (0,0) -- (0,9);
	\fill[black] (0,1) circle (1/5);
	\fill[black] (0,3) circle (1/5);
	\fill[black] (0,4) circle (1/5);
	\fill[black] (0,5) circle (1/5);
	\fill[black] (0,6) circle (1/5);
	\fill[black] (0,8) circle (1/5);
	\draw (0,1) .. controls (-4,1) and (-4,8) .. (0,8);
	\draw (0,3) .. controls (-1.5,3) and (-1.5,6) .. (0,6);
	\draw (0,4) .. controls (-1,4) and (-1,5) .. (0,5);
	\node[anchor = south] at (-1.65,2.5) {$\vdots$};
	\node[anchor = south] at (-0.5,-1.2) {$\phantom{.}$};
\end{tikzpicture}\ 
\otimes_{m,m}\big(
\begin{tikzpicture}[baseline={(current bounding box.center)},scale=1/8]
	\draw[thick] (0,0) -- (0,5);
	\fill[black] (0,1) circle (1/5);
	\fill[black] (0,3) circle (1/5);
	\fill[black] (0,4) circle (1/5);
	\draw (0,1) -- (3.5,1);
	\draw (0,3) -- (3.5,3);
	\draw (0,4) -- (3.5,4);
	\node[anchor = south] at (2,1) {$\dots$};
	\node[anchor = south] at (0.5,-1.2) {$\phantom{.}$};
\end{tikzpicture}\otimes
\begin{tikzpicture}[baseline={(current bounding box.center)},scale=1/8]
	\draw[thick] (0,0) -- (0,5);
	\fill[black] (0,1) circle (1/5);
	\fill[black] (0,3) circle (1/5);
	\fill[black] (0,4) circle (1/5);
	\draw (0,1) -- (3.5,1);
	\draw (0,3) -- (3.5,3);
	\draw (0,4) -- (3.5,4);
	\node[anchor = south] at (2,1) {$\dots$};
	\node[anchor = south] at (0.5,-1.2) {$\phantom{.}$};
\end{tikzpicture}
\big)
$$
where $f\in\Hom(0,k)$ and $\alpha\in\mathbb C$ a constant to be fixed. A quick check shows that this is a morphism. (This morphism $\iota_m$ exists even for $q$ a root of unity.) Note that the definition of $\iota_m$ uses $\bar e_m\in\Hom(2m,0)$ and, as will be seen below, that of $e_m$ uses $\bar\iota_m\in\Hom(0,2m)$, a fact that could be confusing. 

The definition of $e_m:\fun m\xf\fun m\to \mathsf I$ requires the primitive idempotent $\wj_m\in\tl m$, known as the Wenzl-Jones projector. It is the unique non-zero element of $\tl m$ such that $\wj_m\cdot \wj_m= \wj_m$ and $\wj_m\cdot e_i=e_i\cdot \wj_m=0$, $1\leq i<m$ \cite{Kauffman}. Thus $\mathbb C\cdot \wj_m\simeq \Stan{m,m}$ as a left module. Such idempotents $\wj_m$ exist for all $m$ if and only if $q$ is not a root of unity. Moreover
$$\wj_m\ \cdot\ \begin{tikzpicture}[baseline={(current bounding box.center)},scale=1/8]
	\draw[thick] (0,0) -- (0,5);
	\fill[black] (0,1) circle (1/5);
	\fill[black] (0,3) circle (1/5);
	\fill[black] (0,4) circle (1/5);
	\draw (0,1) -- (3.5,1);
	\draw (0,3) -- (3.5,3);
	\draw (0,4) -- (3.5,4);
	\node[anchor = south] at (2,1) {$\dots$};
	\node[anchor = south] at (0.5,-1.2) {$\phantom{.}$};
\end{tikzpicture}\ =\ 
\begin{tikzpicture}[baseline={(current bounding box.center)},scale=1/8]
	\draw[thick] (0,0) -- (0,5);
	\fill[black] (0,1) circle (1/5);
	\fill[black] (0,3) circle (1/5);
	\fill[black] (0,4) circle (1/5);
	\draw (0,1) -- (3.5,1);
	\draw (0,3) -- (3.5,3);
	\draw (0,4) -- (3.5,4);
	\node[anchor = south] at (2,1) {$\dots$};
	\node[anchor = south] at (0.5,-1.2) {$\phantom{.}$};
\end{tikzpicture}$$
in $\Stan{m,m}$. The components $e_m(k):(\fun m\xf\fun m)(k)\to \mathsf I(k)$ of $e_m$ are linear maps defined by
$$g\otimes_{m,m}\big(\wj_m\cdot \begin{tikzpicture}[baseline={(current bounding box.center)},scale=1/8]
	\draw[thick] (0,0) -- (0,5);
	\fill[black] (0,1) circle (1/5);
	\fill[black] (0,3) circle (1/5);
	\fill[black] (0,4) circle (1/5);
	\draw (0,1) -- (3.5,1);
	\draw (0,3) -- (3.5,3);
	\draw (0,4) -- (3.5,4);
	\node[anchor = south] at (2,1) {$\dots$};
	\node[anchor = south] at (0.5,-1.2) {$\phantom{.}$};
\end{tikzpicture}\otimes
\wj_m\cdot
\begin{tikzpicture}[baseline={(current bounding box.center)},scale=1/8]
	\draw[thick] (0,0) -- (0,5);
	\fill[black] (0,1) circle (1/5);
	\fill[black] (0,3) circle (1/5);
	\fill[black] (0,4) circle (1/5);
	\draw (0,1) -- (3.5,1);
	\draw (0,3) -- (3.5,3);
	\draw (0,4) -- (3.5,4);
	\node[anchor = south] at (2,1) {$\dots$};
	\node[anchor = south] at (0.5,-1.2) {$\phantom{.}$};
\end{tikzpicture}
\big)\ 
\longmapsto\ 
\alpha' g\circ(\wj_m\otimes_\ctl \wj_m)
\begin{tikzpicture}[baseline={(current bounding box.center)},scale=1/8]
	\draw[thick] (0,0) -- (0,9);
	\fill[black] (0,1) circle (1/5);
	\fill[black] (0,3) circle (1/5);
	\fill[black] (0,4) circle (1/5);
	\fill[black] (0,5) circle (1/5);
	\fill[black] (0,6) circle (1/5);
	\fill[black] (0,8) circle (1/5);
	\draw (0,1) .. controls (4,1) and (4,8) .. (0,8);
	\draw (0,3) .. controls (1.5,3) and (1.5,6) .. (0,6);
	\draw (0,4) .. controls (1,4) and (1,5) .. (0,5);
	\node[anchor = south] at (1.65,2.5) {$\vdots$};
	\node[anchor = south] at (0.5,-1.2) {$\phantom{.}$};
\end{tikzpicture}
\otimes 1
$$
for $g\in\Hom(2m,k)$ some $\alpha'\in\mathbb C$. The two constants $\alpha$ and $\alpha'$ are tied by the rigidity axioms \eqref{eq:rigid1} and \eqref{eq:rigid2}. For example \eqref{eq:rigid1} gives for $f\in\Hom(m,k)$
\begin{align*}
(\id_m\otimes_\ctl e_m){}& {}(\iota_m\otimes_\ctl \id_m)(f\otimes_{0,m}(1\otimes\begin{tikzpicture}[baseline={(current bounding box.center)},scale=1/8]
	\draw[thick] (0,0) -- (0,5);
	\fill[black] (0,1) circle (1/5);
	\fill[black] (0,3) circle (1/5);
	\fill[black] (0,4) circle (1/5);
	\draw (0,1) -- (3.5,1);
	\draw (0,3) -- (3.5,3);
	\draw (0,4) -- (3.5,4);
	\node[anchor = south] at (2,1) {$\dots$};
	\node[anchor = south] at (0.5,-1.2) {$\phantom{.}$};
\end{tikzpicture}
 ))\\
&  =\alpha (\id_m\otimes_\ctl e_m)\Big( f\circ 
\begin{tikzpicture}[baseline={(current bounding box.center)},scale=1/8]
	\draw[thick] (0,-4) -- (0,9);
	\fill[black] (0,1) circle (1/5);
	\fill[black] (0,3) circle (1/5);
	\fill[black] (0,4) circle (1/5);
	\fill[black] (0,5) circle (1/5);
	\fill[black] (0,6) circle (1/5);
	\fill[black] (0,8) circle (1/5);
	\fill[black] (0,0) circle (1/5);
	\fill[black] (0,-2) circle (1/5);
	\fill[black] (0,-3) circle (1/5);
	\draw (0,1) .. controls (-4,1) and (-4,8) .. (0,8);
	\draw (0,3) .. controls (-1.5,3) and (-1.5,6) .. (0,6);
	\draw (0,4) .. controls (-1,4) and (-1,5) .. (0,5);
	\draw (0,0) -- (-3,0);
	\draw (0,-2) -- (-3,-2);
	\draw (0,-3) -- (-3,-3);
	\node[anchor = south] at (-2,-2) {$\dots$};
	\node[anchor = south] at (-1.65,2.5) {$\vdots$};
	\node[anchor = south] at (-0.5,-1.2) {$\phantom{.}$};
\end{tikzpicture}
\Big)\otimes_{m,m,m}\big(
\begin{tikzpicture}[baseline={(current bounding box.center)},scale=1/8]
	\draw[thick] (0,0) -- (0,5);
	\fill[black] (0,1) circle (1/5);
	\fill[black] (0,3) circle (1/5);
	\fill[black] (0,4) circle (1/5);
	\draw (0,1) -- (3.5,1);
	\draw (0,3) -- (3.5,3);
	\draw (0,4) -- (3.5,4);
	\node[anchor = south] at (2,1) {$\dots$};
	\node[anchor = south] at (0.5,-1.2) {$\phantom{.}$};
\end{tikzpicture}
\otimes
\begin{tikzpicture}[baseline={(current bounding box.center)},scale=1/8]
	\draw[thick] (0,0) -- (0,5);
	\fill[black] (0,1) circle (1/5);
	\fill[black] (0,3) circle (1/5);
	\fill[black] (0,4) circle (1/5);
	\draw (0,1) -- (3.5,1);
	\draw (0,3) -- (3.5,3);
	\draw (0,4) -- (3.5,4);
	\node[anchor = south] at (2,1) {$\dots$};
	\node[anchor = south] at (0.5,-1.2) {$\phantom{.}$};
\end{tikzpicture}
\otimes\begin{tikzpicture}[baseline={(current bounding box.center)},scale=1/8]
	\draw[thick] (0,0) -- (0,5);
	\fill[black] (0,1) circle (1/5);
	\fill[black] (0,3) circle (1/5);
	\fill[black] (0,4) circle (1/5);
	\draw (0,1) -- (3.5,1);
	\draw (0,3) -- (3.5,3);
	\draw (0,4) -- (3.5,4);
	\node[anchor = south] at (2,1) {$\dots$};
	\node[anchor = south] at (0.5,-1.2) {$\phantom{.}$};
\end{tikzpicture}
\big)\\
& = \alpha\alpha' \Big(f\ \circ\ 
\begin{tikzpicture}[baseline={(current bounding box.center)},scale=1/8]
	\draw[thick] (0,-4) -- (0,9);
	\fill[black] (0,1) circle (1/5);
	\fill[black] (0,3) circle (1/5);
	\fill[black] (0,4) circle (1/5);
	\fill[black] (0,5) circle (1/5);
	\fill[black] (0,6) circle (1/5);
	\fill[black] (0,8) circle (1/5);
	\fill[black] (0,0) circle (1/5);
	\fill[black] (0,-2) circle (1/5);
	\fill[black] (0,-3) circle (1/5);
	\draw (0,1) .. controls (-4,1) and (-4,8) .. (0,8);
	\draw (0,3) .. controls (-1.5,3) and (-1.5,6) .. (0,6);
	\draw (0,4) .. controls (-1,4) and (-1,5) .. (0,5);
	\draw (0,0) -- (-3,0);
	\draw (0,-2) -- (-3,-2);
	\draw (0,-3) -- (-3,-3);
	\draw (0,8) -- (9,8);
	\draw (0,6) -- (9,6);
	\draw (0,5) -- (9,5);
	\node[anchor = south] at (-2,-2) {$\dots$};
	\node[anchor = south] at (2,6) {$\dots$};
	\node[anchor = south] at (-1.65,2.5) {$\vdots$};
	\node[anchor = west] at (0.1,2.5) {$\wj_m\ \cdot\ $};
	\node[anchor = west] at (0.1,-2.5) {$\wj_m\ \cdot\ $};
	\node[anchor = south] at (-0.5,-1.2) {$\phantom{.}$};
	\draw[thick] (8,-4) -- (8,4.5);
	\fill[black] (8,1) circle (1/5);
	\fill[black] (8,4) circle (1/5);
	\fill[black] (8,2) circle (1/5);
	\fill[black] (8,0) circle (1/5);
	\fill[black] (8,-1) circle (1/5);
	\fill[black] (8,-3) circle (1/5);	
	\draw (8,-3) .. controls (12,-3) and (12,4) .. (8,4);
	\draw (8,-1) .. controls (9.5,-1) and (9.5,2) .. (8,2);
	\draw (8,0) .. controls (9,0) and (9,1) .. (8,1);
	\node[anchor = south] at (9.65,-1.5) {$\vdots$};
\end{tikzpicture}\ 
\otimes_m \begin{tikzpicture}[baseline={(current bounding box.center)},scale=1/8]
	\draw[thick] (0,0) -- (0,5);
	\fill[black] (0,1) circle (1/5);
	\fill[black] (0,3) circle (1/5);
	\fill[black] (0,4) circle (1/5);
	\draw (0,1) -- (3.5,1);
	\draw (0,3) -- (3.5,3);
	\draw (0,4) -- (3.5,4);
	\node[anchor = south] at (2,1) {$\dots$};
	\node[anchor = south] at (0.5,-1.2) {$\phantom{.}$};
\end{tikzpicture}\Big)\\
&=\alpha\alpha' \big(f\circ \wj_m\otimes_m 
\begin{tikzpicture}[baseline={(current bounding box.center)},scale=1/8]
	\draw[thick] (0,0) -- (0,5);
	\fill[black] (0,1) circle (1/5);
	\fill[black] (0,3) circle (1/5);
	\fill[black] (0,4) circle (1/5);
	\draw (0,1) -- (3.5,1);
	\draw (0,3) -- (3.5,3);
	\draw (0,4) -- (3.5,4);
	\node[anchor = south] at (2,1) {$\dots$};
	\node[anchor = south] at (0.5,-1.2) {$\phantom{.}$};
\end{tikzpicture}\big)=
\alpha\alpha'\big(f\otimes_m
\begin{tikzpicture}[baseline={(current bounding box.center)},scale=1/8]
	\draw[thick] (0,0) -- (0,5);
	\fill[black] (0,1) circle (1/5);
	\fill[black] (0,3) circle (1/5);
	\fill[black] (0,4) circle (1/5);
	\draw (0,1) -- (3.5,1);
	\draw (0,3) -- (3.5,3);
	\draw (0,4) -- (3.5,4);
	\node[anchor = south] at (2,1) {$\dots$};
	\node[anchor = south] at (0.5,-1.2) {$\phantom{.}$};
\end{tikzpicture}\big)
\end{align*}
and $(\id_m\otimes_\ctl e_m)\circ(\iota_m\otimes_\ctl \id_m)$ is the identity if and only if $\alpha\alpha'=1$. The second axiom \eqref{eq:rigid2} does not add any constraints and, if $\alpha$ and $\alpha'$ are chosen to be $1$, $e$ and $\iota$ satisfy all the axioms. The constructions and verifications of this section and the previous one prove the following result.
\begin{Prop}If $q$ is not root of unity, the category $\modtl$ together with the bifunctor $\xf$, the identity $\mathsf I$, the associator $\alpha$, the unitors $\lambda, \rho$, the braiding $\eta$ and the morphisms $e,\iota, e'$ and $\iota'$, is rigid.
\end{Prop}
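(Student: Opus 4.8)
The plan is to reduce the assertion to the single family of objects $\fun m=\fun{m,\Stan{m,m}}$, for which right and left duals together with their evaluation and coevaluation morphisms have essentially been written down already, and then to transport these data to an arbitrary object of $\modtl$ by additivity.

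The first step is to observe that, when $q$ is not a root of unity, every object of $\modtl$ is isomorphic to a finite direct sum of copies of the $\fun m$'s. Indeed, for such $q$ all the algebras $\tl n$ are semisimple, so the induced module $\overline{\mathsf M}\uparrow$ occurring in the isomorphism $\fun{\overline m,\overline{\mathsf M}}\simeq\fun{m,\overline{\mathsf M}\uparrow}$ decomposes as a direct sum of cellular modules $\Stan{m,k}$; since $\fun{n,\mathsf N_1\oplus\mathsf N_2}\simeq\fun{n,\mathsf N_1}\oplus\fun{n,\mathsf N_2}$ and $\fun{m,\Stan{m,k}}\simeq\fun{k,\Stan{k,k}}=\fun k$ by \cref{coro:funStandard}, the claim follows, the identity object $\mathsf I=\fun 0$ being one of these summands. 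The second step is to note that rigidity passes to finite direct sums. If $\fun{}=\bigoplus_i\fun{}_i$ and each $\fun{}_i$ carries a right dual $\fun{}_i^*$ with morphisms $e_i,\iota_i$ satisfying \eqref{eq:rigid1} and \eqref{eq:rigid2}, then $\fun{}^*\equiv\bigoplus_i\fun{}_i^*$, equipped with the block-diagonal morphisms $e$ (acting as $e_i$ on $\fun{}_i^*\xf\fun{}_i$ and as $0$ on the off-diagonal summands of $\fun{}^*\xf\fun{}$) and $\iota\equiv\sum_i\iota_i$, again satisfies both zigzag identities, the off-diagonal terms contributing nothing; the analogous statement holds for left duals. (Here one uses that $\xf$, being defined through a tensor product over an algebra, commutes with direct sums.) It therefore suffices to equip each $\fun m$ with right and left duals and the corresponding four morphisms.

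For $\fun m$, \cref{cor:fusionGeneric} shows that $\fun m\xf\fun n$ contains a summand isomorphic to $\mathsf I$ exactly when $m=n$, in which case $\nat(\mathsf I,\fun m\xf\fun m)$ and $\nat(\fun m\xf\fun m,\mathsf I)$ are one-dimensional; this is precisely what licenses the choice $\fun m^*\equiv{}^*\fun m\equiv\fun m$. The coevaluation $\iota_m$ and the evaluation $e_m$ are the morphisms already displayed, assembled from the diagrams $\bar\iota_m,\bar e_m$ of \eqref{eq:ctl.rigid} and the Wenzl-Jones projector $\wj_m$ (which exists for every $m$ precisely because $q$ is not a root of unity), and depending on two scalars $\alpha,\alpha'$; one checks directly that they are natural transformations. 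The first zigzag identity \eqref{eq:rigid1} was computed above and amounts to the single relation $\alpha\alpha'=1$, and I would verify by an entirely parallel diagrammatic computation that \eqref{eq:rigid2} then holds with no further constraint. Taking $\alpha=\alpha'=1$ makes $(\fun m,\iota_m,e_m)$ a right-dual datum for $\fun m$.

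It remains to produce the left-duality morphisms $e'_m,\iota'_m$, and here I would invoke the braiding: in a braided monoidal category a right dual is automatically a left dual, so with ${}^*\fun m\equiv\fun m$ one may take $\iota'_m\equiv\com{\fun m}{\fun m}^{-1}\circ\iota_m$ and $e'_m\equiv e_m\circ\com{\fun m}{\fun m}$, exactly as in the remark preceding the statement, the zigzag identities for the primed morphisms following from those for $\iota_m,e_m$ by naturality of $\eta$ and the hexagon axioms. Transporting these data through the direct-sum decomposition of the first step endows every object of $\modtl$ with right and left duals and the four morphisms $e,\iota,e',\iota'$, so the category is rigid. I expect the verification of the zigzag \eqref{eq:rigid1} for $\fun m$ to be the main obstacle: its crux is that composing $\bar\iota_m$ (respectively $\bar e_m$) with $\wj_m\otimes_\ctl\wj_m$ and straightening the resulting link diagram returns $\wj_m$ itself --- the absorption property of the Wenzl-Jones idempotent --- which is what pins the normalization to $\alpha\alpha'=1$ rather than producing a spurious $q$-dependent loop factor.
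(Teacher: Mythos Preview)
Your proposal is correct and follows essentially the same approach as the paper: the paper constructs $\iota_m,e_m$ for the simple objects $\fun m$ via the Wenzl--Jones projector, verifies the zigzag identity reduces to $\alpha\alpha'=1$, and invokes the braiding to obtain left duals from right duals. You make explicit the reduction of a general object to a finite direct sum of $\fun m$'s via semisimplicity and \cref{coro:funStandard}, and the transport of duality data through direct sums---both of which the paper leaves implicit in its sentence ``The constructions and verifications of this section and the previous one prove the following result.''
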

A {\em ribbon category} is a rigid category endowed	with a functorial isomorphism $\theta$ and a dual isomorphism $\theta^{*}$ satisfying $(\tw{{\mathsf F}})^{*}  =\tw{{\mathsf F}^{*}}$ and $ ^{*}(\tw{{\mathsf F}}) =   \tw{^{*}{\mathsf F}}$. The isomorphism $\theta$ is called a {\em twist}. The duals %of morphisms 
   are defined by the composition (omitting associators and unitors)
	\begin{equation*}
	(\tw{{\mathsf F}})^{*}  : {\mathsf F}^{*} 
	\xrightarrow{\id_{\fun{}^*} \otimes \iota_{{\mathsf F}}} {\mathsf F}^{*} \otimes  {\mathsf F} \otimes {\mathsf F}^{*}  
	\xrightarrow{\id_{\fun{}^*} \otimes \tw{{\mathsf F}} \otimes \id_{\fun{}^*}} {\mathsf F}^{*} \otimes  {\mathsf F} \otimes {\mathsf F}^{*} 
	\xrightarrow{e_{{\mathsf F}}\otimes \id_{\fun{}^*}} {\mathsf F}^{*},
	\end{equation*}
and similarly for the left duals. The natural isomorphism $\theta$ for $\modtl$ will first be constructed and then the compatibility between duals checked.

The twist on $\ctl$ induces a twist on $\modtl$ as follows. For $c\in\Hom(m,k)$ and $x\in\overline{\mathsf M}$, define $\tw{\fun{\overline{\mathsf M}}}(k)=\tw{\overline{\mathsf M}}(k)$ by
\begin{equation}\label{eq:twModtl}c\otimes_{\overline m}x\longmapsto \tw{\overline{\mathsf M}}(k)(c\otimes_{\overline m}x) = c\tw m\otimes_{\overline m}x.\end{equation}
Since the elements $\tw m\in\tl m$ are central, $\tw{\overline{\mathsf M}}(k)$ is well-defined. Since its (unique) eigenvalue on an indecomposable module is never zero, $\tw{\overline{\mathsf M}}(k)$ is also invertible. The next step is to prove that it is a natural transformation of the identity functor of $\modtl$; to see this, one must prove that for all functors $\fun{\Cap{N}},\fun{\Cap{M}} \in \modtl$, and all natural transformation $\mu : \fun{\Cap{N}} \to \fun{\Cap{M}} \in \Hom_{\modtl}$, the components of $(\mu \circ \theta_{\Cap{N}})$ and $(\theta_{\Cap{M}} \circ \mu)$ must be equal on all objects $k\in \ctl$. Let $a \otimes_\ctl x $ be some element of $\fun{\Cap{N}}(k)$, so $a \in \Hom(n,k)$, $x \in \Cap{N}$, and write $\mu_{k}(a \otimes_\ctl x) \equiv \sum_{i} b_{i} \otimes_\ctl y_{i} $, where the sum is over some finite set, the $b_i$ are in $\tl m$ and the $y_i$ in $\Cap{M}$.
 One then verifies that
	\begin{align*}
		(\mu \circ \theta_{\Cap{N}})_{k}( a \otimes_\ctl x) & = \mu_{k}(a \circ \theta_{n}\otimes_\ctl x)\\
		& = \mu_{k}(\theta_{k} \circ a \otimes_\ctl x),\quad \textrm{by \eqref{eq:thetaIsNatural}}\\
		& = \theta_{k}\mu_{k}(a \otimes_\ctl x) = \theta_{k}\sum_{i} b_{i} \otimes y_{i}\\
		& = \sum_{i} b_{i}\theta_{m} \otimes_\ctl y_{i}, \quad \textrm{again by \eqref{eq:thetaIsNatural}}\\
		& = (\theta_{\Cap{M}} \circ \mu)_{k}(a \otimes_\ctl x),
	\end{align*}
where we used the fact that $\theta$ is a natural transformation on $\ctl$ and $\mu$ is natural on $\modtl$. Since $(\mu \circ \theta_{\Cap{N}})_{k}$ and $(\theta_{\Cap{M}} \circ \mu)$ are both linear, it follows that they must be equal on $\fun{\Cap{N}}(k)$. Finally, let $a\in\Hom(m+n,k)$ and $x\in\overline{\mathsf M}$ and $y\in\overline{\mathsf N}$ such that $a\otimes_{\overline m,\overline n}(x\otimes y)\in \fun{\overline{\mathsf M}\xf\overline{\mathsf N}}$. Then
\begin{align*}
\tw{\overline{\mathsf M}\xf\overline{\mathsf N}}{}&{}(a\otimes_{\overline m,\overline n}(x\otimes y)) = a\tw{m+n}\otimes_{\overline m,\overline n}(x\otimes y)\\
&=(a\circ \com nm\circ\com mn\circ(\tw m\otimes_\ctl\tw n))\otimes_{\overline m,\overline n}(x\otimes y), \quad \textrm{since $\tw{}$ is a twist on $\ctl$}\\
&= (a\circ (\tw m\otimes_\ctl\tw n)\circ \com nm\circ\com mn)\otimes_{\overline m,\overline n}(x\otimes y), 
\quad \textrm{by \cref{lem:braidendtl}}\\
&= \com{\overline{\mathsf N}}{\overline{\mathsf M}}(a\circ(\tw m\otimes_\ctl\tw n)\circ\com nm\otimes_{\overline n,\overline m}(y\otimes x))\\
&=\com{\overline{\mathsf N}}{\overline{\mathsf M}}\circ \com{\overline{\mathsf M}}{\overline{\mathsf N}}(a\circ(\tw m\otimes_\ctl\tw n)\otimes_{\overline m,\overline n}(x\otimes y))\\
&=\com{\overline{\mathsf N}}{\overline{\mathsf M}}\circ \com{\overline{\mathsf M}}{\overline{\mathsf N}}\circ(\tw{\overline{\mathsf M}}\otimes_\ctl\tw{\overline{\mathsf N}})(a\otimes_{\overline m,\overline n}(x\otimes y)),
\end{align*}
and the twist $\tw{}$ on $\modtl$ verifies the axiom \eqref{eq:twistCondition}.

It simply remains to verify that the twist is compatible with the duals; we show that it is compatible with right duals, as the proof for the left ones is very similar. First note that for all $ 1 \leq i \leq m$, 
	\begin{equation*}
		t_{i} \bar\iota_{m} = t_{2m-i} \bar\iota_{m}, \qquad \bar e_{m} t_{i} = \bar e_{m}t_{2m-i},
	\end{equation*}
	where $\bar\iota_{m}, \bar e_{m} $ are the duality morphisms from $\ctl{}$, introduced in equation \eqref{eq:ctl.rigid}. It therefore follows that 
$$( \tw{m} \otimes_{\ctl{}} 1_{m} )\iota_{m} = ( 1_{m} \otimes_{\ctl{}} \tw{m})\iota_{m}\qquad \textrm{and}\qquad e_m( \tw{m} \otimes_{\ctl{}} 1_{m} ) = e_m( 1_{m} \otimes_{\ctl{}} \tw{m})$$
where $e_m$ and $\iota_m$ are now the components of $e_{\fun{}}$ and $\iota_{\fun{}}$.
	Using this observation with the definition of the twist and duals in $\modtl$, one quickly sees that for all $\mathsf F \in \modtl $
	\begin{equation*}
		(\tw{\fun{}} \otimes 1) \iota_{\fun{}} =   (1 \otimes \tw{\fun{}}) \iota_{\fun{}} 		\qquad\textrm{and}\qquad
		e_{\fun{}}(\tw{\fun{}} \otimes 1)  =  e_{\fun{}} (1 \otimes \tw{\fun{}}). 	\end{equation*}
	The right dual of $\tw{\fun{}} $ is thus
\begin{align*}
		(\tw{\fun{}})^{*} & = (e_{\fun{}}\otimes \id_{\fun{}}) \circ [(\id_{\fun{}} \otimes \tw{\fun{}} \otimes \id_{\fun{}}) \circ (\id_{\fun{}} \otimes \iota_{\fun{}})]\\ 
		& = [(e_{\fun{}}\otimes \id_{\fun{}})\circ (\id_{\fun{}}\otimes \id_{\fun{}}\otimes \theta_{\fun{}})]\circ (\id_{\fun{}}\otimes \iota_{\fun{}})\\
		& = (\id_{\mathsf I}\otimes \theta_{\fun{}})\circ [(e_{\fun{}}\otimes \id_{\fun{}})\circ
		(\id_{\fun{}}\otimes \iota_{\fun{}})]\\
		& = (\id_{\mathsf I}\otimes \theta_{\fun{}}) = \theta_{\fun{}}
\end{align*}
where we have used the fact that $\fun{}^{*} \equiv \fun{} $ and, to get the last line, that $\modtl$ is rigid. The twist $\tw{}$ in $\modtl$ is thus compatible with its duals.

 These checks on $\tw{}$ holds for any $q$. However, for a category to be a ribbon category, it needs to be rigid and thus, for $\modtl$, $q$ may not be a root of unity.
\begin{Prop}If $q$ is not root of unity, the data $(\modtl,\xf,\mathsf I,\alpha,\lambda, \rho,\eta,e,\iota, e',\iota', \tw{})$ define a ribbon category.
\end{Prop}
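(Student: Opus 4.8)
The statement is an assembly of the structural results established in the previous two subsections, so the plan is mostly bookkeeping: recall that a ribbon category is precisely a braided rigid monoidal category equipped with a natural isomorphism $\tw{}$ of the identity functor (a \emph{twist}, hence satisfying the balancing identity \eqref{eq:twistCondition}) which is in addition compatible with the duality, i.e.\ $(\tw{\fun{}})^{*}=\tw{\fun{}^{*}}$ and $^{*}(\tw{\fun{}})=\tw{^{*}\fun{}}$ for every object $\fun{}$; then collect the pieces already proved.

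First I would invoke what is already in place. The Proposition asserting that $(\modtl,\xf,\mathsf I,\alpha,\lambda,\rho,\eta)$ is a braided category supplies the braided monoidal structure, and the Proposition immediately preceding the present one supplies rigidity together with the evaluation and coevaluation morphisms $e,\iota,e',\iota'$ — this is where the hypothesis on $q$ enters, since their construction relies on the Wenzl--Jones projectors $\wj_m$ and on the generic fusion rule of \cref{prop:fusionFunStan}. The discussion leading up to the statement already shows that $\tw{}$, defined componentwise by \eqref{eq:twModtl}, is a well-defined and invertible natural isomorphism of the identity functor of $\modtl$ (centrality of $\tw m=\z m\in\tl m$ and the non-vanishing of its eigenvalue on indecomposables), that it satisfies \eqref{eq:twistCondition} (using that $\tw{}$ is a twist on $\ctl$ and \cref{lem:braidendtl} to commute $\tw m\otimes_\ctl\tw n$ past $\com nm\circ\com mn$), and that it is compatible with the \emph{right} duals, the string computation there yielding $(\tw{\fun{}})^{*}=\tw{\fun{}}=\tw{\fun{}^{*}}$ from the identities $(\tw m\otimes_\ctl 1_m)\iota_m=(1_m\otimes_\ctl\tw m)\iota_m$, $e_m(\tw m\otimes_\ctl 1_m)=e_m(1_m\otimes_\ctl\tw m)$ and the fact that $\fun{}^{*}\equiv\fun{}$.

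What remains is the left-dual compatibility $^{*}(\tw{\fun{}})=\tw{^{*}\fun{}}$. I would spell out $e'_{\fun m}$ and $\iota'_{\fun m}$ as the mirror images of $e_{\fun m}$ and $\iota_{\fun m}$, built from the reflected diagrams of \eqref{eq:ctl.rigid} and the same projector $\wj_m$; observe that $\T i\bar\iota_m=\T{2m-i}\bar\iota_m$ and $\bar e_m\T i=\bar e_m\T{2m-i}$ for $1\le i\le m$ again force $(1_m\otimes_\ctl\tw m)\iota'_{\fun{}}=(\tw m\otimes_\ctl 1_m)\iota'_{\fun{}}$ and $e'_{\fun{}}(1_m\otimes_\ctl\tw m)=e'_{\fun{}}(\tw m\otimes_\ctl 1_m)$; and then run the identical three-step computation used for the right duals to conclude $^{*}(\tw{\fun{}})=\tw{^{*}\fun{}}=\tw{\fun{}}$. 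Collecting this with the previous two paragraphs gives the claim. I do not anticipate a genuine obstacle: the only point needing a little care is to be sure that the naturality of $\tw{}$ and the balancing identity \eqref{eq:twistCondition} have been verified for the general objects $\fun{\overline{\mathsf M}}$ attached to a partition $\overline m$, and not merely for the $\fun m$ — which the preceding discussion indeed does.
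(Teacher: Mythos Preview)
Your proposal is correct and follows essentially the same approach as the paper: the proof is indeed a bookkeeping assembly of the preceding discussion (braided structure, rigidity via Wenzl--Jones projectors, naturality and balancing of $\tw{}$, and compatibility with duals), with the hypothesis on $q$ entering only through rigidity. The paper in fact treats only the right-dual compatibility explicitly and dismisses the left-dual case as ``very similar''; your decision to spell out the mirror argument for $e'_{\fun{}},\iota'_{\fun{}}$ is a harmless elaboration of the same idea.
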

\noindent Note that we also proved that the twist in $\ctl{}$ is also compatible with its duals, so $\ctl $ (with the appropriate data) is also a ribbon category.

The categories appearing naturally in minimal conformal field theories are the modular tensor categories. Beside being ribbon categories, the modular ones require  among other things that all objects can be written as a finite direct sum of simple objects and that the number of (isomorphic classes of) simple objects be finite. (An object $A$ in an abelian category $\mathcal C$ is simple if any injective morphism $B\to A$ is either $0$ or an isomorphism.) When $q$ is not a root of unity, the simple objects are the standard modules $\Stan{n,k}$, $n\in\mathbb N_0,$ and $0\leq k\leq n$ with $n=k\textrm{\,mod\,}2$. Corollary \ref{coro:funStandard} shows that $\fun{n,\Stan{n,n}}, n\in\mathbb N_0$, are all simple objects and non-isomorphic. However their number is infinite and the ribbon category $\modtl$ cannot be a modular one.

The category $\modtl$ cannot be rigid when $q$ is a root of unity. Indeed rigidity would imply the exactness of the bifunctor $\xf$ which it is {\em not} when $q$ is a root of unity \cite{Belletete}. But it turns out that $\xf$ is closed when restricted to projective modules, even when $q$ is a root of unity. It is thus natural to consider the full subcategory $\text{\rm Mod}_\ctl^{\textrm{proj}}$ with objects restricted to projective modules (or, more precisely, functors of the form $\mathsf F_{n,\mathsf P}$ where $\mathsf P$ is projective $\tl n$-module). The fusion coefficients are more intricate than those of Corollary \ref{cor:fusionGeneric}. (See Proposition 4.1.1 of \cite{GainutdinovVasseur} or Section 4.1 and the quick computational tool 5.4 of \cite{Belletete}.) However the Corollary's key feature, the one that allows for the introduction of morphisms $e_{\mathsf F}, e_{\mathsf F}', \iota_{\mathsf F}$ and $\iota_{\mathsf F}'$, still holds:
$$\nat(\fun0,\fun m\xf\fun n)\simeq\nat(\fun m\xf\fun n,\fun0)\simeq \delta_{m,n}\mathbb C$$
as vector spaces, when $q$ is a root of unity such that the smallest positive integer $\ell$ such that $q^{2\ell}=1$ is larger than $2$. Here $\fun n$ stands now for $\fun{n,\Proj{n,n}}$. The morphisms \eqref{eq:lesEetIota} and \eqref{eq:lesEetIota2} thus exist. We have checked on a few cases that the rigidity conditions \eqref{eq:rigid1} hold for proper choices of the $e$'s and $\iota$'s. 
%Y ^^^ ici je refere aux exemples ou j'avais calcule explicitement un projecteur en utilisant la methode de Goodman-Wenzl.
The full subcategory $\text{\rm Mod}_\ctl^{\textrm{proj}}$ has an unexpected feature however: it is not any more an {\em abelian} category, a characteristic that is usually assumed in the study of fusion. Recall that, in an abelian category, every morphism has a kernel and a cokernel in the category. But, when a projective module $\mathsf P_{n,k}$ has three or four composition factors, there are morphisms $\mathsf P_{n,k}\to \mathsf P_{n,k}$ whose kernels and cokernels are not projective and thus absent from the subcategory $\text{\rm Mod}_\ctl^{\textrm{proj}}$. 

\end{subsection}
%%%%%%%%%%%%%%%%%%%%%%%%%%
%% the bizarre morphism %%
%%%%%%%%%%%%%%%%%%%%%%%%%
\begin{subsection}{The monodromy $\com{\mathsf N}{\mathsf M}\circ\com{\mathsf M}{\mathsf N}$}\label{sec:etars}
Since the commutor $\eta$ is a natural isomorphism, the composition $\com{\fun{\overline{\mathsf{N}}}}{\fun{\overline{\mathsf{M}}}}\circ\com{\fun{\overline{\mathsf{M}}}}{\fun{\overline{\mathsf{N}}}}$ 
is a natural isomorphism of $\fun{\overline{\mathsf M}\xf\overline{\mathsf N}}$ onto itself. In conformal field theories the eigenvalues of this isomorphism are related to the monodromy of correlation functions of primary fields. We shall refer to this map as the {\em monodromy} as in the study of modular tensor categories. Because of the axiom \eqref{eq:twistCondition}, the monodromy is completely determined by the twist (see Proposition \ref{thm:theMono} below). But the definition of the twist \eqref{eq:twModtl} shows that the definition of $\tw{\fun{m,\overline M}}(k)$ depends only on $m$ and not on the component $k$. Moreover every $\tl m$-module $\mathsf M$ is the component $m$ of a functor in $\Ob(\modtl)$. We thus restrict our study of the monodromy to the fusion of the $\tl{m}$- and $\tl n$-modules $\mathsf M$ and $\mathsf N$.

The nature of the monodromy is easy to describe in the semisimple case, i.e.~when $q$ is not a root of unity. However the morphism may have a nilpotent part when $q$ is a root of unity, as will be seen below. In the following we use fairly standard notations, writing $\Irre{n,k},\Stan{n,k}$ and $\Proj{n,k}$ for the irreducible, standard and projective modules over $\tl n$. The standard $\Stan{n,k}$ was described at the beginning of \cref{sub:braidingModules}, the irreducible $\Irre{n,k}$ is its irreducible quotient and $\Proj{n,k}$ the projective cover of $\Irre{n,k}$. (See \cite{RSA, Belletete}, and also \cite{GL-Aff} where $\Stan{n,k}$ is denoted by $W_k(n)$.)

The twist $\tw{}$ defines isomorphisms of modules over the Temperley-Lieb algebras. Indeed they are defined through the invertible central elements $\tw n$ of $\tl n$ and thus define isomorphisms of modules by left multiplication. The defining property \eqref{eq:twistCondition} of the twist $\tw{}$ gives a rather explicit expression for the monodromy. We choose to state this (obvious) fact in a proposition to underline its crucial character and ease further references.
\begin{Prop}\label{thm:theMono}The monodromy $\com{\mathsf N}{\mathsf M}\circ\com{\mathsf M}{\mathsf N}$ is expressed in terms of the twist as
\begin{equation}\label{eq:theMono}
\com{\mathsf N}{\mathsf M}\circ\com{\mathsf M}{\mathsf N}=\tw{\mathsf M\xf\mathsf N}(\tw{\mathsf M}\otimes \tw{\mathsf N})^{-1}.
\end{equation}
\end{Prop}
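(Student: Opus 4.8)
The plan is to read off \eqref{eq:theMono} as an immediate rearrangement of the twist axiom \eqref{eq:twistCondition}, which has already been verified to hold on $\modtl$ at the end of \cref{sub:rigidity}. First I would record, via the isomorphism \eqref{eq:fusionMN}, that $\fun{m,\mathsf M}\xf\fun{n,\mathsf N}\simeq\fun{m+n,\mathsf M\xf\mathsf N}$, so that in the lighter notation of \eqref{eq:twModtl} the twist component of this object is exactly $\tw{\mathsf M\xf\mathsf N}$; no component index $k$ intervenes because, by the very definition \eqref{eq:twModtl}, $\tw{\fun{\overline{\mathsf M}}}(k)$ is left multiplication by the central element $\tw m\in\tl m$ regardless of $k$, and similarly $\tw{\mathsf M\xf\mathsf N}$ is multiplication by $\tw{m+n}$.

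Next I would apply \eqref{eq:twistCondition} with $A=\fun{m,\mathsf M}$ and $B=\fun{n,\mathsf N}$, which under the above identification reads
\begin{equation*}
\tw{\mathsf M\xf\mathsf N}=\com{\mathsf N}{\mathsf M}\circ\com{\mathsf M}{\mathsf N}\,(\tw{\mathsf M}\otimes\tw{\mathsf N}).
\end{equation*}
Here $\tw{\mathsf M}\otimes\tw{\mathsf N}$ denotes the bifunctor $\xf$ applied to the pair of morphisms $(\tw{\mathsf M},\tw{\mathsf N})$; concretely, on $\Hom(m+n,-)\otimes_{m,n}(\mathsf M\otimes\mathsf N)$ it is left multiplication by $\tw m\otimes_\ctl\tw n\in\tl{m+n}$. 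Since each $\tw n$ is an invertible central element of $\tl n$ (its eigenvalue on every indecomposable module is nonzero, as observed right after \eqref{eq:twModtl}), the morphisms $\tw{\mathsf M}$ and $\tw{\mathsf N}$ are isomorphisms in $\modtl$, hence so is $\tw{\mathsf M}\otimes\tw{\mathsf N}$, with inverse multiplication by $\tw m^{-1}\otimes_\ctl\tw n^{-1}$. Composing the displayed identity on the right with $(\tw{\mathsf M}\otimes\tw{\mathsf N})^{-1}$ then yields \eqref{eq:theMono}.

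There is, as the text itself stresses, no real obstacle: the proposition is a bookkeeping consequence of the twist axiom, isolated only for emphasis and ease of later reference. The single point that deserves a word of care is the identification of the twist of the fusion object $\fun{\mathsf M}\xf\fun{\mathsf N}$ with $\tw{\mathsf M\xf\mathsf N}$ under \eqref{eq:fusionMN} — transparent from the definitions \eqref{eq:defOfFusion}, \eqref{eq:fusionMN} and \eqref{eq:twModtl} — together with the invertibility of $\tw{\mathsf M}\otimes\tw{\mathsf N}$, needed so that the rearrangement from the one-sided twist relation to the two-sided monodromy formula is legitimate.
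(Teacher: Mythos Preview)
Your proposal is correct and matches the paper's approach exactly: the paper itself offers no explicit proof, calling the statement an ``(obvious) fact'' that it states only ``to underline its crucial character and ease further references,'' and your derivation---rearranging the twist axiom \eqref{eq:twistCondition} after noting the invertibility of $\tw{\mathsf M}\otimes\tw{\mathsf N}$---is precisely the intended one-line argument.
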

When $q$ is generic (not a root of unity), then the algebras $\tl n$ are semisimple for all $n$ and the standard modules $\Stan{n,k}$, with $0\leq k\leq n$ and $k\equiv n\modY 2$, provide a complete list of non-isomorphic irreducible modules. Their fusion was given in terms of the associated functor in Proposition \ref{prop:fusionFunStan}. Here is a simpler statement in terms of the modules themselves.
\begin{Prop}[ \cite{GainutdinovVasseur, Belletete}]\label{thm:fusion} Let $n_1,n_2\geq1$ and $k_1,k_2$ such that $0\leq k_i\leq n_i$ and $k_i\equiv n_i\modY 2$. Then
$$\Stan{n_1,k_1}\xf\Stan{n_2,k_2}\simeq \overset{k_1+k_2}{\underset{k=|k_1-k_2|}{{\bigoplus}{\,}'}} \Stan{n_1+n_2,k}$$
when $q$ is not a root of unity. Again $\oplus'$ indicates a direct sum whose index step is two.
\end{Prop}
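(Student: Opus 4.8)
The plan is to prove the fusion rule $\Stan{n_1,k_1}\xf\Stan{n_2,k_2}\simeq\bigoplus'_{k=|k_1-k_2|}^{k_1+k_2}\Stan{n_1+n_2,k}$ by reducing it to the two already-available ingredients: the ``half-cellular'' fusion $\Stan{k,k}\xf\Stan{m,m}$ appearing in \cref{prop:fusionFunStan} (via \cref{cor:fusionGeneric}, which records the identity $\fun m\xf\fun n\simeq\bigoplus'_{j=|m-n|}^{m+n}\fun j$ with $\fun m=\fun{m,\Stan{m,m}}$), and the stabilization isomorphism $\Stan{k+2i,k}\xf\Stan{2,0}\simeq\Stan{k+2(i+1),k}$ for $\beta\neq0$ quoted in the proof of \cref{coro:funStandard} from Prop.~A.1 of \cite{Belletete}. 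Since $q$ is not a root of unity, $\beta=-q-q^{-1}\neq0$, so both tools are in force.

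First I would record the reduction of an arbitrary standard module to a fused form of a ``top'' standard module. Iterating $\Stan{k+2i,k}\xf\Stan{2,0}\simeq\Stan{k+2(i+1),k}$ gives, for any $n\geq k$ with $n\equiv k\bmod 2$,
\begin{equation}\label{eq:stabilize}
\Stan{n,k}\simeq\Stan{k,k}\xf\underbrace{\Stan{2,0}\xf\cdots\xf\Stan{2,0}}_{(n-k)/2},
\end{equation}
using associativity and commutativity of $\xf$ (these hold for $\xf$ as recalled in the introduction; at the level of $\modtl$ they are the associator $\alpha$ and commutor $\eta$ already constructed). Applying \eqref{eq:stabilize} to both $\Stan{n_1,k_1}$ and $\Stan{n_2,k_2}$ and using associativity/commutativity to collect the $\Stan{2,0}$ factors, the left-hand side of the claimed formula becomes
\begin{equation}\label{eq:collected}
\Stan{n_1,k_1}\xf\Stan{n_2,k_2}\simeq\big(\Stan{k_1,k_1}\xf\Stan{k_2,k_2}\big)\xf\Stan{2,0}^{\xf p},\qquad p=\tfrac{(n_1-k_1)+(n_2-k_2)}{2}.
\end{equation}

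Next I would compute $\Stan{k_1,k_1}\xf\Stan{k_2,k_2}$. This is exactly the case of \cref{prop:fusionFunStan} (equivalently \cref{cor:fusionGeneric}) read at the module level: $\fun{k_1}\xf\fun{k_2}\simeq\bigoplus'_{k=|k_1-k_2|}^{k_1+k_2}\fun{k}$, i.e. $\Stan{k_1,k_1}\xf\Stan{k_2,k_2}\simeq\bigoplus'_{k=|k_1-k_2|}^{k_1+k_2}\Stan{k,k}$ (all the summands have $n$-index equal to $k$, the ``top'' value). Substituting into \eqref{eq:collected}, distributing $\xf$ over the direct sum (the bifunctor $\xf$ is additive, since $\tl{m+n}\otimes_{\tl{\overline m}\otimes_\ctl\tl{\overline n}}(-)$ is additive in each variable), and then reabsorbing the $p$ copies of $\Stan{2,0}$ back into each summand via \eqref{eq:stabilize} (which raises the $n$-index of $\Stan{k,k}$ from $k$ to $k+2p=n_1+n_2$ without changing the $k$-index), one obtains
\begin{equation}\label{eq:final}
\Stan{n_1,k_1}\xf\Stan{n_2,k_2}\simeq\overset{k_1+k_2}{\underset{k=|k_1-k_2|}{{\bigoplus}{\,}'}}\Stan{n_1+n_2,k},
\end{equation}
which is the assertion. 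I expect the main obstacle to be purely bookkeeping rather than conceptual: one must be careful that associativity and commutativity of $\xf$ are genuinely available at the level of the modules $\Stan{n,k}$ (not just the functors $\fun{}$), that the index-step-two direct sums line up correctly when the $\Stan{2,0}$ factors are extracted and reinserted, and that $\xf$ indeed commutes with finite direct sums — all of which follow from the definition \eqref{eq:defOfFusion} and the construction of $\modtl$, but should be spelled out. A secondary point worth a sentence is that the isomorphisms here are of $\tl{n_1+n_2}$-modules, obtained from the functor isomorphisms of \cref{sub:braidingModules} by evaluating at $n_1+n_2$.
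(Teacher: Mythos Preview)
The paper does not prove this proposition at all: it is simply quoted from \cite{GainutdinovVasseur,Belletete} as an input result, with no argument given. So there is no ``paper's own proof'' to compare against. Your reduction is nonetheless a correct and clean way of deriving the general fusion from the special case already recorded in \cref{prop:fusionFunStan} together with the stabilization identity $\Stan{k+2i,k}\xf\Stan{2,0}\simeq\Stan{k+2(i+1),k}$ used in \cref{coro:funStandard}; it shows that the general statement contains no new content beyond those two ingredients.

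One notational slip to fix: your displayed intermediate isomorphism $\Stan{k_1,k_1}\xf\Stan{k_2,k_2}\simeq\bigoplus'_{k}\Stan{k,k}$ cannot literally hold at the module level, since the left-hand side is a $\tl{k_1+k_2}$-module while the summands $\Stan{k,k}$ live over the different algebras $\tl{k}$. What \cref{prop:fusionFunStan} actually gives is the functor isomorphism $\fun{k_1}\xf\fun{k_2}\simeq\bigoplus'_{k}\fun{k}$, and to read it at the module level you must first use \cref{coro:funStandard} to replace each $\fun{k}=\fun{k,\Stan{k,k}}$ by $\fun{k_1+k_2,\Stan{k_1+k_2,k}}$ and then evaluate at $k_1+k_2$, yielding $\Stan{k_1,k_1}\xf\Stan{k_2,k_2}\simeq\bigoplus'_{k}\Stan{k_1+k_2,k}$. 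After that correction the rest of your argument (tensoring by $\Stan{2,0}^{\xf p}$ and reabsorbing via stabilization) goes through exactly as you wrote it.
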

\cref{thm:theMono} then gives a complete characterization of the monodromy of standard modules in this generic case.
\begin{Prop}\label{thm:eigenMono}The monodromy $\com{\Stan{n_2,k_2}}{\Stan{n_1,k_1}}\circ\com{\Stan{n_1,k_1}}{\Stan{n_2,k_2}}$ acts on the direct summand $\Stan{n_1+n_2,k}$ of $\Stan{n_1,k_1}\xf\Stan{n_2,k_2}$ as the identity times
\begin{equation}\label{eq:eigenMono}\mu_{k_1,k_2,k}=q^{k(k/2+1)-k_1(k_1/2+1)-k_2(k_2/2+1)}\end{equation}
when $q$ is not a root of unity.
\end{Prop}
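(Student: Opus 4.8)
The plan is to reduce the computation of the monodromy eigenvalue on each summand $\Stan{n_1+n_2,k}$ to a computation of the scalar by which the twist $\tw{m}$ acts on the standard module $\Stan{m,j}$, and then to apply \cref{thm:theMono}. By \cref{thm:fusion} the fusion $\Stan{n_1,k_1}\xf\Stan{n_2,k_2}$ decomposes as a direct sum of the $\Stan{n_1+n_2,k}$ with $k$ running from $|k_1-k_2|$ to $k_1+k_2$ in steps of two, and by \cref{thm:theMono} the monodromy equals $\tw{\Stan{n_1,k_1}\xf\Stan{n_2,k_2}}\circ(\tw{\Stan{n_1,k_1}}\otimes \tw{\Stan{n_2,k_2}})^{-1}$. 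Since each $\tw{m}=\z m$ is a \emph{central} element of $\tl m$ (by \cref{thm:cn}), it acts on every indecomposable $\tl m$-module by a scalar; in the semisimple case each $\Stan{m,j}$ is irreducible, so $\tw m$ acts on $\Stan{m,j}$ as a scalar $\gamma_{m,j}\,\id$. The first summand $\tw{\Stan{n_1,k_1}\xf\Stan{n_2,k_2}}$ acts on the piece $\Stan{n_1+n_2,k}$ by $\gamma_{n_1+n_2,k}$, while the second factor acts by $\gamma_{n_1,k_1}\gamma_{n_2,k_2}$ on all of $\Stan{n_1,k_1}\otimes\Stan{n_2,k_2}$ and hence on that piece too. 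Therefore the monodromy eigenvalue on $\Stan{n_1+n_2,k}$ is $\mu_{k_1,k_2,k}=\gamma_{n_1+n_2,k}\,\gamma_{n_1,k_1}^{-1}\gamma_{n_2,k_2}^{-1}$.

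It then remains to compute the scalar $\gamma_{m,j}$ by which $\tw m=\z m$ acts on $\Stan{m,j}$. This is exactly the content of part (b) of \cref{thm:cn} (referenced in the footnote of the twist proposition), which I would invoke: $\gamma_{m,j}=q^{\,?}$ depending only on $j$ and with an $m$-dependent overall factor. The key point is that the $m$-dependence must cancel in the ratio $\gamma_{n_1+n_2,k}\gamma_{n_1,k_1}^{-1}\gamma_{n_2,k_2}^{-1}$ (because $(n_1+n_2)$ is split as $n_1+n_2$), leaving a function of $k,k_1,k_2$ only — consistent with the statement of \cref{thm:eigenMono}. Writing $\gamma_{m,j}=q^{am+f(j)}$ for the appropriate constant $a$ and function $f$, the ratio is $q^{f(k)-f(k_1)-f(k_2)}$, and matching with \eqref{eq:eigenMono} forces $f(j)=j(j/2+1)$ up to an additive constant that also cancels. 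So the genuine input is the closed form of $\gamma_{m,j}$; everything else is bookkeeping.

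Concretely the steps are: (1) apply \cref{thm:theMono} to write the monodromy as $\tw{\mathsf M\xf\mathsf N}(\tw{\mathsf M}\otimes \tw{\mathsf N})^{-1}$ with $\mathsf M=\Stan{n_1,k_1}$, $\mathsf N=\Stan{n_2,k_2}$; (2) use \cref{thm:fusion} to see $\mathsf M\xf\mathsf N=\bigoplus' \Stan{n_1+n_2,k}$, noting that the decomposition is multiplicity-free so the monodromy is diagonalizable with one eigenvalue per summand; (3) use centrality of $\tw m$ (from \cref{thm:cn}) and irreducibility of $\Stan{m,j}$ in the semisimple regime to conclude $\tw m|_{\Stan{m,j}}=\gamma_{m,j}\id$; (4) deduce $\mu_{k_1,k_2,k}=\gamma_{n_1+n_2,k}\gamma_{n_1,k_1}^{-1}\gamma_{n_2,k_2}^{-1}$; (5) substitute the explicit value of $\gamma_{m,j}$ from \cref{thm:cn}(b) and simplify, checking the cancellation of the $m$-linear term and any additive constant to recover \eqref{eq:eigenMono}.

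The main obstacle is entirely contained in step (5): obtaining the correct closed-form expression for $\gamma_{m,j}$, i.e.\ the eigenvalue of $\z m=q^{3m/2}(\T{m-1}\dots\T1)^m$ on $\Stan{m,j}$. This is where the factor $q^{3m/2}$ in the definition of $\z m$ and the way the braid-group element $(\T{m-1}\dots\T1)^m$ acts as a full twist on the strands must be translated into a scalar on the cellular module with $j$ through-lines; concretely one expects $(\T{m-1}\dots\T1)^m$ to act by $q^{-3m/2}\cdot q^{(j^2+2j)/2}\cdot(\text{phase})$, and getting the exponents exactly right — reconciling the conventions in \eqref{eq:theTi} with the action on $\Stan{m,j}$ — is the delicate part. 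Once \cref{thm:cn}(b) supplies that formula, the rest of the argument is immediate.
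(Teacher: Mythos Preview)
Your proposal is correct and follows essentially the same route as the paper: invoke \cref{thm:theMono}, let the central element $\tw m=\z m$ act by the scalar $\gamma_{m,j}$ on each irreducible $\Stan{m,j}$ via Schur's lemma, and take the ratio $\gamma_{n_1+n_2,k}/(\gamma_{n_1,k_1}\gamma_{n_2,k_2})$. One small simplification: by \cref{thm:cn}(b) the eigenvalue is $\gamma_{m,j}=q^{j(j+2)/2}$, which already depends only on $j$ and not on $m$ at all, so there is no $m$-linear term to cancel in step~(5) --- the ratio is immediately $q^{k(k/2+1)-k_1(k_1/2+1)-k_2(k_2/2+1)}$.
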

\begin{proof}The central elements $\z n=\tw n$ act as multiples of the identity on the irreducible $\Stan{n,k}$ by Schur's lemma. The eigenvalues $\gamma_{n,k}$ of $\z n$ on the standard modules $\Stan{n,k}$ are obtained in \cref{thm:cn}. Thus
$$\left.(\tw {n_1}\otimes\tw{n_2})^{-1}\right|_{\Stan{n_1,k_1}\xf\Stan{n_2,k_2}}=\frac1{\gamma_{n_1,k_1}\gamma_{n_2,k_2}}\cdot \id.$$
The restriction of the monodromy to the direct summand $\Stan{n_1+n_2,k}$ of the fusion is then $\gamma_{n_1+n_2,k}/(\gamma_{n_1,k_1}\gamma_{n_2,k_2})$. A direct computation leads to the desired expression.
\end{proof}
\noindent It is worthwhile to note that the multiple of the identity is independent of $n_1$ and $n_2$ whose only role here is to fix the parities of $k_1$ and $k_2$.

\medskip

When $q$ is a root of unity, the monodromy $\com{\mathsf N}{\mathsf M}\circ\com{\mathsf M}{\mathsf N}$ is still given by \cref{thm:theMono}, but it might not be a multiple of the identity. Indeed the action of the central elements $\z n$ on $\tl n$-modules is in general not such a multiple. The \cref{cor:trivialHom} and the paragraph leading to it show that such a non-trivial action, i.e.~a non-diagonalisable action, might occur only on projective modules with three or four composition factors. Let $\Proj{n.k}$ be such a projective module. Then $\Hom(\Proj{n,k},\Proj{n,k})$ is two-dimensional. Beside the identity $\id$, there is a map sending the head of $\Proj{n,k}$ to its socle, both being isomorphic to $\Irre{n,k}$. Let $f$ be this map, that is, a map such that $\im f$ is the socle of $\Proj{n,k}$. This map is nilpotent: $f^2=0$. We now give two examples of such non-trivial action of the monodromy.

One of the simplest cases occurs when $q^{2\ell}=1$ for $\ell=3$. Then, even though the standard modules $\Stan{2,2}$ and $\Stan{1,1}$ are irreducible, their fusion $\Stan{2,2}\xf\Stan{1,1}$ is not. Using the expressions computed in \cite{GainutdinovVasseur, Belletete}, one finds $\Stan{2,2}\xf\Stan{1,1}\simeq \Proj{3,3}$ where $\Proj{3,3}$ is an indecomposable projective module. This projective is three-dimensional, has three one-dimensional composition factors: $\Irre{3,1}$ once and $\Irre{3,3}$ twice. The latter composition factors are isomorphic to the socle and head of the module. Hence, even though $\tw 2$ and $\tw 1$ act as multiples of the identity on $\Stan{2,2}$ and $\Stan{1,1}$, the morphism defined by $\z 3$ is not diagonalisable on $\Proj{3,3}$. In fact a direct computation shows that the monodromy in this case is $\com{\Stan{1,1}}{\Stan{2,2}}\circ\com{\Stan{2,2}}{\Stan{1,1}}=e^{4\pi i/3}\cdot \id+\nu f$ if $q$ is chosen to be $e^{2\pi i/3}$. Here $\nu$ is a non-zero constant (that depends on the basis) and $f$ is the map described above. Note that the (unique) eigenvalue of the monodromy is still correctly predicted by \eqref{eq:eigenMono}, as it should be: $\mu_{2,1,3}=q^2=e^{4\pi i/3}$. 

Our second example is more intricate: we shall study the monodromy on the product $\tl 2\xf\tl2$ for $q$ generic and $q=\sqrt{-1}$. If $q$ is generic, then $\tl2\simeq \Stan{2,0}\oplus\Stan{2,2}$. The linearity of the fusion together with \cref{thm:fusion} gives
$$\tl 2\xf\tl 2\simeq\tl 4\simeq \Stan{4,0}\oplus\Stan{4,0}\oplus\Stan{4,2}\oplus\Stan{4,2}\oplus\Stan{4,2}\oplus\Stan{4,4}.$$
Since $q$ is generic, the monodromy $\com{\tl2}{\tl2}\circ\com{\tl2}{\tl2}$ is diagonalisable with eigenvalues given by \cref{thm:eigenMono}. The eigenvalues on the two copies isomorphic to $\Stan{4,0}$ are $q^{-8}$ and $1$, on the three $\Stan{4,2}$ they are $q^{-4},1$ and $1$, and on $\Stan{4,4}$ it is $q^4$. (Note that $\com{\tl2}{\tl2}\circ\com{\tl2}{\tl2}$ does not take the same eigenvalues on isomorphic copies of the standard modules in $\tl 4$, since the multiple $\mu_{k_1,k_2,k}$  depends also on the modules begin fused.) If $q=\sqrt{-1}$ and thus $\beta=0$, then $\tl 2\simeq \Proj{2,2}$ and the fusion is then
$${\tl 2}\xf{\tl 2}\simeq\Proj{2,2}\xf\Proj{2,2}\simeq \Proj{4,2}\oplus\Proj{4,2}\oplus\Proj{4,4}.$$
At this value of $q$, the isomorphism $\com{\tl2}{\tl2}\circ\com{\tl2}{\tl2}$ is even more complicated because none of the three isomorphisms in $\tw{\tl4}(\tw{\tl2}\otimes \tw{\tl2})^{-1}$ is a multiple of the identity on the modules they act upon. Still it has a unique eigenvalue as $\mu_{2,2,0}=\mu_{2,2,2}=\mu_{2,2,4}=1$. While $\com{\tl2}{\tl2}\circ\com{\tl2}{\tl2}$ is non-diagonalisable, it is possible to find two subspaces $\mathsf A$ and $\mathsf B$, both isomorphic to $\Proj{4,2}$ that allows for an easy description of the morphism. Let $\mathsf C$ be the other summand $\Proj{4,4}$. Then
$\delta =\com{\tl2}{\tl2}\circ\com{\tl2}{\tl2}$  can be broken down into its action on each summand as
$$\left[\begin{matrix}
\comf{\mathsf A}{\mathsf A}\\
\comf{\mathsf A}{\mathsf B}\\
\comf{\mathsf A}{\mathsf C}\end{matrix}\right]:\mathsf A\longrightarrow \tl 4,\qquad
\left[\begin{matrix}
\comf{\mathsf B}{\mathsf A}\\
\comf{\mathsf B}{\mathsf B}\\
\comf{\mathsf B}{\mathsf C}\end{matrix}\right]:\mathsf B\longrightarrow \tl 4,\qquad
\left[\begin{matrix}
\comf{\mathsf C}{\mathsf A}\\
\comf{\mathsf C}{\mathsf B}\\
\comf{\mathsf C}{\mathsf C}\end{matrix}\right]:\mathsf C\longrightarrow \tl 4.$$
They are
\begin{alignat*}{5}
\comf{\mathsf A}{\mathsf A}&=\id+\sigma f,&\qquad 
\comf{\mathsf A}{\mathsf B}&=\id_{\mathsf A,\mathsf B},&\qquad 
\comf{\mathsf A}{\mathsf C}&=0,\\
\comf{\mathsf B}{\mathsf A}&=0,&\qquad 
\comf{\mathsf B}{\mathsf B}&=\id+\nu f,&\qquad 
\comf{\mathsf B}{\mathsf C}&=0,\\
\comf{\mathsf C}{\mathsf A}&=0,&\qquad 
\comf{\mathsf C}{\mathsf B}&=0,&\qquad 
\comf{\mathsf C}{\mathsf C}&=\id+\rho f,
\end{alignat*}
where $\sigma,\nu,\rho$ are non-zero constants and $\id_{\mathsf A,\mathsf B}$ stands for the isomorphism between $\mathsf A$ and $\mathsf B$. From these maps, it is straighforward to compute the Jordan form of $\eta$. Its non-trivial Jordan blocks are $2$ blocks $3\times 3$ and $2$ blocks $2\times 2$.

The root $q=\sqrt{-1}$ is somewhat special in the representation theory of the algebra $\tl n$: It is the only value for which the semisimplicity of $\tl n$ varies with the parity of $n$. (For all other roots $q^{2\ell}=1$ with $\ell\geq 3$, the algebra $\tl n(\beta=-q-q^{-1})$, $n\geq\ell,$ is never semisimple.) Although the example above was given at this particular value $q=\sqrt{-1}$, it seems to be representative of what happens at other values of $q$.

\end{subsection}
\end{section}
%%%%%%%%%%%%%%%%%%%%%%%%%%%%%%%%%%%%%%%%%%%%%%%%%%%%%%%%%%%%%%%%%%%%%%%%%%%%%%%%
%%%%%%%%%%%%%%%%%%%%%%%%%%%%%%%%%%%%%
%%%%% ties between braiding and %%%%%
%%%%% exactly-solvable models   %%%%%
%%%%%%%%%%%%%%%%%%%%%%%%%%%%%%%%%%%%%
\begin{section}{Braiding and integrability}\label{sec:integrability}
%\begin{subsection}{The Temperley-Lieb family and solvable models}\label{sec:loopModels}

One of the most profound uses of Temperley-Lieb algebras in physics is in the study of solvable models, like the XXZ Hamiltonians or loop models on two-dimensional lattices. The goal of the present section is to tie braiding and integrability in some of these statistical models. The former will appear through the {\em elementary brainding} $\com 11$ (or $\T i(n)$) that was used in \cref{prop:ctlnisbraided} to write all other components $\com rs$ of the braiding natural isomorphism. The latter will also be cast in terms of a fundamental ``face operator" that must satisfy three identities. The physical object, that is, the Hamiltonian or the transfer matrix, is then defined in terms of several copies of this face operator.

In the literature on statistical models, the {\em face operator} $\X i(q, u)$ is also an element of one of the algebras $\tl n(\beta)$. It depends on several parameters: The {\em spectral parameter} $\lambda$, tied to $\beta$ by $\beta=-q-q^{-1}$ and $q=e^{i\lambda}$, and the {\em anisotropy parameter} $u$ that measures the ratio of the interaction constants along two linearly independent vectors spanning the lattice. As for the $\T i$, the $\X i$ is usually a linear combination of $\tl n$ generators and it is represented graphically by
\begin{equation*}
\begin{tikzpicture}[scale = 1/3, baseline={(current bounding box.center)}]
\tuile{0}{0}{u}
\draw[thick] (-1/2, 1/2 ) -- (1/2, 1/2);
\draw[thick] (-1/2, -1/2 ) -- (1/2, -1/2);
\draw[thick] (5/2, 1/2 ) -- (3/2, 1/2);
\draw[thick] (5/2, -1/2 ) -- (3/2, -1/2);
\node (center) at (3.3, 1/2) {$\scriptstyle{i}$};
\node (center) at (3.3,-1/2) {$\scriptstyle{i+1}$};
\end{tikzpicture}
\end{equation*}
Since all faces will be evaluated at the same value of the parameter $q$ (or $\lambda$), this parameter is often omitted. In terms of the face $\X i(q,u)$, the transfer matrix $D_n(\lambda,u)\in\Hom(n,n)$ on $n$ sites is constructed out of $2n$ tiles organized in diagonal lines. For example the case $n=3$ is depicted as follows:
$$D_3(\lambda,u)\ =\ 
\begin{tikzpicture}[scale = 1/3, baseline={(current bounding box.center)}]
\tuile{-0.5}{-0.5}{u}
\tuile{-1.75}{-1.75}{u}
\tuile{-3}{-3}{u}
\tuile{2}{-0.5}{u}
\tuile{3.25}{-1.75}{u}
\tuile{4.5}{-3}{u}
\draw[thick] (6.0,-2.5) -- (7.25,-2.5);
\draw[thick] (4.75,-1.25) -- (7.25,-1.25);
\draw[thick] (3.5,0) -- (7.25,0);
\draw[thick] (-2.5,-2.5) -- (-3.5,-2.5);
\draw[thick] (-1.25,-1.25) -- (-3.5,-1.25);
\draw[thick] (-0,0) -- (-3.5,0);
\draw[thick] (1,0) -- (2.5,0);
\draw[thick] (1,-1) -- (2.5,-1);
\draw[thick] (-0.25,-2.25) -- (3.75,-2.25);
\draw[thick] (-1.5,-3.5) -- (5.0,-3.5);
\draw[thick] (3.5,-1) -- (4.,-1);
\draw[thick] (4.75,-2.25) -- (5.25,-2.25);
\draw[thick] (0,-1) -- (-0.5,-1);
\draw[thick] (-1.25,-2.25) -- (-1.75,-2.25);
\draw[thick] (5.5,-4.75) arc (-90:55:0.625);
\draw[thick] (-2.35,-3.605) arc (125:270:0.625);
\draw[thick] (-2,-4.75) -- (5.5,-4.75);
\end{tikzpicture}\ \ .
$$
In the notation of the previous section, it is thus
\begin{equation}\label{eq:transfer}D_n(\lambda,u)=(\tens{1_n}{z^t})\circ\Big(\prod_{i=1}^n\X i(q,u)\prod_{i=n}^1 \X i(q,u)\Big)\circ(\tens{1_n}{z})\in\Hom(n,n).\end{equation}
Its physical properties are revealed through its spectrum in some representations. It was recognized by Behrend, Pearce and O'Brien \cite{BPOB} that some algebraic conditions on the face operator $\X i(q,u)$ ensure that the transfer matrix, constructed from it, will have the properties that $D_n(\lambda,u)\circ D_n(\lambda,v)-D_n(\lambda,v)\circ D_n(\lambda,u)=0$ in $\tl n$. This means that, in any representation $\phi:\tl n\to gl({\mathsf V})$ with ${\mathsf V}$ some vector space, the matrices $\phi(D_n(\lambda,u))$ and $\phi(D_n(\lambda,v))$ will commute for all values $u$ and $v$. The modes $\phi(D_n(\lambda,u))$ in any expansion with respect to $u$ (Taylor's expansion, Fourier's, ...) will commute, that is, they will be integrals of motions. Thus the integrability of the models based on such a transfer matrix $D_n$ follows from these algebraic conditions. Here they are.
\begin{Prop}[section 3.4 of \cite{BPOB}]\label{thm:bpob} If $\X i(q,u)$ verifies the following three conditions, then $D_n(\lambda,u)\circ D_n(\lambda,v)=D_n(\lambda,v)\circ D_n(\lambda,u)$, for all $u,v\in\mathbb C$:
\begin{align}
\textup{(Yang-Baxter equation)}\quad & \X i(q,u)\X{i+1}(q,v)\X i(q,v/u) =\X{i+1}(q,v/u)\X i(q,v)\X{i+1}(q, u),\label{eq:YB} \\
\textup{(inversion relation)}\quad & \X i(q,u)\X i(q,u^{-1})=\rho(q,u)\id, \label{eq:inversion}\\
\textup{(boundary Yang-Baxter)}\quad &\X{i}(q,u)\X{i+1}(q,v)\circ(\tens{z}{z})\notag \\
& \qquad = \X{i}(q,u)\X{i-1}(q,v)\circ(\tens{z}{z})\label{eq:BYB}
\end{align}
for some non-identically zero function $\rho(q,u)$.
\end{Prop}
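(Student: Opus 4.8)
The proof follows the classical \emph{train argument} (also called the \emph{zipper} or \emph{railroad} argument), which I would carry out entirely inside $\ctl$ using the diagrammatic calculus. The starting point is to view the composite $D_n(\lambda,u)\circ D_n(\lambda,v)$ as a single planar diagram: by \eqref{eq:transfer} it is assembled from $4n$ copies of the face operator---two diagonal ``fans'' of $u$-faces and two of $v$-faces---glued along matching edges and capped by the cup/cap morphisms $\tens{1_n}{z}$ and $\tens{1_n}{z^t}$ at the outer and inner boundaries. The aim is to transform this diagram, using only local moves, into the one representing $D_n(\lambda,v)\circ D_n(\lambda,u)$; since these are morphisms in $\tl n(\beta)$, the identity then holds in every representation $\phi\colon\tl n\to gl(\mathsf V)$, so that the operators $\phi(D_n(\lambda,u))$ pairwise commute and the coefficients of any expansion in $u$ are integrals of motion.

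The first move creates a transparent ``defect line.'' Because $\rho(q,\cdot)$ is not identically zero, the inversion relation \eqref{eq:inversion} lets us write $\id=\rho(q,v/u)^{-1}\,\X{i}(q,v/u)\,\X{i}(q,u/v)$, and we insert this pair along an edge between the $u$-part and the $v$-part of the diagram. Keeping $\X{i}(q,u/v)$ fixed, I would then transport $\X{i}(q,v/u)$ across the whole lattice: each time it meets a $u$-face or a $v$-face, one application of the Yang--Baxter equation \eqref{eq:YB} moves it past. An induction on the number of faces already crossed shows that the diagram is unchanged after each such hop, and iterating this sweeps the travelling face through an entire fan.

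When the travelling face reaches a cap $z$ or a cup $z^t$, the boundary Yang--Baxter equation \eqref{eq:BYB} applies and ``reflects'' it, so that it re-enters the lattice through the faces of the \emph{other} transfer matrix; I would then push it back with \eqref{eq:YB} until it returns next to its partner $\X{i}(q,u/v)$. Annihilating the pair with \eqref{eq:inversion} produces a scalar $\rho(q,u/v)=\rho(q,v/u)$ (equality of the two following by conjugating \eqref{eq:inversion} by the invertible $\X{i}(q,u/v)$), which cancels the $\rho(q,v/u)^{-1}$ introduced at the start. Because moving the defect line from one side of the lattice to the other interchanges the $u$-faces with the $v$-faces, the diagram left at the end is exactly $D_n(\lambda,v)\circ D_n(\lambda,u)$, which is the claim.

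The bulk part of this is a routine, if lengthy, iteration of \eqref{eq:YB}; the genuine difficulty---and where essentially all the care is needed---lies at the two corners where the diagonal fans meet the caps. There one must check that \eqref{eq:BYB} is applied with exactly the correct indices and orientations (the fans reverse direction at the caps), that the inserted pair is threaded around the folded diagram consistently, and that no stray scalar other than the $\rho$'s survives. Making the bookkeeping precise---in particular a clean formulation of the inductive statement that tracks the travelling face as it winds around the diagram---is the main obstacle; once that is set up, each individual step is an instance of \eqref{eq:YB}, \eqref{eq:inversion}, or \eqref{eq:BYB}.
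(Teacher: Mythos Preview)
The paper does not give its own proof of this proposition: it is quoted as a result from \cite{BPOB} (section 3.4), and the surrounding text treats it as established input to the discussion of integrability. So there is no ``paper's proof'' to compare against.

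Your outline is the standard train/zipper argument and is the approach used in the cited reference. The key ingredients are exactly as you list them: insert an inverse pair via \eqref{eq:inversion}, propagate one face through the bulk with repeated use of \eqref{eq:YB}, reflect at the seam where the two diagonal fans meet the cap via \eqref{eq:BYB}, and reabsorb. One remark: in the specific geometry of \eqref{eq:transfer} the boundary condition is just the single cap $\tens{1_n}{z}$ at the bottom (and its transpose at the top), so the ``reflection'' happens only at the turning point between the descending and ascending fans, not at two separate walls; make sure your inductive statement is phrased for that single-seam, double-row geometry rather than for a strip with two independent boundaries. With that adjustment your sketch is correct, and you are right that the only real content left is honest bookkeeping of indices as the defect rounds the corner.
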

These conditions are found in the literature drawn as follows:
\begin{align*}
\textup{(Yang-Baxter equation)}\qquad & \begin{tikzpicture}[scale = 1/3, baseline={(current bounding box.center)}]
\tuile{-0.5}{-0.5}{v/u}
\tuile{-1.75}{-1.75}{v}
\tuile{-3}{-0.5}{u}
\draw[thick] (-2.4,-1.125) -- (-3.5,-1.125);
\draw[thick] (-1.625,-1.125) -- (-1.15,-1.125);
\draw[thick] (-3.5,0) -- (-2.5,0);
\draw[thick] (-1.5,0) -- (-0,0);
\draw[thick] (1,0) -- (2.5,0);
\draw[thick] (0.9,-1.125) -- (2.5,-1.125);
\draw[thick] (-0.25,-2.25) -- (2.5,-2.25);
\draw[thick] (-1.25,-2.25) -- (-3.5,-2.25);
\draw[thick] (-0.375,-1.125) -- (0.125,-1.125);
\end{tikzpicture}
\ = \ 
\begin{tikzpicture}[scale = 1/3, baseline={(current bounding box.center)}]
\tuile{-0.5}{-1.75}{u}
\tuile{-1.75}{-0.5}{v}
\tuile{-3}{-1.75}{v/u}
\draw[thick] (-2.4,-1.125) -- (-3.5,-1.125);
\draw[thick] (-1.625,-1.125) -- (-1.15,-1.125);
\draw[thick] (-3.5,-2.25) -- (-2.5,-2.25);
\draw[thick] (-1.5,-2.25) -- (-0,-2.25);
\draw[thick] (1,-2.25) -- (2.5,-2.25);
\draw[thick] (0.9,-1.125) -- (2.5,-1.125);
\draw[thick] (-0.25,0) -- (2.5,0);
\draw[thick] (-1.25,0) -- (-3.5,0);
\draw[thick] (-0.375,-1.125) -- (0.125,-1.125);
\end{tikzpicture}\\
\textup{(inversion relation)}\qquad & 
\begin{tikzpicture}[scale = 1/3, baseline={(current bounding box.center)}]
\tuile{-0.5}{-1.75}{1/u}
\tuile{-3}{-1.75}{u}
\draw[thick] (-2.4,-1.125) -- (-3.5,-1.125);
\draw[thick] (-1.625,-1.125) -- (0.125,-1.125);
\draw[thick] (-3.5,-2.25) -- (-2.5,-2.25);
\draw[thick] (-1.5,-2.25) -- (-0,-2.25);
\draw[thick] (1,-2.25) -- (2.5,-2.25);
\draw[thick] (0.9,-1.125) -- (2.5,-1.125);
\end{tikzpicture} \ = \ \rho(q,u)\id
\\
\textup{(boundary Yang-Baxter)}\qquad &
%\begin{tikzpicture}[scale = 1/3, baseline={(current bounding box.center)}]
%\tuile{-1.75}{-0.5}{\ q/uv}
%\tuile{-3}{-1.75}{u/v}
%\node (center) at (0,0.75) {$\ $};
%\draw[thick] (-2.4,-1.125) -- (-3.5,-1.125);
%\draw[thick] (-1.625,-1.125) -- (-1.15,-1.125);
%\draw[thick] (-3.5,0) -- (-1.25,0);
%\draw[thick] (-0.25,0) -- (0.25,0);
%\draw[thick] (-3.5,1.125) -- (0.25,1.125);
%\draw[thick] (-2.5,-2.25) -- (-3.5,-2.25);
%\draw[thick] (-0.375,-1.125) -- (0.25,-1.125);
%\draw[thick] (-1.5,-2.25) -- (0.25,-2.25);
%\draw[thick] (0.25,-2.25) arc (-90:90:0.5625);
%\draw[thick] (0.25,0) arc (-90:90:0.5625);
%\end{tikzpicture}
\begin{tikzpicture}[scale = 1/3, baseline={(current bounding box.center)}]
\tuile{-3}{-0.5}{u}
\tuile{-1.75}{-1.75}{v}
\node (center) at (0,0.75) {$\ $};
\draw[thick] (-2.9,1.125) -- (1,1.125);
\draw[thick] (-1.5,0) -- (1,0);
\draw[thick] (-2.9,0) -- (-2.5,0);
\draw[thick] (-2.9,-1.125) -- (-2.4,-1.125);
\draw[thick] (-1.625,-1.125) -- (-1.15,-1.125);
\draw[thick] (-0.35,-1.125) -- (1,-1.125);
\draw[thick] (-0.25,-2.25) -- (1,-2.25);
\draw[thick] (-2.9,-2.25) -- (-1.25,-2.25);
\draw[thick] (-0.25,-2.25) -- (1,-2.25);
\draw[thick] (-2.9,1.125) arc (90:270:0.5625);
\draw[thick] (-2.9,-1.125) arc (90:270:0.5625);
\end{tikzpicture}
\ = \ 
\begin{tikzpicture}[scale = 1/3, baseline={(current bounding box.center)}]
\tuile{-0.25}{-0.5}{u}
\tuile{1}{0.75}{v}
\node (center) at (0,-2.9) {$\ $};
\draw[thick] (1.15,-1.125) -- (3.875,-1.125);
\draw[thick] (0.375,-1.125) -- (-0.25,-1.125);
\draw[thick] (3.875,0) -- (2.25,0);
\draw[thick] (1.75,0) -- (1.25,0);
\draw[thick] (0.25,0) -- (-0.25,0);
\draw[thick] (3.875,1.125) -- (2.625,1.125);
\draw[thick] (1.375,1.125) -- (-0.25,1.125);
\draw[thick] (-0.25,-2.25) -- (3.875,-2.25);
\draw[thick] (-0.25,1.125) arc (90:270:0.5625);
\draw[thick] (-0.25,-1.125) arc (90:270:0.5625);
\end{tikzpicture}
%\begin{tikzpicture}[scale = 1/3, baseline={(current bounding box.center)}]
%\tuile{-1.75}{-0.5}{\ q/uv}
%\tuile{-3}{0.75}{u/v}
%\node (center) at (0,-2.9) {$\ $};
%\draw[thick] (-1.15,-1.125) -- (-3.5,-1.125);
%\draw[thick] (-0.375,-1.125) -- (0.25,-1.125);
%\draw[thick] (-3.5,0) -- (-2.25,0);
%\draw[thick] (-1.75,0) -- (-1.25,0);
%\draw[thick] (-0.25,0) -- (0.25,0);
%\draw[thick] (-3.5,1.125) -- (-2.625,1.125);
%\draw[thick] (-1.375,1.125) -- (0.25,1.125);
%\draw[thick] (-1.25,-2.25) -- (-3.5,-2.25);
%\draw[thick] (-3.5,-2.25) -- (0.25,-2.25);
%\draw[thick] (0.25,-2.25) arc (-90:90:0.5625);
%\draw[thick] (0.25,0) arc (-90:90:0.5625);
%\end{tikzpicture}
\end{align*}

It is not too difficult to construct such a face operator $\X i$ out of the elementary braiding element $\com11=\T i$. As an intermediary step, consider
$$\Y i(u)=u^{-1}\T i-u\T i^{-1}.$$
Both products $\Y i(u)\Y{i+1}(v)\Y i(w)$ and $\Y{i+1}(w)\Y i(v)\Y{i+1}(u)$ contain eight terms, each cubic in the generators $\T i$, $\T{i+1}$ and their inverses. The identity \eqref{eq:braid} gives rise to the six following ones:
\begin{equation*}
\begin{alignedat}{5}
\T i\T{i+1}\T i&=&\T{i+1}\T i\T{i+1}\,, \qquad & 
\T{i+1}\T{i}\T{i+1}^{-1}&=&\T i^{-1}\T{i+1}\T i\,, \qquad &
\T i\T{i+1}\T i^{-1}&=&\T{i+1}^{-1}\T i\T{i+1}\,, \\
\T{i+1}\T i^{-1}\T{i+1}^{-1}&=&\T i^{-1}\T{i+1}^{-1}\T i\,, \qquad &
\T i\T{i+1}^{-1}\T i^{-1}&=&\T{i+1}^{-1}\T i^{-1}\T{i+1}\,, \qquad &
\T i^{-1}\T{i+1}^{-1}\T i^{-1}&=&\T{i+1}^{-1}\T i^{-1}\T{i+1}^{-1}\,.
\end{alignedat}
\end{equation*}
Thanks to these identities, all sixteen terms of the difference of triple products of the $\Y i$ cancel pairwise, but four:
\begin{align}\Y i(u)\Y{i+1}(v)\Y i(w) & -\Y{i+1}(w)\Y i(v)\Y{i+1}(u)\label{eq:diffYB}\\
& = 
uv^{-1}w(\T i^{-1}\T{i+1}\T i^{-1}-\T{i+1}^{-1}\T i\T{i+1}^{-1})
\ -u^{-1}vw^{-1}(\T i\T{i+1}^{-1}\T i-\T{i+1}\T i^{-1}\T{i+1}).\notag
\end{align}
Moreover it is easily verified that
\begin{equation}\label{eq:titim1ti}
(\T i^{-1}\T{i+1}\T i^{-1}-\T{i+1}^{-1}\T i\T{i+1}^{-1})=q(\T i\T{i+1}^{-1}\T i-\T{i+1}\T i^{-1}\T{i+1}).\end{equation}
So the difference \eqref{eq:diffYB} will be zero if $quv^{-1}w=u^{-1}vw^{-1}$. This is easily achieved with the following definition of the $\XX i$.
%
% proposition
%
\begin{Prop}\label{thm:YBoriginal}Let $n\geq2$. The $\XX i(q,u)$ defined by
\begin{equation}\label{eq:defX}
\XX i(q,u)=\frac{\sqrt q}{u}\T i-\frac{u}{\sqrt q}\T i^{-1},\qquad\textrm{for }i<n,
\end{equation}
satisfy the three conditions \eqref{eq:YB}--\eqref{eq:BYB} with $\rho(q,u)=((q^2+q^{-2})-(u^2+u^{-2}))$.
\end{Prop}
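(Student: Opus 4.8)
The plan is to verify the three conditions \eqref{eq:YB}, \eqref{eq:inversion} and \eqref{eq:BYB} one at a time, reducing each to relations already available for the $\T i$. For the Yang--Baxter equation \eqref{eq:YB} I would recognise that $\XX i(q,u)=\Y i(u/\sqrt q)$, with $\Y i(w)=w^{-1}\T i-w\T i^{-1}$ as in \eqref{eq:diffYB}, and then read \eqref{eq:YB} directly off the difference \eqref{eq:diffYB}. Setting $a=u/\sqrt q$, $b=v/\sqrt q$ and $c=(v/u)/\sqrt q$, the left-hand side of \eqref{eq:YB} is $\Y i(a)\Y{i+1}(b)\Y i(c)$ and the right-hand side is $\Y{i+1}(c)\Y i(b)\Y{i+1}(a)$, so \eqref{eq:diffYB} gives their difference as $ab^{-1}c(\T i^{-1}\T{i+1}\T i^{-1}-\T{i+1}^{-1}\T i\T{i+1}^{-1})-a^{-1}bc^{-1}(\T i\T{i+1}^{-1}\T i-\T{i+1}\T i^{-1}\T{i+1})$; by \eqref{eq:titim1ti} this equals $(q\,ab^{-1}c-a^{-1}bc^{-1})(\T i\T{i+1}^{-1}\T i-\T{i+1}\T i^{-1}\T{i+1})$, which vanishes as soon as $q(ac)^2=b^2$. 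Since $ac=\frac{u}{\sqrt q}\cdot\frac{v}{u\sqrt q}=v/q$ one has $q(ac)^2=v^2/q=b^2$, so \eqref{eq:YB} holds; this is precisely the purpose of the factor $\sqrt q$ in \eqref{eq:defX}.

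For the inversion relation \eqref{eq:inversion} the plan is a direct expansion in $\tl n$:
$$\XX i(q,u)\XX i(q,u^{-1})=\Big(\frac{\sqrt q}{u}\T i-\frac{u}{\sqrt q}\T i^{-1}\Big)\Big(\sqrt q\,u\,\T i-\frac{1}{\sqrt q\,u}\T i^{-1}\Big)=q\,\T i^2+q^{-1}\T i^{-2}-(u^2+u^{-2})\,1_n.$$
Using \eqref{eq:theTi} together with $e_i^2=\beta e_i$ and $\beta=-q-q^{-1}$, a short computation gives $q\,\T i^2=q^2\,1_n+(q-q^{-1})e_i$ and $q^{-1}\T i^{-2}=q^{-2}\,1_n+(q^{-1}-q)e_i$, so the $e_i$-terms cancel and $q\,\T i^2+q^{-1}\T i^{-2}=(q^2+q^{-2})\,1_n$. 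Hence $\XX i(q,u)\XX i(q,u^{-1})=\big((q^2+q^{-2})-(u^2+u^{-2})\big)\id=\rho(q,u)\id$, with $\rho$ not identically zero.

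For the boundary Yang--Baxter equation \eqref{eq:BYB} I would first rewrite, via \eqref{eq:theTi}, $\XX j(q,w)=(qw^{-1}-q^{-1}w)\,1_n+(w^{-1}-w)\,e_j$, and then use the planar calculus for the cup $z$: composing $e_j$ with a cup $z$ attached at sites $j,j+1$ closes a loop and produces the factor $\beta$, so $\XX j(q,w)$ acts on such a $z$ by the scalar $qw-q^{-1}w^{-1}$ (more generally, the reflection moves of \cref{lem:braidingbubbles} govern how the $\T i^{\pm1}$ slide past the caps). With the sites labelled so that $z\otimes z$ carries cups on $\{i-1,i\}$ and $\{i+1,i+2\}$, both $\XX{i+1}(q,v)$ and $\XX{i-1}(q,v)$ act on $z\otimes z$ by the same scalar $qv-q^{-1}v^{-1}$, so both sides of \eqref{eq:BYB} collapse to $(qv-q^{-1}v^{-1})\,\XX i(q,u)\circ(z\otimes z)$ and therefore agree.

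I expect the only real difficulty to be bookkeeping rather than anything structural: in \eqref{eq:YB} one must carefully match the three spectral arguments to the pattern of \eqref{eq:diffYB}, and in \eqref{eq:BYB} one must pin down the diagrammatic conventions for $z\otimes z$ and the neighbouring indices $i-1,i,i+1$ so that the claimed cancellations are precisely the ones that occur. No new structural input is needed beyond \eqref{eq:theTi}, the difference \eqref{eq:diffYB}, the identity \eqref{eq:titim1ti}, and the planar calculus for the cups $z$, all of which are already in place.
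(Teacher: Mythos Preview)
Your argument is correct and follows the same route as the paper: the Yang--Baxter check is exactly the substitution $u\mapsto u/\sqrt q$ into the discussion around \eqref{eq:diffYB}--\eqref{eq:titim1ti}, and your explicit expansions for \eqref{eq:inversion} and \eqref{eq:BYB} are precisely what the paper's terse ``obtained by expanding the $\XX i$'' stands for. The only blemishes are cosmetic: the scalar you compute for $\XX j(q,w)$ acting on a cup should read $qw-q^{-1}w^{-1}$ rather than $qv-q^{-1}v^{-1}$, and the parenthetical invocation of \cref{lem:braidingbubbles} is unnecessary since $e_{i\pm1}(z\otimes z)=\beta(z\otimes z)$ already does all the work.
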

\begin{proof}With the new weights $u\mapsto u'=u/\sqrt q$, the relation $qu'v'^{-1}w'=u'^{-1}v'w'^{-1}$ with $w'=v'/u'$ is true and the Yang-Baxter is verified. The other two equations are obtained by expanding the $\XX i$.
\end{proof}
The solution $\XX i$ of the three conditions in \cref{thm:bpob} is well-known. For example, Section 3 of \cite{PRZ} is devoted to this solution and its relationship with the Temperley-Lieb algebra. (Note that their $\lambda$ and our $q$ is related by $q=e^{i\lambda}$ and their $u$ and ours is also related by an exponential. Their $\beta$ is $q+q^{-1}$ while ours is $-q-q^{-1}$. Finally they consider a larger class of boundary conditions that those above.) However the above discussion shows how the braiding of the Temperley-Lieb category $\ctl$ and integrability of statistical models are intimately related.

%\end{subsection}
\end{section}

%%%%%%%%%%%%%%%%%%%%%%%%%%%%%%%%%%%%%%%%%%%%%%%%%%%%%%%%%%%%%%%%%%%%%%%%%%%%%%%%
%%%%%%%%%%%%%%%%%%%%%%%%%%%%%%%%%%%%%
%%%%%%%%% the dilute family %%%%%%%%%
%%%%%%%%%%%%%%%%%%%%%%%%%%%%%%%%%%%%%
%%%%%%%%%%%%%%%%%%%%%%%%%%%%%%%%%%%%%
\begin{section}{The dilute category $\cdtl$}\label{sec:dilute}
	
The dilute Temperley-Lieb algebras $\dtl{n}(\beta)$ are a family of algebras defined through diagrams similar to those appearing in the original algebras $\tl{n}(\beta)$. This family can be cast into a category $\cdtl$ similar to the category $\ctl$ introduced in \cref{sub:braidingTL}. This new category can also be given a braided structure. This section introduces this structure and discusses the relationship between the braiding on $\cdtl$ and the integrability of dilute statistical models.

We start by giving the definition of the category $\cdtl$, while recalling the definitions of the algebras $\dtl n$ themselves. (See \cite{BSA} for further details on the dilute family.) The objects of the \emph{dilute Temperley-Lieb category $\cdtl$} are the non-negative integers. The morphisms between two integers $n$ and $m$ are defined as linear combinations of \emph{dilute} $(n,m)$-diagrams. These dilute diagrams are defined in the same way as the $(n,m)$-diagrams appearing in $\ctl$ except that nodes on either sides of the diagrams are now allowed to be free of strings; a node without a string is called a \emph{vacancy}. For example, the following are all acceptable dilute diagrams:
\begin{equation*}
	\begin{tikzpicture}[scale = 1/3, baseline={(current bounding box.center)}]
		\draw[thick] (0,0) .. controls (1,0) and (2,1) .. (3,1);
		\draw[thick] (0,3) .. controls (1,3) and (2,2) .. (3,2);
		\draw[thick] (0,1) .. controls (1,1) and (1,2) .. (0,2);
		\bord{0}{0}{4}
		\bord{3}{1}{2}
	\end{tikzpicture}\ \ ,\qquad 
	\begin{tikzpicture}[scale = 1/3, baseline={(current bounding box.center)}]
		\draw[thick] (0,0) .. controls (1,0) and (2,1) .. (3,1);
		\draw[thick] (0,1) .. controls (1,1) and (1,2) .. (0,2);
		\bord{0}{0}{4}
		\bord{3}{1}{2}
	\end{tikzpicture}\qquad\textrm{and}\qquad
	\begin{tikzpicture}[scale = 1/3, baseline={(current bounding box.center)}]
		\draw[thick] (0,0) .. controls (1,0) and (2,3/2) .. (3,3/2);
		\draw[thick] (0,3) .. controls (1,3) and (1,1) .. (0,1);
		\draw[thick] (0,4) .. controls (1,4) and (2,7/2) .. (3,7/2);
		\bord{0}{0}{5}
		\bord{3}{1/2}{4}
	\end{tikzpicture}\ \ .
\end{equation*}
The first two are elements of $\Hom(2,4)$ and the last of $\Hom(4,5)$. Composition of morphisms is defined by extending bilinearly the following composition rule. For $b$ and $c$ dilute $(m,n)$- and $(k,m)$-diagrams, the composition $b \circ c$ is a dilute $(k,n)$-diagram defined by first drawing $b$ on the left of $c$, identifying the points on the $m$ neighbouring sites, joining the strings that meets there, and then removing the points on this side. If a string is closed in this process, it is removed and the diagram obtained is multiplied by $\beta = -q-q^{-1} \in \mathbb{C}$. If a string is attached to only one of its extremities (because it was joined to a vacancy during the composition), the result $b\circ c$ is the zero morphism\footnote{Note that the case of a string ending at a vacancy can also be resolved by simply removing it and replacing its two ends by vacancies. This then yields a different structure, the planar rook algebras (see, for example, \cite{flathHalversonHerbig}).}. Here are few examples of these compositions. In the first $b\in\Hom(4,5), c\in\Hom(2,4)$ and $b\circ c\in\Hom(2,5)$.
\begin{equation*}
	\begin{tikzpicture}[scale = 1/3, baseline={(current bounding box.center)}]
		\draw[thick] (0,0) .. controls (1,0) and (2,3/2) .. (3,3/2);
		\draw[thick] (0,3) .. controls (1,3) and (1,1) .. (0,1);
		\draw[thick] (0,4) .. controls (1,4) and (2,7/2) .. (3,7/2);
		\bord{0}{0}{5}
		\bord{3}{1/2}{4}
		\node[anchor = west] at (7/2,2) {$\circ$};
		\draw[thick] (5,3/2) -- (8,3/2);
		\draw[thick] (5,7/2) .. controls (6,7/2) and (7,5/2) .. (8,5/2);
		\bord{5}{1/2}{4}
		\bord{8}{3/2}{2}
		\node[anchor = west] at (8,2) {$ =$};
		\draw[thick] (10,0) .. controls (11,0) and (12,3/2) .. (13,3/2);
		\draw[thick] (10,3) .. controls (11,3) and (11,1) .. (10,1);
		\draw[thick] (10,4) .. controls (11,4) and (12,5/2) .. (13,5/2);
		\bord{10}{0}{5}
		\bord{13}{3/2}{2}
	\end{tikzpicture}\quad\textrm{and}\quad	
\begin{tikzpicture}[scale = 1/3, baseline={(current bounding box.center)}]
		\draw[thick] (0,0) .. controls (1,0) and (1,1) .. (0,1);
		\draw[thick] (0,2) .. controls (1,2) and (2,0) .. (3,0);
		\draw[thick] (0,3) .. controls (1,3) and (2,1) .. (3,1);
		\bord{0}{0}{4}
		\bord{3}{0}{4}
		\node[anchor = west] at (7/2,3/2) {$\circ$};
		\draw[thick] (5,0) .. controls (6,0) and (7,1) .. (8,1);
		\draw[thick] (5,1) .. controls (6,1) and (6,2) .. (5,2);
		\bord{5}{0}{4}
		\bord{8}{1}{2}
		\node[anchor = west] at (17/2,3/2) {$=$};
		\draw[thick] (10,0) .. controls (11,0) and (11,1) .. (10,1);
		\draw[thick] (10,2) .. controls (11,2) and (12,0) .. (13,0);
		\draw[thick] (10,3) .. controls (11,3) and (12,1) .. (13,1);
		\bord{10}{0}{4}
		\draw[thick] (13,0) .. controls (14,0) and (15,1) .. (16,1);
		\draw[thick] (13,1) .. controls (14,1) and (14,2) .. (13,2);
		\bord{16}{1}{2}
		\node[anchor = west] at (33/2,3/2) {$= 0 $};
	\end{tikzpicture}
\end{equation*}
For a stricly positive integer $n$, the algebra $\dtl n(\beta)$ is identified to the set $\Hom(n,n)$ with the product being the composition just defined.

Endowing this category with a monoidal structure is a straigthforward generalisation of the one on $\ctl$. Again define $\tens{n}{m} \equiv n+m$. For morphisms, if $a$ and $b$ are dilute $(n,m)$- and $(r,s)$-diagrams, define their tensor product $\tens{a}{b}$ as the $(n+r,m+s)$-diagram obtained by simply putting $a$ on top of $b$, as in $\ctl$, and extend this bilinearly to all morphisms. With the associator and the unitors as identities, this tensor product makes $\cdtl$ into a strict monoidal category. 

The commutor for $\cdtl$ is obtained similarly to that of $\ctl$. We only outline the computation of $\com 11$. The space $\End{2}$ is spanned by the following $9$ diagrams:
\begin{equation*}
	\begin{tikzpicture}[scale = 1/3, baseline={(current bounding box.center)}]
		\draw[thick] (0,0) -- (2,0);
		\draw[thick] (0,1) -- (2,1);
		\bord{0}{0}{2}
		\bord{2}{0}{2}
		\node (fake) at (0,-0.5) {$\ $}; 
		\end{tikzpicture} \qquad \quad
	\begin{tikzpicture}[scale = 1/3, baseline={(current bounding box.center)}]
		\draw[thick] (0,0) -- (2,0);
		\bord{0}{0}{2}
		\bord{2}{0}{2}
		\node (fake) at (0,-0.5) {$\ $}; 
		\end{tikzpicture} \qquad\quad
	\begin{tikzpicture}[scale = 1/3, baseline={(current bounding box.center)}]
		\draw[thick] (0,1) -- (2,1);
		\bord{0}{0}{2}
		\bord{2}{0}{2}
		\node (fake) at (0,-0.5) {$\ $};
	\end{tikzpicture} \qquad\quad
	\begin{tikzpicture}[scale = 1/3, baseline={(current bounding box.center)}]
		\bord{0}{0}{2}
		\bord{2}{0}{2}
		\node (fake) at (0,-0.5) {$\ $};
	\end{tikzpicture}
\end{equation*}
\begin{equation*}
	\begin{tikzpicture}[scale = 1/3, baseline={(current bounding box.center)}]
		\draw[thick] (0,1) .. controls (2/3,1)  and (2-2/3,0) ..  (2,0);
		\bord{0}{0}{2}
		\bord{2}{0}{2}
	\end{tikzpicture} \qquad\quad
	\begin{tikzpicture}[scale = 1/3, baseline={(current bounding box.center)}]
		\draw[thick] (0,0) .. controls (2/3,0)  and (2-2/3,1) ..  (2,1);
		\bord{0}{0}{2}
		\bord{2}{0}{2}
	\end{tikzpicture} \qquad\quad
	\begin{tikzpicture}[scale = 1/3, baseline={(current bounding box.center)}]
		\draw[thick] (0,1) .. controls (2/3,1)  and (2/3,0) ..  (0,0);
		\draw[thick] (2,0) .. controls (2-2/3,0)  and (2-2/3,1) ..  (2,1);
		\bord{0}{0}{2}
		\bord{2}{0}{2}
	\end{tikzpicture} \qquad\quad
	\begin{tikzpicture}[scale = 1/3, baseline={(current bounding box.center)}]
		\draw[thick] (0,1) .. controls (2/3,1)  and (2/3,0) ..  (0,0);
		\bord{0}{0}{2}
		\bord{2}{0}{2}
	\end{tikzpicture} \qquad\quad
	\begin{tikzpicture}[scale = 1/3, baseline={(current bounding box.center)}]
		\draw[thick] (2,0) .. controls (2-2/3,0)  and (2-2/3,1) ..  (2,1);
		\bord{0}{0}{2}
		\bord{2}{0}{2}
	\end{tikzpicture} 
\end{equation*}
and the elementary commutor $\com 11$ is a linear combination of these nine diagrams. A line 
\begin{tikzpicture}[scale = 1/3] %, baseline={(current bounding box.center)}]
\fill[black] (0,0) circle (1/5);
\fill[black] (2,0) circle (1/5);
\draw[dashed] (0,0) -- (2,0);
\end{tikzpicture} in any diagram will mean the sum of two diagrams, the first with a straight line between the two nodes, the second with nothing between these nodes that are then vacancies. The identity $1_1\in\Hom(1,1)$ is thus such a dashed line and the identity in $\End{2}$
$$1_2\ =\ \begin{tikzpicture}[scale = 1/3, baseline={(current bounding box.center)}]
		\bord{0}{0}{2}
		\bord{2}{0}{2}
		\draw[dashed] (0,0) -- (2,0);
		\draw[dashed] (0,1) -- (2,1);
\end{tikzpicture}
$$
is the sum of the first four of the nine diagrams above. Four of the coefficients of $\com 11$ are easily set to zero by the following requirements:
\begin{align*}
% first line
\com11\ \begin{tikzpicture}[scale = 1/3, baseline={(current bounding box.center)}]
		\bord{0}{0}{2}
		\bord{2}{0}{2}
		\draw[dashed] (0,0) -- (2,0);
		\draw[thick] (0,1) -- (2,1);
\end{tikzpicture}\ = \ 
\begin{tikzpicture}[scale = 1/3, baseline={(current bounding box.center)}]
		\bord{0}{0}{2}
		\bord{2}{0}{2}
		\draw[dashed] (0,1) -- (2,1);
		\draw[thick] (0,0) -- (2,0);
\end{tikzpicture}\ \com11\ &\ \qquad
\com11\ \begin{tikzpicture}[scale = 1/3, baseline={(current bounding box.center)}]
		\bord{0}{0}{2}
		\bord{2}{0}{2}
		\draw[dashed] (0,1) -- (2,1);
		\draw[thick] (0,0) -- (2,0);
\end{tikzpicture}\ = \ 
\begin{tikzpicture}[scale = 1/3, baseline={(current bounding box.center)}]
		\bord{0}{0}{2}
		\bord{2}{0}{2}
		\draw[dashed] (0,0) -- (2,0);
		\draw[thick] (0,1) -- (2,1);
\end{tikzpicture}\ \com11\  \\
% second line
\intertext{and}
\com11\ \begin{tikzpicture}[scale = 1/3, baseline={(current bounding box.center)}]
		\bord{0}{0}{2}
		\bord{2}{0}{2}
		\draw[dashed] (0,0) -- (2,0);
		\fill[black] (0,1) circle (1/5);
		\fill[black] (2,1) circle (1/5);
\end{tikzpicture}\ = \ 
\begin{tikzpicture}[scale = 1/3, baseline={(current bounding box.center)}]
		\bord{0}{0}{2}
		\bord{2}{0}{2}
		\draw[dashed] (0,1) -- (2,1);
		\fill[black] (0,0) circle (1/5);
		\fill[black] (2,0) circle (1/5);
\end{tikzpicture}\ \com11\ &\ \qquad
\com11\ \begin{tikzpicture}[scale = 1/3, baseline={(current bounding box.center)}]
		\bord{0}{0}{2}
		\bord{2}{0}{2}
		\draw[dashed] (0,1) -- (2,1);
		\fill[black] (0,0) circle (1/5);
		\fill[black] (2,0) circle (1/5);
\end{tikzpicture}\ = \ 
\begin{tikzpicture}[scale = 1/3, baseline={(current bounding box.center)}]
		\bord{0}{0}{2}
		\bord{2}{0}{2}
		\draw[dashed] (0,0) -- (2,0);
		\fill[black] (0,1) circle (1/5);
		\fill[black] (2,1) circle (1/5);
\end{tikzpicture}\ \com11\ .
\end{align*}
The commutor $\com 11$ is thus found to be a sum of the remaining five diagrams:
$$\com 11\ =\ 
a_1\ \begin{tikzpicture}[scale = 1/3, baseline={(current bounding box.center)}]
		\draw[thick] (0,0) -- (2,0);
		\draw[thick] (0,1) -- (2,1);
		\bord{0}{0}{2}
		\bord{2}{0}{2}
		\end{tikzpicture}\ 
+a_2\ \begin{tikzpicture}[scale = 1/3, baseline={(current bounding box.center)}]
		\draw[thick] (0,1) .. controls (2/3,1)  and (2-2/3,0) ..  (2,0);
		\bord{0}{0}{2}
		\bord{2}{0}{2}
	\end{tikzpicture}\ 
+a_3\ 	\begin{tikzpicture}[scale = 1/3, baseline={(current bounding box.center)}]
		\draw[thick] (0,0) .. controls (2/3,0)  and (2-2/3,1) ..  (2,1);
		\bord{0}{0}{2}
		\bord{2}{0}{2}
	\end{tikzpicture}\ 
+a_4\ \begin{tikzpicture}[scale = 1/3, baseline={(current bounding box.center)}]
		\bord{0}{0}{2}
		\bord{2}{0}{2}
	\end{tikzpicture}\ 
+a_5\ \begin{tikzpicture}[scale = 1/3, baseline={(current bounding box.center)}]
		\draw[thick] (0,1) .. controls (2/3,1)  and (2/3,0) ..  (0,0);
		\draw[thick] (2,0) .. controls (2-2/3,0)  and (2-2/3,1) ..  (2,1);
		\bord{0}{0}{2}
		\bord{2}{0}{2}
	\end{tikzpicture}\ .
$$
As in \cref{sub:braidingTL}, we define $\T{i}(n) \equiv \tens{1_{i-1}}{\tens{\com{1}{1}}{1_{n-i-1}}}
\in \Hom(n,n)$ and write $\com 12=\T 2\T 1$ and $\com21=\T 1\T2$. The conditions are then
$$\com 12(\tens ab)=(\tens ba)\com 12\qquad\textrm{and}\qquad \com 21 (\tens ba)=(\tens ab)\com 21$$
for all $a\in\dtl 1$ and $b\in\dtl2$. Choosing $a$ as $1_1$ and $b$ as one of the following ones
\begin{equation*}
	\begin{tikzpicture}[scale = 1/3, baseline={(current bounding box.center)}]
		\draw[thick] (0,1) .. controls (2/3,1)  and (2/3,0) ..  (0,0);
		\draw[thick] (2,0) .. controls (2-2/3,0)  and (2-2/3,1) ..  (2,1);
		\bord{0}{0}{2}
		\bord{2}{0}{2}
	\end{tikzpicture} \qquad\quad
	\begin{tikzpicture}[scale = 1/3, baseline={(current bounding box.center)}]
		\draw[thick] (0,1) .. controls (2/3,1)  and (2/3,0) ..  (0,0);
		\bord{0}{0}{2}
		\bord{2}{0}{2}
	\end{tikzpicture} \qquad\quad
	\begin{tikzpicture}[scale = 1/3, baseline={(current bounding box.center)}]
		\draw[thick] (2,0) .. controls (2-2/3,0)  and (2-2/3,1) ..  (2,1);
		\bord{0}{0}{2}
		\bord{2}{0}{2}
	\end{tikzpicture} 
\end{equation*}
gives the algebraic equations
$$a_1^2+a_1a_5\beta+a_5^2=0\qquad\textrm{and}\qquad a_2^2=a_3^2=a_4^2=a_1a_5.$$
Finally the conditions $\com 11(\tens ab)=(\tens ba)\com 10=\tens ba$ and $\com 11(\tens ba)=(\tens ab)\com 01=\tens ab$ with $a\in\dtl 1$ and $b\in\Hom(0,1)$ give $a_2=a_3=a_4=1$. The first equation above becomes $a_1^2+\beta+a_1^{-1}=0$ whose solutions are $a_1=\pm q^{\pm \frac12}$, the two $\pm$ being independent. We shall choose both upper signs. This determines completely $\com 11$ and it is possible to check that all other conditions on $\com 11$, $\com12=\T2\T1$ and $\com21=\T1\T2$ are satisfied.

\begin{Prop}\label{prop:cdtlnisbraided}
	The category $\cdtl$ is braided for a commutor with components $\com rs$ given by \eqref{eq:recursiveCom}, $\T{i}(n) \equiv \tens{1_{i-1}}{\tens{\com{1}{1}}{1_{n-i-1}}}$ and $\com 11$ and $\com 11^{-1}$ now given by
\begin{equation}\label{eq:comDilute}
	\begin{tikzpicture}[scale = 1/3, baseline={(current bounding box.center)}]
		\node[anchor = east] at (1,1/2) {$\com 11= q^{\frac12}$};
		\draw[thick] (1,0) -- (3,0);
		\draw[thick] (1,1) -- (3,1);
		\bord{1}{0}{2}
		\bord{3}{0}{2}
		\node[anchor = west] at (7/2,1/2) {$+\, q^{-\frac12} $};
		\draw[thick] (7,0) .. controls (7+2/3,0) and (7+2/3,1) .. (7,1);
		\draw[thick] (9,0) .. controls (9-2/3,0) and (9-2/3,1) .. (9,1); 
		\bord{7}{0}{2}
		\bord{9}{0}{2}
		\node[anchor = east] at (11,1/2) {$+$ };
		\draw[thick] (11,0) .. controls (11+2/3,0) and (13-2/3,1) .. (13,1);
		\bord{11}{0}{2}
		\bord{13}{0}{2}
		\node[anchor = east] at (15,1/2) {$+$ };
		\draw[thick] (15,1) .. controls (15+2/3,1) and (17-2/3,0) .. (17,0);
		\bord{15}{0}{2}
		\bord{17}{0}{2}
		\node[anchor = east] at (19,1/2) {$+$ };
		\bord{19}{0}{2}
		\bord{21}{0}{2}
	\end{tikzpicture}
\end{equation}
and
\begin{equation}
	\begin{tikzpicture}[scale = 1/3, baseline={(current bounding box.center)}]
		\node[anchor = east] at (1,1/2) {$\com 11^{-1}= q^{-\frac12}$};
		\draw[thick] (1,0) -- (3,0);
		\draw[thick] (1,1) -- (3,1);
		\bord{1}{0}{2}
		\bord{3}{0}{2}
		\node[anchor = west] at (7/2,1/2) {$+\ q^{\frac12} $};
		\draw[thick] (7,0) .. controls (7+2/3,0) and (7+2/3,1) .. (7,1);
		\draw[thick] (9,0) .. controls (9-2/3,0) and (9-2/3,1) .. (9,1); 
		\bord{7}{0}{2}
		\bord{9}{0}{2}
		\node[anchor = east] at (11,1/2) {$+$ };
		\draw[thick] (11,0) .. controls (11+2/3,0) and (13-2/3,1) .. (13,1);
		\bord{11}{0}{2}
		\bord{13}{0}{2}
		\node[anchor = east] at (15,1/2) {$+$ };
		\draw[thick] (15,1) .. controls (15+2/3,1) and (17-2/3,0) .. (17,0);
		\bord{15}{0}{2}
		\bord{17}{0}{2}
		\node[anchor = east] at (19,1/2) {$+$ };
		\bord{19}{0}{2}
		\bord{21}{0}{2}
	\end{tikzpicture}
\end{equation}
\end{Prop}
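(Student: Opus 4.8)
The plan is to follow the route used for $\ctl$ in \cref{prop:ctlnisbraided}, working now inside the nine-dimensional algebra $\dtl 2(\beta)=\End(2)$ and keeping track of one genuinely new type of morphism, the vacancy-creation and vacancy-annihilation diagrams. Since $\cdtl$ is strict monoidal and its identities still satisfy $\tens{1_i}{1_j}=1_{i+j}$, the purely formal arguments of \cref{prop:coherenthexa} and \cref{lem:hexagdiag} apply verbatim, so the recursive definition \eqref{eq:recursiveCom} of the $\com rs$ is consistent and the resulting family automatically satisfies the two hexagon identities \eqref{eq:ctlhexa1}--\eqref{eq:ctlhexa2}; the same specialisation as before forces $\com n0=\com 0n=1_n$. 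Invertibility is immediate: composing the two five-term diagrammatic expressions displayed in the statement for $\com 11$ and $\com 11^{-1}$ and simplifying, using the composition rules of $\cdtl$ (a closed loop produces a factor $\beta$, a string ending on a vacancy produces $0$), gives $\com 11\circ\com 11^{-1}=\com 11^{-1}\circ\com 11=1_2$; hence every $\T i(n)$ and every $\com rs$ is invertible.

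It remains to establish naturality, equation \eqref{eq:naturality}. Every dilute $(r,n)$-diagram is built by tensor product and composition from the five elementary morphisms $1_1$, the cup $z\in\Hom(0,2)$, the cap $z^t\in\Hom(2,0)$ and the vacancy morphisms $v\in\Hom(0,1)$ and $v^t\in\Hom(1,0)$; in particular $\dtl n=\End(n)$ is generated by the diagrams $e_i(n)=\tens{1_{i-1}}{\tens{(z\circ z^t)}{1_{n-i-1}}}$ together with the ``single vacancy'' diagrams $\tens{1_{i-1}}{\tens{(v\circ v^t)}{1_{n-i}}}$. For the morphisms coming from $z$ and $z^t$ everything proceeds exactly as in \cref{sub:braidingTL}: the dilute analogues of \cref{lem:braidendtl0} (the intertwining relations $\T i\T{i+1}e_i=e_{i+1}e_i=e_{i+1}\T i\T{i+1}$, their mirror forms, and the braid relation \eqref{eq:braid}) follow from a direct diagrammatic computation with the five-term $\com 11$, and with these in hand the proofs of \cref{lem:braidendtl} and \cref{lem:braidingbubbles} carry over word for word. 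The new ingredient is a lemma handling the vacancy morphisms: one shows $\com n1(\tens{1_n}{v})=\tens v{1_n}$ and the transposed identity for $v^t$, and then, by induction on the number of vacancies using the hexagon identities, the analogous statement for $\com nm$ composed with an arbitrary tensor insertion of $v$'s. The base case $n=1$ is the case $a=1_1$ of one of the conditions $\com 11(\tens av)=\tens va$ used above to fix $a_2=a_3=a_4=1$, and the inductive step is the verbatim analogue of the induction carried out in \cref{lem:braidingbubbles}.

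With these lemmas available, naturality is proved exactly as in \cref{prop:ctlnisbraided}. A dilute $(r,n)$-diagram $c$ with $k$ through lines factors as $c=a\circ(\cdots)\circ b$ with $a\in\dtl r$, $b\in\dtl n$, the middle factor being a tensor product of identities, cups and vacancy morphisms on one side and of caps and vacancy morphisms on the other --- the dilute refinement of \eqref{eq:factorisation}. Sliding $\com rs$ across this factor with the help of the three lemmas above yields $\com rs(\tens c{1_s})=(\tens{1_s}c)\com ns$; the symmetric computation gives $\com ns(\tens{1_n}d)=(\tens d{1_n})\com nm$ for any $(s,m)$-diagram $d$; and composing the two identities gives $\com rs(\tens cd)=(\tens dc)\com nm$. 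Since the $\Hom$-spaces are spanned by diagrams, this is \eqref{eq:naturality}, so the family $\{\com rs\}$ is a commutor and $\cdtl$ is braided. The main obstacle is computational rather than conceptual: one must redo the basic braid calculus of \cref{lem:braidendtl0} inside $\dtl 2$ and check that the extra terms of $\com 11$ --- those supported on diagrams containing vacancies --- spoil none of the intertwining relations, and one must isolate and prove the vacancy lemma, which has no counterpart in the treatment of $\ctl$.
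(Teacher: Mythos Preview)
Your proposal is correct and follows exactly the route the paper intends: the paper itself does not spell out a proof of this proposition, but simply derives $\com{1}{1}$ from the naturality constraints (including, as you note, the condition $\com{1}{1}(\tens{a}{v})=\tens{v}{a}$ that fixes $a_2=a_3=a_4=1$) and then remarks that ``it is possible to check that all other conditions on $\com{1}{1}$, $\com{1}{2}=\T2\T1$ and $\com{2}{1}=\T1\T2$ are satisfied,'' trusting the reader to rerun the argument of \cref{prop:ctlnisbraided}. Your outline does precisely that, and correctly isolates the one genuinely new step --- the vacancy lemma $\com{n}{1}(\tens{1_n}{v})=\tens{v}{1_n}$ and its transpose --- together with the dilute refinement of the factorisation \eqref{eq:factorisation} needed to accommodate vacancy creations and annihilations.
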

\noindent It follows that a disjoint module category $\moddtl$ can be defined along the lines introduced in \cref{sub:braidingModules} and that it is also braided.

%
% integrability
%
In the case of the original Temperley-Lieb algebras, the construction of $\XX i(q,u)$ satisfying the three conditions \eqref{eq:YB}--\eqref{eq:BYB} rested on the identities \eqref{eq:braid} and \eqref{eq:titim1ti}. These can be shown to be satisfied by the $\T i$ defined with $\com 11$ in \eqref{eq:comDilute}. The elementary braiding \eqref{eq:comDilute} thus leads again to the following non-trivial solution of the Yang-Baxter equation \eqref{eq:YB}:
\begin{equation*}% \label{eq:defXdilulte}
\XX i(q,u)=\frac{\sqrt q}{u}\T i-\frac{u}{\sqrt q}\T i^{-1},\qquad\textrm{for }i<n,
\end{equation*}
where $\com 11$ is now given by \eqref{eq:comDilute} and the $\XX i$ are understood as elements of $\Hom(n,n)$. Does this solution also satisfy the two other conditions \eqref{eq:inversion} and \eqref{eq:BYB}? For the latter, one has first to decide what is to replace the ``boundary terms" $(\tens zz)$. A direct calculation shows that \eqref{eq:BYB} is satisfied by the dilute $\XX i$ for only three boundary conditions, namely
\begin{equation*}
	\begin{tikzpicture}[scale = 1/3, baseline={(current bounding box.south)}]
		\draw[thick] (0,0) .. controls (1,0) and (1,1) .. (0,1);
		\draw[thick] (0,2) .. controls (1,2) and (1,3) .. (0,3);
		\bord{0}{0}{4}
		\node at (1.5,1.5) {$, $};
	\end{tikzpicture} \qquad
	\begin{tikzpicture}[scale = 1/3, baseline={(current bounding box.south)}]
		\bord{0}{0}{4}
		\node at (2.5,1.5) {\quad\qquad and \qquad\quad};
	\end{tikzpicture}%\qquad\textrm{and}\qquad
	\begin{tikzpicture}[scale = 1/3, baseline={(current bounding box.south)}]
		\draw[dashed] (0,0) .. controls (1,0) and (1,1) .. (0,1);
		\draw[dashed] (0,2) .. controls (1,2) and (1,3) .. (0,3);
		\bord{0}{0}{4}
		\node at (1.5,1.5) {.};
	\end{tikzpicture}
\end{equation*}
Finally the dilute $\XX i$ does not satisfy the inversion relation \eqref{eq:inversion}:
$$\XX i(q,u)\XX i(q,u^{-1})=((q+q^{-1})-(u^2+u^{-2}))1_2+(q^2-q-q^{-1}+q^{-2})
\begin{tikzpicture}[scale = 1/3, baseline={(current bounding box.center)}]
		\draw[thick] (0,0) -- (2,0);
		\draw[thick] (0,1) -- (2,1);
		\bord{0}{0}{2}
		\bord{2}{0}{2}
		\node (fake) at (0,-0.2) {$\ $}; 
		\end{tikzpicture}\ .
$$
While this non-trivial solution only partially satisfies \eqref{eq:YB}--\eqref{eq:BYB}, there is another one that solves the three conditions. It uses Boltzmann weights discovered by Izergin and Korepin \cite{Izergin}, and Nienhuis \cite{Nienhuis}. It is
\begin{equation}
	\hat{\XX i}(q,u) = u^{-2}\cdot y_{+} + u^{-1}\cdot w_{+} + z  + u\cdot w_{-}+ u^{2}\cdot y_{-}, 
\end{equation}
where
\begin{align*}
%%% les y %%%	
y_{\pm} &= \frac{-q^{\pm\frac34}}{(q^{\frac12}-q^{-\frac12})(q^{\frac34}-q^{-\frac34})} \Big(\begin{tikzpicture}[scale = 1/3, baseline={(current bounding box.center)}]
		\node[anchor = east] at (1,1/2) {$- q^{\pm\frac12}$};
		\draw[thick] (1,0) -- (3,0);
		\draw[thick] (1,1) -- (3,1);
		\bord{1}{0}{2}
		\bord{3}{0}{2}
		\node[anchor = west] at (7/2,1/2) {$- q^{\mp\frac12} $};
		\draw[thick] (7,0) .. controls (7+2/3,0) and (7+2/3,1) .. (7,1);
		\draw[thick] (9,0) .. controls (9-2/3,0) and (9-2/3,1) .. (9,1); 
		\bord{7}{0}{2}
		\bord{9}{0}{2}
		\node[anchor = east] at (11,1/2) {$+$ };
		\draw[thick] (11,0) .. controls (11+2/3,0) and (13-2/3,1) .. (13,1);
		\bord{11}{0}{2}
		\bord{13}{0}{2}
		\node[anchor = east] at (15,1/2) {$+$ };
		\draw[thick] (15,1) .. controls (15+2/3,1) and (17-2/3,0) .. (17,0);
		\bord{15}{0}{2}
		\bord{17}{0}{2}
		\node[anchor = east] at (19,1/2) {$+$ };
		\bord{19}{0}{2}
		\bord{21}{0}{2}
	\end{tikzpicture} \Big),\\
%%% les w %%%
w_{\pm} &= \frac{\pm1}{(q^{\frac34}-q^{-\frac34})}\Big( 
	\begin{tikzpicture}[scale = 1/3, baseline={(current bounding box.center)}]
		\node[anchor = east] at (0,1/2) {$ q^{\pm\frac34}\big($};
		\draw[thick] (0,0) -- (2,0);
		\bord{0}{0}{2}
		\bord{2}{0}{2}
		\node[anchor = east] at (7/2+1/4,1/2) {$+$ };
		\draw[thick] (4,1) -- (6,1);
		\bord{4}{0}{2}
		\bord{6}{0}{2}
		\node[anchor = west] at (6,1/2) {$\big)$};
		\node[anchor = east] at (9,1/2) {$- \ \big($};
		\draw[thick] (9,0) .. controls (9+3/4,0) and (9+3/4,1) .. (9,1);
		\bord{9}{0}{2}
		\bord{11}{0}{2}
		\node[anchor = east] at (13,1/2) {$+\ $};
		\draw[thick] (15,0) .. controls (15-3/4,0) and (15-3/4,1) .. (15,1);
		\bord{13}{0}{2}
		\bord{15}{0}{2}
		\node[anchor = west] at (15,1/2) {$\big)$};
		\end{tikzpicture}
	\Big),\\
%%% et le z %%%	
z &= \frac{1}{(q^{\frac14}-q^{-\frac14})(q^{\frac34}-q^{-\frac34})} \Big( 
		\begin{tikzpicture}[scale = 1/3, baseline={(current bounding box.center)}]
			\node[anchor = east] at (0,1/2) {$\big( q-1 +q^{-1}\big)$};
			\bord{0}{0}{2}
			\bord{2}{0}{2}
		\end{tikzpicture}\ \ 
		\begin{tikzpicture}[scale = 1/3, baseline={(current bounding box.center)}]
			\node[anchor = east] at (0,1/2) {$-\ \big($};
			\draw[thick] (0,0) -- (2,0);
			\draw[thick] (0,1) -- (2,1);
			\bord{0}{0}{2}
			\bord{2}{0}{2}
		\end{tikzpicture}
		\begin{tikzpicture}[scale = 1/3, baseline={(current bounding box.center)}]
			\node[anchor = east] at (0,1/2) {$\ +\  $};
			\draw[thick] (0,1) .. controls (3/4,1) and (3/4,0) .. (0,0);
			\draw[thick] (2,1) .. controls (2-3/4,1) and (2-3/4,0) .. (2,0);
			\bord{0}{0}{2}
			\bord{2}{0}{2}
			\node[anchor = west] at (2,1/2) {$\big)  $};
		\end{tikzpicture}
       \\
	& \ \qquad\qquad \qquad\qquad \qquad\qquad \qquad\qquad 
		\begin{tikzpicture}[scale = 1/3, baseline={(current bounding box.center)}]
			\node[anchor = east] at (0,1/2) {$+ \big( q^{\frac12}- 1 +q^{-\frac12}\big)\Big($};
			\draw[thick] (0,1) .. controls (3/4,1) and (2-3/4,0) .. (2,0);
			\bord{0}{0}{2}
			\bord{2}{0}{2}
		\end{tikzpicture}
		\begin{tikzpicture}[scale = 1/3, baseline={(current bounding box.center)}]
			\node[anchor = east] at (0,1/2) {$+$};
			\draw[thick] (0,0) .. controls (03/4,0) and (2-3/4,1) .. (2,1);
			\bord{0}{0}{2}
			\bord{2}{0}{2}
			\node[anchor = west] at (2,1/2) {$ \Big)\Big).$};
		\end{tikzpicture}
\end{align*}
Like $\XX i$, this $\hat{\XX i}$ solves the Yang-Baxter equation. But it also satisfies the inversion relation (with a new function $\hat\rho$) and the boundary Yang-Baxter equation with particular boundary conditions \cite{YungBatchelor95,DubailJacobsenSaleur}.
Notice that, up to a global factor, the $y_{\pm}$ are the commutors $\com 11^{\pm1}$ that would have been obtained if $a_1$ would have been chosen as $-q^{\frac12}$. The other three terms $w_{+},w_-$ and $z$  are however completely different. It is not clear whether the integrable model it defines is related to a braiding for a different bifunctor $-\otimes' -$. 

\end{section}
%%%%%%%%%%%%%%%%%%%%%%%%%%%%%%%%%%%%%%%%%%%%%%%%%%%%%%%%%%%%%%%%%%%%%%%%%%%%%%%%
%%%%%%%%%%%%%%%%%%%%%%%%%%%%%%%%%%%%%
%%%%%%%%%%% la conclusion %%%%%%%%%%%
%%%%%%%%%%%%%%%%%%%%%%%%%%%%%%%%%%%%%
\begin{section}{Conclusion}

The main results of this article lie in Sections \ref{sec:TL} and \ref{sec:modTL}. In Section \ref{sec:TL}, the category $\ctl$ was given the structure of a braided category. Even though several of the functors and natural morphisms were already known, casting them in $\ctl$ pursues Graham and Lehrer's goal of understanding the family of Temperley-Lieb algebras as a whole. This goal highlights properties of the family that are shared by all $\tl n$, independently of $n$. It is also very natural physically speaking as the continuum limit of the lattice models defined using the $\tl{}$ family is often their {\em raison d'\^etre}. Section \ref{sec:modTL} introduced the category $\modtl$ of modules over $\ctl$ and used the functors and natural morphisms of $\ctl$ to induce the structure of a ribbon category on $\modtl$ when $q$ is generic. The tools developed showed how non-trivial can the monodromy be, even for the finite associative $\tl{}$ algebras.

Section \ref{sec:modTL} also explained that rigidity cannot be implemented straightforwardly on $\modtl$ when $q$ is a root of unity. The subcategory $\text{\rm Mod}_\ctl^{\textrm{proj}}$ of projective modules might satisfy the axioms \eqref{eq:rigid1} and \eqref{eq:rigid2}, but it fails to be abelian. Another possibility would be to consider the subcategory of irreducible modules $\Irre{n,k}$ with $k$ on the left of the first critical line. Theorem 6.11 of \cite{Belletete} shows that $\xf$ is closed on this subset of modules. But a more central question is what are ``approriate" weaker forms of rigidity for $\modtl$ at $q$ a root of unity.

The question also arises of the existence of a commutor for other family of algebras and its eventual link to integrable models defined using them. \cref{sec:dilute} showed that the ``elementary braiding $\com 11$'' does not reproduce the transfer matrix defining dilute loop models. Is it possible to understand better the link between this $\com 11$ and the integrability of the dilute models? There are other algebras physically relevant in statistical physics, for example the one-boundary $\tl{}$ family (also known as the {\em blob algebra} \cite{blob}) and the affine (periodic) $\tl{}$ family. It is not known whether one can define a fusion product between their modules or even make a braided category out of their link diagrams.

\end{section}

%%%%%%%%%%%%%%%%%%%%%%%%%%%%%%%%%%%%%%%%%%%%%%%%%%%%%%%%%%%%%%%%%%%%%%%%%%%%%%%%
%%%%%%%%%%%%%%%%%%%%%%%%%%%%%%%%%%%%%
%%%%%%%%%%% appendice(s)  %%%%%%%%%%%
%%%%%%%%%%%%%%%%%%%%%%%%%%%%%%%%%%%%%

\appendix

\begin{section}{The central element $\z n$}\label{app:cn}

We gather in this appendix the properties of the central element $\z n\in\tl n$ together with their proofs. Some of these results are to be found in \cite{Martin}.

The two elements of $\tl n$
$$\rho_n=\T1\T2\dots \T{n-1}\qquad \textrm{and}\qquad 
\lambda_n=\T{n-1}\dots\T2\T1,$$
are invertible since each of the $\T i$ is. Define $e_0$ and $e_n$ as
$$e_n=\rho_n e_{n-1}\rho_n^{-1}\qquad\textrm{and}\qquad
e_0=\lambda_n e_1\lambda_n^{-1}.$$
Then the following properties hold.
\begin{Lem}\label{lem:cyclic}The action by conjugation of $\rho_n$ and $\lambda_n$ on elements of $\tl n$ amounts to right and left cyclic translations:
\begin{alignat*}{4}\rho_n e_i\rho_n^{-1}&=e_{i+1},&\quad 1\leq i\leq n-1, 
\qquad &\textrm{\ }\qquad
&\lambda_n e_i\lambda_n^{-1}&=e_{i-1},&\quad 1\leq i\leq n-1,\\
\rho_n e_n\rho_n^{-1}&=e_1,&\qquad&\textrm{\ }\qquad&\lambda_n e_0\lambda_n^{-1}&=e_{n-1},\\
e_{n-1}e_ne_{n-1}&=e_{n-1},&\qquad&\textrm{\ }\qquad&e_0e_1e_0&=e_0,\\
e_ne_{n-1}e_n&=e_n,&\qquad&\textrm{\ }\qquad& e_1e_0e_1&=e_1,\\
e_n^2&=\beta e_n,&\qquad&\textrm{\ }\qquad& e_0^2&=\beta e_0.
\end{alignat*}
\end{Lem}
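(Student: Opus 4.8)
The plan is to verify \cref{lem:cyclic} by reducing everything to the three braid-type identities of \cref{lem:braidendtl0}, since $\rho_n$ and $\lambda_n$ are words in the $\T i$ and the statements are all about conjugation by such words. I will treat the left column; the right column is obtained by the symmetric argument (replace $\rho_n$ by $\lambda_n$, reverse all words, and swap the roles of $1$ and $n$).

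First I would establish the cyclic-translation formulas $\rho_n e_i\rho_n^{-1}=e_{i+1}$ for $1\le i\le n-1$. Writing $\rho_n=\T1\T2\cdots\T{n-1}$, I note that $e_i$ commutes with $\T j$ whenever $|i-j|>1$ (this is \eqref{eq:reltl2} together with the fact that $\T j$ is a polynomial in $e_j$), so in $\rho_n e_i\rho_n^{-1}$ all factors $\T j$ with $j\le i-2$ and $j\ge i+2$ pass through $e_i$ and cancel against their inverses, leaving $\T i\T{i+1}e_i\T{i+1}^{-1}\T i^{-1}$. By the first identity of \cref{lem:braidendtl0}, $\T i\T{i+1}e_i=e_{i+1}\T i\T{i+1}$, so this equals $e_{i+1}\T i\T{i+1}\T{i+1}^{-1}\T i^{-1}=e_{i+1}$, as desired. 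The definition $e_n=\rho_n e_{n-1}\rho_n^{-1}$ then makes $\rho_n e_n\rho_n^{-1}=\rho_n^2 e_{n-1}\rho_n^{-2}$; iterating the translation $e_{n-1}\mapsto e_n\mapsto\cdots$ cyclically (or, more carefully, conjugating $e_{n-1}$ by $\rho_n^2$ and tracking indices) yields $e_1$. Actually the cleanest route to $\rho_n e_n\rho_n^{-1}=e_1$ is to use that $\rho_n$ is, up to a scalar, Martin's central-element building block: $\rho_n^n=q^{-3n/2}\z n$ is central by \cref{thm:cn}, so conjugation by $\rho_n$ has order (dividing) $n$ on the subalgebra generated by $e_1,\dots,e_{n-1},e_n$, and since it sends $e_i\mapsto e_{i+1}$ for $i<n$ it must send $e_n\mapsto e_1$.

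Next I would derive the five relations among $e_{n-1}$, $e_n$ from the corresponding relations among $e_{n-2}$, $e_{n-1}$ by conjugating with $\rho_n$. Since $\rho_n(\,\cdot\,)\rho_n^{-1}$ is an algebra automorphism sending $e_{n-2}\mapsto e_{n-1}$ and $e_{n-1}\mapsto e_n$, applying it to the Temperley--Lieb relations $e_{n-1}e_{n-2}e_{n-1}=e_{n-1}$, $e_{n-2}e_{n-1}e_{n-2}=e_{n-2}$, $e_{n-1}^2=\beta e_{n-1}$ (valid in $\tl n$ by \eqref{eq:reltl1}) immediately gives $e_ne_{n-1}e_n=e_n$, $e_{n-1}e_ne_{n-1}=e_{n-1}$, and $e_n^2=\beta e_n$. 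This requires $n\ge 3$ so that $e_{n-2}$ exists as an honest generator; for small $n$ ($n=1,2$) the statements are either vacuous or reduce to the defining relations and can be checked directly.

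The main obstacle I anticipate is purely bookkeeping rather than conceptual: making the "pass-through and cancel" step rigorous requires being careful that $\T j$ and $e_i$ genuinely commute for $|i-j|\ge 2$ (which follows since $\T j=q^{1/2}(1+q^{-1}e_j)$ and $e_ie_j=e_je_i$ there) and tracking exactly which $\T j$'s survive when conjugating $e_i$ by $\rho_n$ or $\rho_n^2$. The one place needing genuine input is $\rho_n e_n\rho_n^{-1}=e_1$ and its mirror $\lambda_n e_0\lambda_n^{-1}=e_{n-1}$: here I rely on the centrality of $\rho_n^n$ (equivalently of $\z n$), proved in \cref{thm:cn}, which forces the translation automorphism to close up cyclically. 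Everything else is a mechanical transport of the known $\tl n$-relations along the automorphism $x\mapsto\rho_n x\rho_n^{-1}$ (resp.\ $x\mapsto\lambda_n x\lambda_n^{-1}$).
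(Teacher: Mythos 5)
Your treatment of $\rho_n e_i\rho_n^{-1}=e_{i+1}$ for $1\le i\le n-1$ (pass the distant $\T j$'s through $e_i$, then apply $\T i\T{i+1}e_i=e_{i+1}\T i\T{i+1}$) is correct and is exactly what the paper means by ``the first relations follow from \cref{lem:braidendtl0}''; likewise the derivation of the quadratic and cubic relations for $e_n$ by transporting the $\tl n$ relations along the automorphism $x\mapsto\rho_n x\rho_n^{-1}$ matches the paper's argument.

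The genuine gap is your ``cleanest route'' to $\rho_n e_n\rho_n^{-1}=e_1$: it is circular within the paper's logical structure. You invoke the centrality of $\rho_n^n$ (equivalently of $\z n$) from \cref{thm:cn}, but part (a) of \cref{thm:cn} is proved \emph{from} the cyclic property of the present lemma — conjugation by $\rho_n^n$ fixes each $e_i$ precisely because conjugation by $\rho_n$ permutes $e_1,\dots,e_n$ cyclically, which requires knowing $\rho_n e_n\rho_n^{-1}=e_1$ in the first place. Your fallback phrase ``iterating the translation cyclically'' begs the same question, since the translation formulas you have established only cover $e_i$ with $i\le n-1$, and after one conjugation $e_{n-1}$ has already left that range. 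What is actually needed (and what the paper does) is a direct computation of $\rho_n^2 e_{n-1}\rho_n^{-2}$ in which the two copies of $\rho_n$ are processed together: repeatedly apply $\T{i+1}\T i e_{i+1}=e_i\T{i+1}\T i$ from \eqref{eq:braid2} to peel off the trailing factors of the two words $\T1\cdots\T{n-1}$, lowering the index of the sandwiched generator by one at each stage, until one reaches $\T1 e_1\T1^{-1}=e_1$. That computation is short but it is the one piece of genuine input in this half of the lemma, and your proposal does not supply it.
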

\begin{proof}The first relations follow from \cref{lem:braidendtl0}. The repeated use of the same identity \eqref{eq:braid2} proves the second:
\begin{align*}
\rho_ne_n\rho_n^{-1}&=\rho_n^2e_{n-1}\rho_n^{-2}=(\T1\T2\dots\T{n-1})(\T1\T2\dots\T{n-1})e_{n-1}(\T{n-1}^{-1}\dots\T2^{-1}\T1^{-1})(\T{n-1}^{-1}\dots\T2^{-1}\T1^{-1})\\
&=(\T1\T2\dots\T{n-2})(\T1\T2\dots\T{n-3})(\T{n-1}\T{n-2})e_{n-1}(\T{n-2}^{-1}\T{n-1}^{-1})(\T{n-3}^{-1}\dots\T2^{-1}\T1^{-1})(\T{n-2}^{-1}\dots\T2^{-1}\T1^{-1})\\
&=(\T1\T2\dots\T{n-2})(\T1\T2\dots\T{n-3})e_{n-2}(\T{n-1}\T{n-2}\T{n-2}^{-1}\T{n-1}^{-1})(\T{n-3}^{-1}\dots\T2^{-1}\T1^{-1})(\T{n-2}^{-1}\dots\T2^{-1}\T1^{-1})\\
&=(\T1\T2\dots\T{n-2})(\T1\T2\dots\T{n-3})e_{n-2}(\T{n-3}^{-1}\dots\T2^{-1}\T1^{-1})(\T{n-2}^{-1}\dots\T2^{-1}\T1^{-1})\\
&=\dots=(\T1\T2)(\T1)e_2(\T1^{-1})(\T2^{-1}\T1^{-1})=\T1e_1\T1^{-1}=e_1.
\end{align*}
The cubic relations are straightforward. For example
$$e_ne_{n-1}e_n=\rho_ne_{n-1}(\rho_n^{-1}e_{n-1}\rho_n)e_{n-1}\rho_n^{-1}=
\rho_ne_{n-1}e_{n-2}e_{n-1}\rho_n^{-1}=\rho_ne_{n-1}\rho_n^{-1}=e_n.$$
Finally the square of $e_n$ is obtained by $e_n^2=(\rho_ne_{n-1}\rho_n^{-1})^2=\rho_n(\beta e_{n-1})\rho_n^{-1}=\beta e_n$.
\end{proof}

The definition and properties of the elements $\z n\in\tl n$ that are used in the study of the twist $\theta$ are contained in the next Proposition. The statement refers to the standard modules $\Stan{n,k}$, $0\leq k\leq n$ with $k\equiv n\modY 2$, over $\tl n$. These were defined in \cref{sub:braidingModules}. Again a basis for $\Stan{n,k}$ can be chosen to be the $(n,k)$-diagrams in $\Hom(k,n)$ with $k$ through lines. 
%The action of $a\in\tl n$ on such a diagram is obtained using the composition $\circ$ of $\ctl$ and setting to zero any diagrams thus obtained that are not in $\Hom(k,n)$. 
(More information on these modules can be found in \cite{Martin, WesRep95, RSA, BRSA}.)
\begin{Prop}\label{thm:cn}The elements $\z n\overset{\text{\rm def}}{=}q^{3n/2}\rho_n^n$ and $\y n\overset{\text{\rm def}}{=}q^{3n/2}\lambda_n^n\in\tl n$ satisfy the following properties:
\begin{itemize}
\item[(a)] $\z n$ and $\y n$ are invertible central elements of $\tl n$;
\item[(b)] $\z n$ and $\y n$ act on the standard modules $\Stan{n,k}$ as $\gamma_{n,k}=q^{\frac12k(k+2)}$ times the identiy;
\item[(c)] if $q$ is not a root of unity, then the powers $\{1_n,\z n, \z n^2, \dots, \z n^{\lfloor n/2\rfloor}\}$ of $\z n$ form a basis of the center of $\tl n$;
\item[(d)] $\z n=\y n$.
\end{itemize}
\end{Prop}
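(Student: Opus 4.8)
The plan is to establish the four parts roughly in the order (d), (a), (b), (c), since (d) trivially implies part of (a) and (b) can be leveraged to prove (c).

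First I would prove (d), the equality $\z n=\y n$, i.e.\ $\rho_n^n=\lambda_n^n$ where $\rho_n=\T1\T2\dots\T{n-1}$ and $\lambda_n=\T{n-1}\dots\T2\T1$. The natural approach is induction on $n$, using the identity \eqref{eq:tiCoro} from the lemma preceding \cref{lem:firstTheta}, namely $\T i\T{i+1}\dots\T n\dots\T{i+1}\T i=\T n\dots\T{i+1}\T i\T{i+1}\dots\T n$, together with the far-commutativity $\T i\T j=\T j\T i$ for $|i-j|>1$. Concretely I expect to peel off factors from $\rho_n^n$ and reorganize them: writing $\rho_n=(\T1\T2\dots\T{n-2})\T{n-1}$ and pushing the rightmost strands through using \eqref{eq:braid} repeatedly, one should be able to transform $\rho_n^n$ into $\lambda_n^n$ step by step, the computation being essentially the same bookkeeping as in the proof of \cref{lem:firstTheta} where both forms $q^{3n/2}(\T{n-1}\dots\T1)^n$ and $q^{3n/2}(\T1\dots\T{n-1})^n$ already appear. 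Alternatively, one observes that $\z n$ and $\y n$ are related by the transpose anti-automorphism of $\tl n$ (reflection of diagrams through a vertical line), which fixes each $e_i$ and hence each $\T i$, so $(\rho_n)^t=\lambda_n$ and $(\z n)^t=\y n$; then (d) follows once (a) is known, because a central element of $\tl n$ that is also fixed up to transpose must actually be transpose-invariant on each block — but this circular-looking route is cleaner if one first proves centrality directly.

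For (a), centrality: by \cref{lem:cyclic}, conjugation by $\rho_n$ cyclically shifts the generators, $\rho_n e_i\rho_n^{-1}=e_{i+1}$ for $1\le i\le n-1$ and $\rho_n e_n\rho_n^{-1}=e_1$. Iterating, $\rho_n^n$ conjugates $e_i$ to $e_i$ for every $i$ in the extended index set $\{1,\dots,n\}$; since $\{e_1,\dots,e_{n-1}\}$ generates $\tl n$, $\rho_n^n$ commutes with all of $\tl n$, hence $\z n=q^{3n/2}\rho_n^n$ is central, and likewise $\y n$. Invertibility is immediate since each $\T i(n)=q^{1/2}(1_n+q^{-1}e_i(n))$ is invertible, with inverse given in \eqref{eq:theTi}.

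For (b), I would compute the scalar by which $\z n$ acts on $\Stan{n,k}$. Since $\Stan{n,k}$ is indecomposable and $\z n$ is central, it acts as a single scalar $\gamma_{n,k}$ when that scalar is well-defined (and as a scalar on the irreducible quotient in general); it suffices to evaluate $\z n$ on one convenient vector. The cleanest choice is the diagram in $\Stan{n,k}$ whose first $k$ nodes are through-lines and whose remaining $n-k$ nodes are paired into $\tfrac{n-k}2$ simple arcs at the bottom; equivalently, use the identification $\Stan{k,k}\simeq\Hom(k,0)$-type building blocks and the recursion. The key computation is the action of $\T1\dots\T{n-1}$ on such a diagram: whenever $\T i$ meets a cup its action produces a Reidemeister-I factor $-q^{-3/2}$ (as displayed in the excerpt, $\T1 e_1=-q^{-3/2}e_1$), while on the through-lines the braiding contributes controlled powers of $q^{1/2}$. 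Tracking these factors over the $n$ powers of $\rho_n$, and using that there are $k$ through-lines and $\tfrac{n-k}{2}$ arcs, should collapse to $\gamma_{n,k}=q^{\frac12 k(k+2)}$; the bookkeeping is the main computational obstacle but it is routine. As a consistency check, $\gamma_{n,k}$ must be independent of $n$ (only $k$ and the parity matter), which matches the role of $\z n$ in the twist axiom.

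Finally, for (c), assume $q$ is not a root of unity. Then $\tl n$ is semisimple with simple modules $\Stan{n,k}$ for $k\equiv n\bmod 2$, $0\le k\le n$, so the center is a commutative algebra of dimension $\lfloor n/2\rfloor+1$ (the number of such $k$), isomorphic to $\mathbb C^{\lfloor n/2\rfloor+1}$ via the tuple of scalars by which a central element acts on the $\Stan{n,k}$. By (b), $\z n$ acts on $\Stan{n,k}$ as $\gamma_{n,k}=q^{\frac12 k(k+2)}$, and these values for the admissible $k$ of a fixed parity are pairwise distinct because $q$ is not a root of unity (the exponents $\tfrac12 k(k+2)$ are distinct rationals for distinct $k$). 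Hence the image of $\z n$ in $\mathbb C^{\lfloor n/2\rfloor+1}$ is a Vandermonde-type vector with distinct entries, so $1_n,\z n,\z n^2,\dots,\z n^{\lfloor n/2\rfloor}$ have linearly independent images (Vandermonde determinant nonzero) and therefore form a basis of the center. The main obstacle overall is the Reidemeister-I bookkeeping in part (b); everything else is structural.
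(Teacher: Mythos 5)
Your overall architecture is sound and parts (a) and (c) coincide with the paper's proof (cyclic conjugation from \cref{lem:cyclic} for centrality; semisimplicity plus distinctness of the $\gamma_{n,k}$ for the basis of the center — the paper phrases the last step via the degree of the minimal polynomial rather than a Vandermonde determinant, but it is the same argument). Where you genuinely diverge is in (b) and (d). For (b) the paper does not evaluate $\z n$ on a chosen diagram at all: it computes $\det(\T 1)$ on $\Stan{n,k}$ by block-triangularizing $e_1$, deduces $(q^{-3n/2}\gamma_{n,k})^{\dim\Stan{n,k}}=(\det\T1)^{n(n-1)}$, extracts $\gamma_{n,k}=\omega\, q^{\frac12k(k+2)}$ with $\omega$ a root of unity, and pins down $\omega=1$ by continuity in $q$ and specialization at $q=1$ (where $\T i$ becomes the transposition $(i,i+1)$ and $\rho_n^n$ the identity permutation). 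Your diagrammatic route is viable but the sentence ``whenever $\T i$ meets a cup its action produces a factor $-q^{-3/2}$'' is where the real work hides: that identity only applies when a crossing caps directly onto an arc, and in the expansion of $\rho_n^n$ on a generic basis diagram most crossings do not. To make it rigorous you would need to slide the arcs through the full twist using \cref{lem:braidendtl0} and \cref{lem:braidingbubbles} (or the recursion \eqref{eq:firstTheta}), reducing to $\gamma_{n,0}=1$ and to $\gamma_{k,k}$, the latter being immediate since every $e_i$ kills $\Stan{k,k}$ so each $\T i$ acts as $q^{1/2}$ and $\z k$ as $q^{3k/2}q^{k(k-1)/2}=q^{\frac12k(k+2)}$. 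This is completable but it is not mere bookkeeping, and the paper's determinant trick is precisely designed to avoid it. For (d) the paper again argues indirectly: $\y n$ has the same eigenvalues as $\z n$ on all standards, so by (c) they agree for generic $q$ and then for all $q$ by continuity (both being Laurent polynomials in $q$); your transpose-anti-automorphism observation $(\rho_n)^t=\lambda_n$ is a nice shortcut, but note that ``central implies transpose-invariant'' itself requires knowing the transpose fixes each block, i.e.\ essentially the same generic-plus-continuity step, while your direct braid-relation induction ($\Delta^2=(\sigma_1\cdots\sigma_{n-1})^n=(\sigma_{n-1}\cdots\sigma_1)^n$ in the braid group) is correct but needs to be carried out, not just asserted.
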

\noindent The factor $q^{3n/2}$ is the definition of $\z n$ and $\y n$ will be useful in the identification of the component $\tw n$ of the natural isomorphism $\tw{}$ with $\z n$.
\begin{proof} Statement (a) follows from the invertibility of the $\T i$'s and the cyclic property proved in the previous Lemma. For (b), note that $\z n$ and $\y n$ being central, they define endomorphisms of the standard modules $\Stan{n,k}$ by left multiplication. Since $\Hom(\Stan{n,k},\Stan{n,k})\simeq \mathbb C$, these morphisms must be multiples of the identity. Let $\gamma_{n,k}$ be the multiple for the morphism defined by $\z n$. Then, on $\Stan{n,k}$, $(q^{-3n/2}\gamma_{n,k})^{\dim \Stan{n,k}}=\det(q^{-3n/2}\z n)=(\det \rho_n)^n$. But all the $\T i$ are conjugate and $\det\rho_n=\det(\T1\T2\dots \T{n-1})=(\det \T1)^{n-1}$. Thus
$$(q^{-3n/2}\gamma_{n,k})^{\dim \Stan{n,k}}=(\det \T1)^{n(n-1)}.$$

To compute the determinant of $\T1$, choose a basis where $(n,k)$-diagrams with an arc between position $1$ and $2$ appear first. There are $\dim \Stan{n-2,k}$ such diagrams and $e_1$ acts on them as $\beta$ times the identity. Moreover, on any other $(n,k)$-diagrams, $e_1$ acts either as zero or gives a diagram with an arc between $1$ and $2$. Therefore $e_1$ takes the form
$$e_1=\begin{pmatrix} \beta I & ? \\ 0 & 0\end{pmatrix}$$
in this basis. (Here $I$ is the identity matrix of size $\dim \Stan{n-2,k}\times \dim \Stan{n-2,k}$.) The matrix representing $\T1$ is thus
$$q^{\frac12}\begin{pmatrix} (1+q^{-1}\beta)I & ? \\ 0 & I'\end{pmatrix}$$
and (
$$\det \T1=q^{\frac12\dim \Stan{n,k}}(-q^{-2})^{\dim \Stan{n-2,k}}.$$
The dimension of the standard module $\Stan{n,k}$ is 
$\left(\begin{smallmatrix}n\\(n-k)/2\end{smallmatrix}\right)-
\left(\begin{smallmatrix}n\\(n-k)/2-1\end{smallmatrix}\right)$ and a direct computation shows that
$n(n-1)\dim \Stan{n-2,k}=\frac14(n-k)(n+k+2)\dim \Stan{n,k}$ which gives
$$\gamma_{n,k}=\omega\times q^{\frac12k(k+2)}$$
where $\omega$ is a root of unity such that $\omega^{\dim \Stan{n,k}}=1$. The central element $\z n$ is a Laurent polynomial in $q$. (There are $n(n-1)$ factors $\T i=q^{\frac12}(1_n+q^{-1}e_i)$ in $\z n=(\T1\T2\dots\T{n-1})^n$ and their factors $q^{\frac12}$ are thus in even numbers.) The eigenvalue $\gamma_{n,k}$ will thus be continuous, except maybe at $q=0$ or $\infty$. The root $\omega$ must thus be constant. At $q=1$, the Temperley-Lieb algebra $\tl n(\beta=2)$ is known to be a quotient of the group algebra of the symmetric group $\frak S_n$ and the elements $\T i$ are then the transposition $(i,i+1)$. Thus $\rho_n$ is the permutation $(1,2,\dots, n)$ and the central element $\z n$ is the identity permutation. Hence $\gamma_{n,k}=1$ at $q=1$ and the only possible choice for $\omega$ is $1$.

If $q$ is not a root a unity, then $\tl n(\beta=-q-q^{-1})$ is semisimple and is known to have $\lfloor n/2\rfloor+1$ inequivalent irreducible modules. By Wedderburn's theorem the algebra then decomposes into two-sided ideals, one for each inequivalent irreducible, and each ideal is isomorphic to the algebra of $d\times d$-matrices, if the corresponding irreducible is of dimension $d$. The dimension of the center of $\tl n$ at such a $q$ is thus $\lfloor n/2\rfloor+1$. Moreover the eigenvalues $\gamma_{n,k}, 0\leq k\leq n$ with $k\equiv n\modY 2$, computed in (b) are distinct when $q$ is not a root of unity. Therefore the minimal polynomial of $\z n$ in the (faithful) regular representation has degree $\lfloor n/2\rfloor+1$ and the powers $\{1_n,\z n, \z n^2, \dots, \z n^{\lfloor n/2\rfloor}\}$ are linearly independent. They must form a basis of the center of $\tl n$.

The computation of the determinant of $\y n$ follows the same line than that of $\z n$. Since $\z n$ and $\y n$ contain the same number of generators $\T i$, their eigenvalues on the standard modules thus coincide. When $q$ is not a root of unity, these eigenvalues completely determine the linear decomposition in the basis obtained in (c) and $\y n$ and $\z n$ must be equal. By continuity, they must also be equal at roots of unity.
\end{proof}

When $q$ is a root of unity, the algebra $\tl n(\beta=-q-q^{-1})$ is in general not semisimple. A list of its indecomposable modules is known (\cite{BRSA}, see also \cite{WesRep95}) and the only indecomposable modules $\mathsf M$ whose endomorphism groups $\Hom(\mathsf M,\mathsf M)$ are larger than $\mathbb C$ are the projective modules $\Proj{n,k}$ that have three or four composition factors. Statement (b), above, can be extended to all others.
\begin{Coro}\label{cor:trivialHom}Let $\mathsf M$ be a module over $\tl n$ such that $\Hom(\mathsf M,\mathsf M)\simeq \mathbb C$ and let $\Irre{n,k}$ be one of its composition factors.\footnote{It is easy to check that, if $\Irre{n,k}$ and $\Irre{n,k'}$ are two composition factors of such a module $\mathsf M$, then $\gamma_{n,k}=\gamma_{n,k'}$.} Then $\z n$ acts on $\mathsf M$ as $\gamma_{n,k}\cdot \id$.
\end{Coro}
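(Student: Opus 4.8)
The plan is to reduce the statement to part (b) of \cref{thm:cn} together with the hypothesis $\Hom(\mathsf M,\mathsf M)\simeq \mathbb C$, exploiting the fact that $\z n$ is central. First I would note that, since $\z n$ is a central element of $\tl n$ by \cref{thm:cn}(a), left multiplication by $\z n$ defines an endomorphism of $\mathsf M$, i.e.\ an element of $\Hom(\mathsf M,\mathsf M)$. By hypothesis this endomorphism ring is $\mathbb C\cdot \id$, so there is a scalar $\mu\in\mathbb C$ with $\z n\cdot x=\mu x$ for all $x\in\mathsf M$; in particular $\z n$ acts as $\mu\cdot\id$. It remains to identify $\mu$ with $\gamma_{n,k}$ for any composition factor $\Irre{n,k}$ of $\mathsf M$.

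To do this I would pick a composition series $0=\mathsf M_0\subset \mathsf M_1\subset\dots\subset \mathsf M_r=\mathsf M$ of $\mathsf M$ as a $\tl n$-module, with successive quotients $\mathsf M_i/\mathsf M_{i-1}$ irreducible. Because $\z n$ acts as $\mu\cdot\id$ on all of $\mathsf M$, it acts as $\mu\cdot\id$ on every subquotient, hence on each irreducible composition factor $\mathsf M_i/\mathsf M_{i-1}$. On the other hand, the irreducible quotient of $\Stan{n,k}$ is $\Irre{n,k}$, and \cref{thm:cn}(b) says $\z n$ acts on $\Stan{n,k}$ as $\gamma_{n,k}=q^{\frac12 k(k+2)}$ times the identity; since this action descends to the irreducible quotient, $\z n$ acts on $\Irre{n,k}$ as $\gamma_{n,k}\cdot\id$ as well. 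Therefore, if $\Irre{n,k}$ appears as a composition factor of $\mathsf M$, we must have $\mu=\gamma_{n,k}$, which is exactly the claim. (The footnote's assertion that $\gamma_{n,k}=\gamma_{n,k'}$ for any two composition factors of such an $\mathsf M$ then follows automatically, since both equal $\mu$.)

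The only genuine subtlety — and the step I expect to require the most care — is making sure that the action of $\z n$ really does descend compatibly to composition factors and that the scalar is the \emph{same} one computed from $\Stan{n,k}$; this is where one uses that $\Irre{n,k}$ is a quotient of $\Stan{n,k}$ (not merely a submodule of something), so the scalar by which the central element $\z n$ acts is inherited. Everything else is formal: centrality gives an endomorphism, the hypothesis forces it to be scalar, and \cref{thm:cn}(b) pins down the value. No nontrivial computation beyond what is already in \cref{thm:cn} is needed.
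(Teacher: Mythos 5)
Your proof is correct and is exactly the argument the paper leaves implicit: centrality of $\z n$ plus $\Hom(\mathsf M,\mathsf M)\simeq\mathbb C$ forces scalar action, and the scalar is identified via the composition factor $\Irre{n,k}$, which inherits the eigenvalue $\gamma_{n,k}$ from $\Stan{n,k}$ by Proposition \ref{thm:cn}(b). Nothing is missing.
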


If $\mathsf M$ is an indecomposable projective such that $\Hom(\mathsf M,\mathsf M)\simeq \mathbb C^2$, then $\z n$ will still have a single eigenvalue on $\mathsf M$, but it might not be a multiple of the identity. Such possibility will occur in the examples of the monodromy $\com{\mathsf N}{\mathsf M}\circ\com{\mathsf M}{\mathsf N}$ given in \cref{sec:etars}.

\end{section}

%%%%%%%%%%% les remerciements %%%%%%%%%%%

\section*{Acknowledgements}
We thank Thomas Creutzig, David Ridout and Ingo Runkel for helpful discussions. The authors also thank the referee for his/her constructive comments; they improved the presentation. This work was done while JB was holding scholarships from Fonds de recherche Nature et Technologies (Qu\'ebec) and from the Facult\'e des \'etudes sup\'erieures et postdoctorales de l'Universit\'e de Montr\'eal, and is supported by the European Research Council (advanced grant NuQFT). YSA holds a grant from the Canadian Natural Sciences and Engineering Research Council. This support is gratefully acknowledged.

%%%%%%%%%%% et maintenant pour la bibliographie %%%%%%%%%%%
\raggedright
%\singlespacing
\bibliography{indec}
\bibliographystyle{unsrt}
%%%%%%%%%% la fin %%%%%%%%%%%%
\end{document}